\documentclass[english,cleveref,thm-restate,numberwithinsect]{lipics-v2021}

\usepackage{graphicx} % Required for inserting images
\usepackage{amsmath}
\usepackage{amssymb}
\usepackage{hyperref}
\usepackage{mathtools}
\usepackage{dsfont}
\usepackage{todonotes}
\usepackage{multirow}
\usepackage{gensymb}
\usepackage{booktabs}
\usepackage[dvipsnames]{xcolor}

\newcommand{\N}{\mathbb{N}}
\newcommand{\Z}{\mathbb{Z}}
\newcommand{\R}{\mathbb{R}}

\newcommand{\tOh}{\tilde{O}}

\DeclareMathOperator*{\argmin}{argmin}
\DeclareMathOperator*{\ind}{ind}
\DeclareMathOperator*{\cel}{cell}
\DeclareMathOperator*{\YES}{YES}

\DeclareMathOperator*{\RT}{RT}

\newcommand{\hddist}{\delta_{\overrightarrow{H}}}
\newcommand{\hddistun}{\delta_{H}}
\newcommand{\FOPZ}{\ensuremath{{\mathsf{FOP}_\mathbb{Z}}}}

\title{\texorpdfstring
  {Computing $L_\infty$ Hausdorff Distances Under Translations: 
   The Interplay of Dimensionality, Symmetry and Discreteness}
  {Computing L-infinity Hausdorff Distances Under Translations: 
   The Interplay of Dimensionality, Symmetry and Discreteness}}
\titlerunning{Computing Hausdorff Distances Under Translations} %TODO optional, please use if title is longer than one line

\author{Sebastian Angrick}{Karlsruhe Institute of Technology, Germany}{sebastian.angrick@kit.edu}{https://orcid.org/0009-0007-9840-9611}{}
\author{Kevin Buchin}{TU Dortmund, Germany}{kevin.buchin@tu-dortmund.de}{https://orcid.org/0000-0002-3022-7877}{}
\author{Geri Gokaj}{Karlsruhe Institute of Technology, Germany}{geri.gokaj@kit.edu}{https://orcid.org/0009-0002-7500-6848}{Partially supported by the Deutsche Forschungsgemeinschaft (DFG, German Research Foundation) - 462679611.}
\author{Marvin Künnemann}{Karlsruhe Institute of Technology, Germany}{marvin.kuennemann@kit.edu}{}{Partially supported by the Deutsche Forschungsgemeinschaft (DFG, German Research Foundation) - 462679611.}

\authorrunning{S. Angrick, K. Buchin, G. Gokaj, M. Künnemann} %TODO mandatory. First: Use abbreviated first/middle names. Second (only in severe cases): Use first author plus 'et al.'

\Copyright{Sebastian Angrick, Kevin Buchin, Geri Gokaj, Marvin Künnemann} %TODO mandatory, please use full first names. LIPIcs license is "CC-BY";  http://creativecommons.org/licenses/by/3.0/

\ccsdesc[500]{Theory of computation~Problems, reductions and completeness}
\ccsdesc[500]{Theory of computation~Computational geometry}
\keywords{Hausdorff Distance, Fine-Grained Complexity, Computational Geometry, Translation-Invariant Similarity Measures} %TODO mandatory; please add comma-separated list of keywords

\funding{}%optional, to capture a funding statement, which applies to all authors. Please enter author specific funding statements as fifth argument of the \author macro.-

\acknowledgements{The authors would like to thank Jean Cardinal, Anita Dürr, Nick Fischer, Virginia Vassilevska Williams, and Karol Węgrzycki for stimulating discussions and contributions during the early stages of this work, as well as the Lorentz center and the organizers of its workshop on \emph{Fine-Grained \& Parameterized Computational Geometry} for facilitating these discussions.} %optional
\hideLIPIcs
\nolinenumbers

\EventEditors{John Q. Open and Joan R. Access}
\EventNoEds{2}
\EventLongTitle{42nd Conference on Very Important Topics (CVIT 2016)}
\EventShortTitle{CVIT 2016}
\EventAcronym{CVIT}
\EventYear{2016}
\EventDate{December 24--27, 2016}
\EventLocation{Little Whinging, United Kingdom}
\EventLogo{}
\SeriesVolume{42}
\ArticleNo{23}
\begin{document}

\maketitle

\begin{abstract}
    To measure the similarity of the \emph{shape} of point sets, rather than their mere closeness in space, various notions of a \emph{Hausdorff distance under translation} have been investigated. 
    Specifically, let $P$ and $Q$ denote point sets of $n$ and $m$ points, respectively, in $\mathbb{R}^d$. 
    We consider the task of computing the minimum distance $d(P,Q+\tau)$ over an admissible set of translations $\tau \in T$, where $d(\cdot, \cdot)$ denotes the Hausdorff distance under the $L_\infty$-norm. 
    As variants, we distinguish between \emph{continuous} ($T=\mathbb{R}^d$) or \emph{discrete} ($T$ is a given finite set of $t$ translations) as well as \emph{directed} or \emph{undirected} (choosing the directed or undirected Hausdorff distance for $d(\cdot, \cdot)$).

    We seek to apply the paradigm of fine-grained complexity to understand the complexity of these variants, and in particular: How is the running time influenced by the dimension $d$, the relationship between $n$ and $m$, and the specific choice of variant?
    As our main results, we obtain:
    \begin{itemize}
    \item The asymmetric definition of the most studied variant, the continuous directed Hausdorff distance, results in an \emph{intrinsically asymmetric} time complexity: While (Chan, SoCG'23) established a symmetric $\tOh((nm)^{d/2})$ upper bound for all $d\ge 3$ and proved it to be conditionally optimal for \emph{combinatorial} algorithms whenever $m \le n$, we show that this lower bound does not hold for the case $n \ll m$, by providing a combinatorial, almost-linear-time algorithm for $d=3$ and $n=m^{o(1)}$. We further prove \emph{general}, i.e., non-combinatorial, conditional lower bounds for $d\ge 3$, in particular: (1) $m^{\lfloor d/2 \rfloor -o(1)}$ for small $n$ and (2) $n^{ d/2 -o(1)}$ for $d=3$ and small $m$.
    \item We observe that the directed and undirected case is closely related, 
    in particular, all our lower bounds for $d \geq 3$ hold for both the directed and undirected variant. A remarkable exception is the case of $d=1$ for which we provide a conditional separation. Specifically, in contrast to the undirected variants being solvable in near-linear time~(Rote, IPL'91), we show that the directed variants are at least as hard as the additive problem MaxConv LowerBound introduced in (Cygan, Mucha, Wegrzycki and Wlodarczyk, TALG'19).
    \item We show that the discrete variants reduce to a variant of 3SUM for $d\le 3$. This gives a barrier in proving a tight lower bound of these variants under the Orthogonal Vectors Hypothesis (OVH); in contrast, the continuous variants admit a tight conditional lower bound under OVH in $d=2$ (Bringmann, Nusser, JoCG'21).
    \end{itemize}
    These results reveal an intricate interplay of dimensionality, symmetry and discreteness in determining the fine-grained complexity of computing Hausdorff distances under translation. 
\end{abstract}

\section{Introduction}

Consider the classic Hausdorff distance of given point sets $P$ and $Q$, which comes in two flavors: the \emph{directed} Hausdorff distance $\hddist(P,Q) \coloneqq \max_{p\in P}\min_{q\in Q} \|p-q\|$, as well as the \emph{undirected} Hausdorff distance $\hddistun(P,Q) \coloneqq \max\{\hddist(P,Q), \hddist(Q,P)\}$. Here and throughout the paper, we fix our metric space to be $(\mathbb{R}^d, L_\infty)$. The two variants are sometimes also referred to as one-sided/bidirectional or one-way/two-way. Furthermore, $\hddistun$ yields a metric, in contrast to $\hddist$. For any constant $d\in \mathbb{N}$ and point sets $P$ and $Q$ of $n$ and $m$ points in $\mathbb{R}^d$, we can compute these distance in near-linear time\footnote{Throughout the paper, we use $\tOh(\cdot)$, to hide polylogarithmic terms in $n$ and $m$. Since we assume $d$ to be constant, this includes factors of the form $\log^{O(d)} nm$.} $\tOh(n+m)$ using $L_\infty$-nearest neighbor search, e.g., in a range tree~\cite{kapoor1996new}.

A computationally more challenging task is to compute the Hausdorff distance under some transformation, i.e., to minimize the Hausdorff distance of $P'$ and $Q$ among all sets $P'$ obtained from $P$ using a set of admissible transformations. Such notions are well-studied from the perspective of \emph{shape matching} or \emph{geometric pattern matching}, see, e.g.~\cite{AltG00,AltMWW88}, where $P$ and $Q$ model geometric objects and we are interested in comparing their general shapes rather than their mere closeness in space. In this paper, we focus on arguably one of the most basic settings, i.e., computing $L_\infty$ Hausdorff distances under translation. Specifically, when $T$ denotes a set of admissible translations, the Hausdorff distance under translation is defined as $\min_{\tau \in T} \delta(P+\tau, Q)$ where $\delta$ denotes either the directed or undirected Hausdorff distance, i.e., $\delta \in \{\hddist, \hddistun\}$.

The most popular setting is $T=\mathbb{R}^d$, which yields the natural translation-invariant version of the Hausdorff distance; we refer to this as the \textbf{(continuous) Hausdorff distance under translation (HuT)}. This measure has received significant interest from the computational geometry community both for $L_\infty$ and other norms~\cite{rote1991ComputingMinimumHausdorff, HuttenlocherKS93, HuttenlocherKR93,Rucklidge96,ChewK98,chew1999GeometricPatternMatching,barequet_polygon_2011,BringmannN21,Chan23}, yet despite this interest, only for $d=2$  tight upper and conditional lower bounds are known (see below for details).

Another natural version is to let $T$ be a finite set given as part of the input. We refer to this as the \textbf{discrete Hausdorff distance under translation (DiscHuT)}. This version occurs naturally in the context of approximating the (continuous) Hausdorff distance under translation. More precisely, computing an $\alpha$-approximation for the continuous version can be achieved by choosing $f(\alpha,d)$ many translations $T$ such that 
$\min_{\tau \in T} \hddist(P+\tau, Q) \leq \alpha \min_{\tau\in \mathbb{R}^d} \hddist(P+\tau ,Q)$. Since we can compute the discrete Hausdorff distance under translation in time $\tOh(|T| (|P|+|Q|))$, this yields baseline $\alpha$-approximation algorithms for the continuous Hausdorff distance under translation with running time $\tilde{O}(f(\alpha,d) (n+m))$; see also~\cite{GoodrichMO99,wenk2003ShapeMatchingHigher}. Consequently, obtaining faster algorithms for the discrete Hausdorff distance under translation transfers to faster approximation algorithms for the continuous version. Furthermore, the discrete version recently received interest from the perspective of a structural fine-grained complexity theory: the directed discrete Hausdorff distance under translation is part of a problem pair that is fine-grained complete for the class \FOPZ, a class of polynomial-time problems formed from first-order properties on finite additive structures~\cite{gokaj_completeness_2025} (for further discussion, see Section~\ref{sec:apx-fopz-problems}).

Previous works on Hausdorff distances under translation mostly focus --with notable exceptions such as~\cite{BringmannN21,Chan23}-- on a single particular variant, often considering particular dimensions $d$ and the \emph{balanced} case, in which all inputs sets (i.e., $P$, $Q$, and in the discrete case also $T$) have roughly equal size $n$. In this work, we aim to give a more comprehensive analysis, which reveals an intricate interplay between the dimensionality, symmetry (directed/undirected and the relationship between $n$ and $m$), and discreteness.

\subsection{Our Results}

\newcommand{\newmark}{({\color{orange}$*$})~}
    
\begin{figure}[t]
    \centering
    \includegraphics[width=0.95\textwidth]{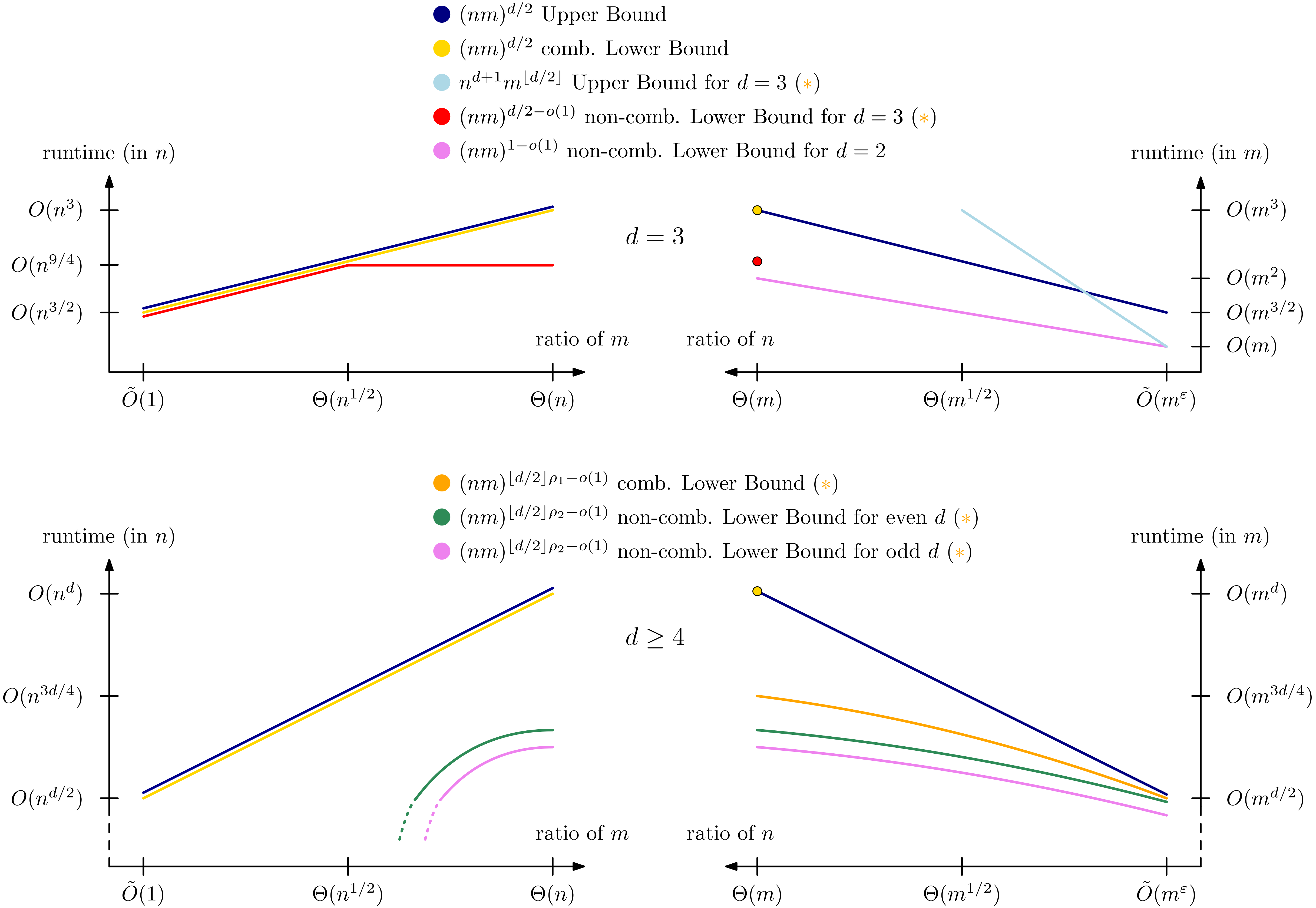}
    \caption{Schematic figure of upper and lower bounds for directed Hausdorff under Translation in $3$-D (upper figure) and $d$-D for $d \geq 4$ (lower figure), novel results are marked with \newmark. $\rho_1, \rho_2$ are functions on the ratio of $n,m$ which correspond to values described in \Cref{thm:lopsided-directed-LB}. }
    \label{fig:complexity}
\end{figure}

\begin{table}[t]
    \centering
    \newcommand{\ubColor}[1]{{\color{Blue} #1}}
    \newcommand{\lbLowConfColor}[1]{{\color{OliveGreen} #1$^\dagger$}}
    \newcommand{\lbHighConfColor}[1]{{\color{OliveGreen} #1}}

    %---- First matrix ----%
   
    \begin{tabular}{|c|c|c|}
        \toprule
        $d=1$ & \textbf{Directed} & \textbf{Undirected} \\
        \midrule
            \textbf{Continuous} & 
            \ubColor{$\tOh(n^2)$ (\cite{ChewK98})} ~
            \newmark \lbLowConfColor{$n^{2-o(1)}$ (Apx.~\ref{thm:necklace-LB})} &
            \ubColor{$\tOh(n)$ (\cite{rote1991ComputingMinimumHausdorff})} ~
            \lbHighConfColor{$\Omega(n)$}
            \\
        \midrule
        \textbf{Discrete} & 
        \ubColor{$\tOh(n^2)$ (Apx.~\ref{sec:baseline})} ~
        \newmark \lbLowConfColor{$n^{2-o(1)}$ (\ref{thm:maxconv-LB})} &
        \ubColor{$\tOh(n)$ (Apx.~\ref{sec:baseline})} ~
        \lbHighConfColor{$\Omega(n)$} 
        \\
        \bottomrule
    \end{tabular}
    
    \vspace{1em} % vertical space between the two matrices
    
    %---- Second matrix ----%
    \begin{tabular}{|c|c|c|}
        \toprule
        $d=2$ & \textbf{Directed} & \textbf{Undirected} \\
        \midrule
        \textbf{Continuous} & 
        \ubColor{$\tOh(n^2)$ (\cite{ChewK98})} ~
        \lbHighConfColor{$n^{2-o(1)}$ (\cite{BringmannN21})} &
        \ubColor{$\tOh(n^2)$ (\cite{ChewK98}, \ref{obs:undir-to-dir})} ~
        \lbHighConfColor{$n^{2-o(1)}$ (\cite{BringmannN21})}
        \\
        \midrule
        \textbf{Discrete} & 
        \ubColor{$\tOh(n^2)$ (Apx.~\ref{sec:baseline})} ~
        \newmark \lbLowConfColor{$n^{2-o(1)}$ (\ref{thm:reduction-orthants-to-undirected})} &
        \ubColor{$\tOh(n^2)$ (Apx.~\ref{sec:baseline})} ~
        \newmark \lbLowConfColor{$n^{2-o(1)}$ (\ref{thm:reduction-orthants-to-undirected})}
        \\
        \bottomrule
    \end{tabular}

    \caption{An overview of known results of the $L_\infty$-Hausdorff distance under Translation in $1$-D and $2$-D for the balanced case, i.e., $n = m$ ($=t$ in the discrete setting). 
    Upper Bounds are marked in blue, lower bounds in green, while results under non-standard hypothesis are additionally marked with ($\dagger$). Novel results are denoted with \newmark{}. 
    }
    \label{tab:results-d1-d2}
\end{table}

Our first question concerns the fine-grained complexity of the algorithmically most studied variant, the (continuous) directed Hausdorff distance under translation. The currently best upper bounds are $O(nm \log nm)$ for $d\le 2$~\cite{ChewK98}, $O((nm)^{d/2} \log nm)$ for $d=3$~\cite{chew1999GeometricPatternMatching}\footnote{While~\cite{chew1999GeometricPatternMatching} only considers the case $m=n$, their analysis also holds for distinct $n, m$. A short explanation of their approach can also be found in \Cref{sec:HuT-baseline-algorithms}.}, and $O((nm)^{d/2}(\log\log nm)^{O(1)})$ for general $d\ge 4$ \cite{Chan23}. This suggests the following question:
\begin{center}
\textbf{Question 1:} \emph{Is $(nm)^{d/2 \pm o(1)}$ the time complexity of directed Hausdorff $L_\infty$-distance under Translation?}    
\end{center}

A conditional lower bound of Bringmann and Nusser~\cite{BringmannN21} (based on the Orthogonal Vectors Hypothesis) answers this question positively for $d=2$. Subsequently, Chan~\cite{Chan23} gave evidence of optimality even for $d\ge 3$, by proving a conditional lower bound of $(nm)^{d/2-o(1)}$ for \emph{combinatorial} algorithms\footnote{Here, combinatorial algorithms refers to algorithms that avoid the use of algebraic techniques underlying fast matrix multiplication algorithms beating $n^{3-o(1)}$ running time.} in the case that $m\le n$. However, proving such a conditional lower bound against \emph{general} algorithms for $d\ge 3$ (or better conditional lower bounds than $(nm)^{1-o(1)}$ for the case $n\ll m$) remained open until this work. Furthermore, for $d=1$, no $\tOh(n+m+(nm)^{d/2}) = \tOh(n+m)$ algorithms are known, raising the question whether any superlinear lower bound can be proven for $d=1$. 

As our first main contribution, we give a perhaps surprising \emph{negative} answer to Question 1: For $d=3$ and $n=m^{o(1)}$, there exists an almost-linear time $m^{1+o(1)}$ algorithm, showing that in an unbalanced setting, the upper bound of $\tOh((nm)^{d/2})$ can indeed be beaten (\Cref{sec:apx-other-side}). 
Interestingly, this stands in contrast to the case of moderately small $m$: For $d=3$ and any $m\le \sqrt{n}$ we prove a tight conditional lower bound of $(nm)^{d/2-o(1)}$, by combining the proof techniques of Chan's combinatorial $k$-clique lower bound~\cite{Chan23} and the tight 3-uniform hyperclique lower bound for Klee's measure problem in 3D~\cite{kunnemann2022TightNoncombinatorialConditional} (\Cref{sec:apx-HuT-3D-new}). 
This reveals that the time complexity of the directed Hausdorff distance under translation is \emph{inherently asymmetric} in $n$ and $m$ and breaks with tradition for algorithmic results on the Hausdorff distance under translation. 

We investigate the case of small $n$ further and observe that our algorithmic approach could plausibly generalize to an $\tOh(n^{d+1} m^{\lfloor d/2 \rfloor})$ algorithm (we elaborate on the missing ingredient in \Cref{sec:apx-other-side}) -- such an algorithm would beat the baseline algorithm for all odd $d$ and small values of $n$.
We also show that such an algorithm would indeed be conditionally tight: we also prove a corresponding lower bound of $m^{\lfloor d/2 \rfloor - o(1)}$ for all $d\ge 4$, assuming the 3-uniform hyperclique hypothesis (\Cref{sec:apx-lopsided}).

Our conditional lower bounds can be adapted to the case of $n = \Theta(m)$, but no longer match the upper bound. In fact, we obtain a conditional lower bound of  $n^{2.25-o(1)}$ for $d=3$ which still leaves a gap of $n^{0.75\pm o(1)}$ to the upper bound of $\tOh(n^{3})$. It remains an open problem to settle the full time complexity for \emph{any} relationship between $n$ and $m$. An intriguing possibility consistent with our results is the existence of an $\tOh(n^{\lceil d/2 \rceil}m^{\lfloor d/2 \rfloor})$-time algorithm.

We turn to our second main question.
\begin{center}
\textbf{Question 2:} \emph{How does the choice of dimension, directed vs.\ undirected and discrete vs.\ continuous affect the time complexity?}
\end{center}

We observe that the time complexity of computing Hausdorff distances under translation exhibits an intricate interplay between the different aspects of the problem, illustrated in \Cref{tab:results-d1-d2} and \Cref{fig:complexity}.

\paragraph*{On directed vs undirected.}
A folklore reduction from computing the undirected Hausdorff distance to computing the directed Hausdorff distance under translation (\Cref{obs:undir-to-dir}) shows that the undirected case is never harder than the (balanced) directed case. Could there exist a reduction in the other direction?

We show that in full generality, such a reduction appears unlikely: While for all $d\ge 3$, we can transfer our conditional lower bounds for the directed case to analogous bounds in the undirected case (analogous to the OVH-based lower bound for $d=2$~\cite{BringmannN21}, which also transfers to the undirected case) (\Cref{sec:apx-undirected}), we obtain a conditional separation for $d=1$. Notably, Rote~\cite{rote1991ComputingMinimumHausdorff} obtains a $\tOh(n+m)$-time algorithm for computing the undirected Hausdorff distance under translation in $d=1$. 
For the directed setting, we give a fine-grained reduction from the quadratic-time $L_\infty$-Necklace Alignment problem~\cite{BremnerCDEHILPT14}, which in turn is at least as hard as the MaxConv LowerBound problem studied in~\cite{cygan2019ProblemsEquivalentMin+Convolution}. We remark that we are not aware of any conditional lower bound for MaxConv LowerBound under an established hypothesis from fine-grained complexity theory; however, the question of subquadratic equivalence between this problem and the well-studied $(\min,+)$-convolution problem was raised as an open problem in 2017~\cite{cygan2019ProblemsEquivalentMin+Convolution}.  We believe that our reduction to the (discrete or continuous) directed Hausdorff distance adds further reason to study the apparent quadratic-time complexity of MaxConv LowerBound, in hopes to either refute or justify this hardness barrier.

Circumventing the separation in $d=1$, we give a general reduction from the undirected setting to the directed setting that blows up the dimension (Section~\ref{sec:apx-undirected}). This allows us to transfer the lower bounds from the directed case for $d=1$ to the undirected case in $d=2$. In particular, we conclude that if MaxConv LowerBound truly requires quadratic-time, the undirected setting is near-linear time solvable if and only if $d=1$.

\paragraph*{On discrete vs.\ continuous.}
The tight lower bound based on OVH~\cite{BringmannN21} convincingly settles the fine-grained complexity of the continuous setting in $d=2$. In contrast, the only justification of hardness of the discrete setting is given by our reductions from MaxConv LowerBound and $L_\infty$-Necklace Alignment (see above and \Cref{sec:lower-bounds-1D}). Is there any reason why a tight lower bound based on OVH does not exist for the discrete setting?

We find evidence for the non-existence of such a reduction by showing a fine-grained reduction from (directed or undirected) discrete Hausdorff distance under translation in $d\le 3$ to the All-Ints 3SUM problem (see Section~\ref{sec:apx-fopz-problems} for details and definition). In particular, this implies a tight fine-grained reduction from the balanced setting to 3SUM. Thus, establishing quadratic-time OVH-hardness of the balanced, discrete setting implies a tight reduction from OV to 3SUM, which is a central open problem in fine-grained complexity theory, and suffers from barriers based on the Nondeterministic Strong Exponential Time Hypothesis, see~\cite{CarmosinoGIMPS16}.

\medskip

We remark that we obtain further connections from the class of additive problem $\FOPZ$ to the discrete variants. For space reasons, these results are deferred to Section~\ref{sec:apx-additive-problems}.

%%%%%%%%%%%%%%%%%%%%%%%%%%%%%%%%%%%%%%%%%%%%%%%%%%%

\section{Preliminaries}
\label{sec:preliminaries}
We include a brief definition of relevant problems as well as the used hardness hypotheses in \Cref{sec:definitions}.
We further give a brief overview of known algorithms for the continuous and discrete variant in \Cref{sec:baseline}.

We recall the Hausdorff distance as a distance measure between two point sets $P, Q$ of sizes $n, m$.
The metric space used for this paper will be $(\R^d, L_\infty)$. 
The undirected (symmetric) and directed (asymmetric) variant are defined as:
\begin{align*}
    \hddist(P,Q) &\coloneqq \max_{p\in P}\min_{q\in Q} \|p-q\|_\infty & \hddistun(P,Q) &\coloneqq \max\{\hddist(P,Q), \hddist(Q,P)\}
\end{align*}

We show almost all of our conditional lower bounds already for the \emph{decision variant} of Hausdorff under Translation, which receives a test distance $\delta > 0$ as additional input\footnote{Note that the case $\delta = 0$, i.e., finding an exact match, allows an almost-linear-time algorithm \cite{fischer2024BaurStrassen} and is thus uninteresting for our purposes.}. Clearly, the corresponding optimization problems, finding a translation minimizing $\delta$, are at least as hard. Specifically, we study:
\begin{definition}[Hausdorff under Translation (\emph{HuT})] Given sets $P,Q\subset \mathbb{R}^d$ of $n$ and $m$ points, respectively, and a real value $\delta > 0$. Decide whether: \[ 
        \exists{\tau \in \mathbb{R}^d} : \hddist(P+\tau, Q) \leq \delta \quad\equiv \quad \exists{\tau \in \mathbb{R}^d} ~ \forall {p\in P} ~ \exists{q\in Q} : \| (p +\tau) - q\|_\infty \leq \delta
    \]
\end{definition}
\begin{definition}[Discrete Hausdorff under Translation (\emph{DiscHuT})] Given sets $P,Q\subset \mathbb{Z}^d$ of $n$ and $m$ points, respectively, a real value $\delta > 0$ as well as a set of at most $t$ translations $T\subset \mathbb{Z}^d$. Decide whether:
    \[ 
        \exists{\tau \in T} : \hddist(P+\tau, Q) \leq \delta \quad \equiv \quad \exists{\tau \in T}~\forall{p\in P} ~ \exists{q\in Q}: \| (p+\tau) - q\|_\infty \leq \delta
    \]
\end{definition}
Note that our definition of the discrete variant assumes all input entries to be polynomially bounded integers\footnote{One may also assume rational coordinates, which is equivalent by a scaling argument.}.

To obtain the \emph{undirected} versions of the above problems, we replace $\hddist(P + \tau, Q)$ by $\hddistun(P + \tau, Q)$, and denote the problem by \emph{(Disc)uHuT}. If not specified otherwise, we consider the directed versions.
We call the case in which all input sets have roughly equal size, i.e., $Q = \Theta(|P|)$ and additionally for the discrete case $|T|=\Theta(|P|)$, the \emph{balanced} parameter setting, otherwise we say the parameters are \emph{lopsided}.

Lastly, we give a folklore reduction from undirected to directed Hausdorff under Translation that works for both the continuous and discrete variant. 
By setting $\sigma=(M, 0,\dots, 0)$ for a sufficiently large $M$, we see that $\hddistun(P + \tau, Q) = \hddist((P \cup (\sigma - Q)) + \tau, Q \cup (\sigma-P))$.
\begin{observation}
    \label{obs:undir-to-dir}
    The undirected (Discrete) Hausdorff under Translation problem with input sizes $n,m$ reduces to the directed version with input sizes $n' = m' = n+m$.
\end{observation}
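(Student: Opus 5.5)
We prove the identity stated just before \Cref{obs:undir-to-dir}: for $\sigma=(M,0,\dots,0)$ with $M$ sufficiently large and every admissible translation $\tau$,
\[
\hddistun(P+\tau,Q)=\hddist\bigl((P\cup(\sigma-Q))+\tau,\; Q\cup(\sigma-P)\bigr).
\]
Given this, the reduction maps an instance $(P,Q,\delta)$, together with $T$ in the discrete case, to the directed instance $(P'=P\cup(\sigma-Q),\, Q'=Q\cup(\sigma-P),\,\delta)$ with the same translation set $T$. The new sets have sizes $n'=m'=n+m$, and the answers agree translation by translation.

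\emph{Step 1: where the large shift is irrelevant.} Split the max over $P'+\tau$ into its two parts. For $p+\tau$ with $p\in P$, the minimum over $Q'$ splits into a minimum over $Q$ and over $\sigma-P$. The points of $\sigma-P$ have first coordinate about $M$, while $p+\tau$ has first coordinate bounded by the input magnitudes plus $|\tau|$. So for large $M$ these points are strictly farther than the nearest point of $Q$, and this part contributes exactly $\hddist(P+\tau,Q)$.

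\emph{Step 2: the symmetric part.} For $\sigma-q+\tau$ with $q\in Q$, the candidates in $Q$ are far away in the same sense. The relevant distances are
\[
\|(\sigma-q+\tau)-(\sigma-p)\|_\infty=\|p+\tau-q\|_\infty .
\]
Minimizing over $p$ gives $\hddist(Q,P+\tau)$. Taking the max over both parts yields $\hddistun(P+\tau,Q)$.

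\emph{Step 3: choosing $M$ for all relevant $\tau$.} This is the one delicate point, since in the continuous case $\tau$ is unbounded.
\begin{itemize}
\item In the discrete case it suffices to take $M$ larger than $4$ times the maximum absolute coordinate over $P$, $Q$ and $T$. This keeps integer inputs polynomially bounded.
\item In the continuous case, we only need correctness of the decision answer. If $\hddistun(P+\tau,Q)\le\delta$, then $\|\tau\|_\infty\le R+\delta$, where $R$ bounds the input coordinates. Conversely, if the directed value is at most $\delta$ at some $\tau$, then by Step 1 the points of $P+\tau$ lie within $\delta$ of $Q'$. Choose $M$ larger than about $4(R+\delta)+4R$. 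Then a single point of $P+\tau$ cannot be within $\delta$ of both clusters. Moreover, a matching of $P+\tau$ to $\sigma-P$ would force $\tau$ to have first coordinate about $M$. But then the points $\sigma-Q+\tau$, which lie near $2M$, would have no partner within $\delta$.
\end{itemize}
Hence any feasible $\tau$ satisfies $\|\tau\|_\infty\le R+\delta$, and on that range Steps 1 and 2 apply. The directed and undirected decision answers therefore coincide, which completes the reduction.
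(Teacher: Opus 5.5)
You use the same construction as the paper, which justifies the observation only by the identity $\hddistun(P+\tau,Q)=\hddist((P\cup(\sigma-Q))+\tau,Q\cup(\sigma-P))$ for sufficiently large $M$. Your continuous-case argument is more careful than that one-liner: for a fixed $M$ the identity cannot hold for all $\tau\in\mathbb{R}^d$, and you correctly restrict to bounded $\tau$ and compare only decision answers.

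\textbf{Gap.} The explicit threshold you give in the discrete case is too small. In Steps 1 and 2, $\tau$ enters both estimates. Let $R$ bound the coordinates of $P\cup Q$ and let $K=\max_{\tau\in T}\|\tau\|_\infty$. The intended (near) distances can be as large as $2R+K$. The unintended (far) distances are only guaranteed to be at least $M-2R-K$. So the identity needs $M\ge 4R+2K$, which can be $6$ times the maximum absolute coordinate, not $4$.

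Here is a concrete counterexample for $d=1$: take $P=\{-10,10\}$, $Q=\{-10\}$, $T=\{10\}$, $\delta=11$ and $M=41>4\cdot 10$.
\begin{itemize}
\item Undirected side: $P+10=\{0,20\}$ and $\hddist(P+10,Q)=30>\delta$, so this is a NO-instance.
\item Directed side: $P'+10=\{0,20,61\}$ and $Q'=\{-10,31,51\}$. Every point of $P'+10$ is within $11$ of $Q'$; the point $20$ uses $31\in\sigma-P$. So this is a YES-instance.
\end{itemize}

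\textbf{Repair.} The direction undirected-YES $\Rightarrow$ directed-YES holds for every $M$, because enlarging the candidate sets only decreases the inner minima. For the converse, either of the following works.
\begin{itemize}
\item Take $M$ at least $6$ times the maximum absolute coordinate over $P\cup Q\cup T$. This guarantees $M\ge 4R+2K$, and hence your identity for all $\tau\in T$.
\item Reuse your continuous-case argument with $M>4R+2\delta$. A cross match of some $p+\tau$ within $\delta$ forces $\tau_1\ge M-2R-\delta$. Then every point of $\sigma-Q+\tau$ is at distance at least $M-4R-\delta>\delta$ from $Q'$. The case of a cross match of some $\sigma-q+\tau$ into $Q$ is symmetric.
\end{itemize}

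\textbf{Smaller slip.} Undirected feasibility only yields $\|\tau\|_\infty\le 2R+\delta$, not $R+\delta$, since $\|q-p\|_\infty$ can be $2R$. This does no harm: Steps 1--2 then require $M\ge 8R+2\delta$, and your choice $M>8R+4\delta$ still exceeds that.
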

While this reduction introduces additional points to the input, we show in \Cref{sec:apx-undirected} that an input size maintaining reduction to a single instance of directed HuT is unlikely.

As our model of computation, we use the Real RAM for continuous HuT, while for the discrete HuT variant with point sets in $\Z^d$ we assume a Word RAM machine model with word size $\Theta(\log n)$.

\subsection{Translation Problems}
\label{sec:translation-problems}
For clarity of presentation, it is often useful to view the Hausdorff under Translation decision problem as a translation problem involving certain \emph{shapes}.
\begin{definition}[Translation Problem] 
    Given a set of $n$ points $P$ and $m$ shapes $\mathcal Q$ in $\R^d$. Compute a translation $\tau \in T \subseteq \R^d$ such that \[
        \forall p \in P ~ \exists Q \in \mathcal Q: p + \tau \in Q.
    \]
\end{definition}
For an instance of HuT with point sets $P, Q$ and $\delta > 0$, if we construct hypercubes $\mathcal Q$ with side length $2\delta$ and centers $q \in Q$, the region $T^* = \bigcap_{p \in P} \bigcup_{\hat Q \in \mathcal Q} \hat Q - p$ is exactly the region of feasible translations, i.e., we have $\tau \in T^* \iff \forall p\in P ~ \exists q \in Q: \hddist(P + \tau, Q) \leq \delta$ \cite{agarwal_applications_1994}.
We call this case, where all shapes $\mathcal Q$ are hypercubes of the same size, the \emph{Translation Problem with Hypercubes (TPwC)}. In case all shapes of $Q$ are axis-aligned boxes (of arbitrary size), we call it the \emph{Translation Problem with Boxes (TPwB)}.
While the TPwC problem is equivalent to HuT, there is no reduction to HuT known within the same ambient space from the Translation Problem with Boxes or even from the translation problem with hypercubes of different sizes.
We further use a third variant. 

\begin{definition}[Translation Problem with Orthants (TPwO)]
    Given a target box $B$ with a maximum side length $\delta_0$.
    Given a set of $k$ Translation Problem with Hypercube instances, each on a point set $P_i$ and hypercubes $\mathcal Q_i$ with side length $\delta > \delta_0$, such that $\sum_{i \in [k]}|P_i| = n$, $\sum_{i \in [k]}|\mathcal Q_i| = m$.
    Compute a translation $\tau \in \R^d$ such that $\forall i \in [k]: P_i + \tau \subset B$ and \[
    \forall i\in [k] ~ \forall p_i \in P_i ~\exists Q_i \in \mathcal Q_i: p_i + \tau \in Q_i.
    \]
\end{definition}

We call the shapes of the TPwO instance, i.e., the shapes of its sub-instances, orthants as within the target box all hypercubes are indistinguishable from orthants.
An important property of this problem, which we will use later, is that we may fix the side-length of all orthants to a value $\delta > \delta_0$.
Similar to \cite[Lemma 3]{Chan23}, the TPwO problem reduces to the TPwC and thus has a reduction to HuT.
A converse reduction, from HuT to TPwO, is generally not known.

\begin{lemma}
    \label{lem:tpwo-to-tpwc}
    The Translation Problem with Orthants on $n$ points and $m$ orthants reduces in linear time to the Translation Problem with Hypercubes on $\Theta(n)$ points and $\Theta(m)$ hypercubes.
\end{lemma}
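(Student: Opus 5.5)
The plan is to build one Translation Problem with Hypercubes instance in three steps: turn every orthant into a hypercube of a single common side length $\delta$, place the $k$ sub-instances far apart so they cannot interact, and enforce the target-box constraint with a constant-size gadget. Throughout, let $B' \coloneqq \bigcap_{i}\bigcap_{p\in P_i} (B-p)$ be the set of translations satisfying $P_i+\tau\subset B$ for all $i$. This set is an axis-aligned box whose side lengths are at most $\delta_0$. It can be computed in linear time from the coordinatewise extremes of $\bigcup_i P_i$, and if it is empty we output a trivial NO instance.

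\textbf{Step 1: orthants to hypercubes.} If an input shape is given as an orthant, with an apex $a$ and a sign vector $s\in\{\pm1\}^d$, we replace it by the hypercube of side $\delta$ that has $a$ as a corner and extends from $a$ in direction $s$. Fix $\tau\in B'$ and $p\in P_i$, so that $p+\tau\in B$. Suppose $p+\tau$ lies in the orthant. Both $a$ (when its orthant meets $B$ at all) and $p+\tau$ must lie in the relevant slab of $B$ in each coordinate, so they differ by at most $\delta_0<\delta$ there; hence $p+\tau$ also lies in the hypercube. The converse holds trivially, since the hypercube is contained in the orthant. So for $\tau\in B'$ every sub-instance has exactly the same feasible translations as before. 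If instead the shapes are already hypercubes of side $\delta$, as in the literal definition, this step is vacuous. This is the only place where $\delta>\delta_0$ is used for the orthant shapes; we assume (or arrange by choosing $\delta$ large, as the definition permits) that the apexes relevant to $B$ lie within distance $\delta_0$ of $B$.

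\textbf{Step 2: separating the sub-instances.} Let $D$ be an upper bound on the diameter of all input points, all hypercubes, $B$ and $B'$ together, say $D$ equal to $\delta$ plus the bounding-box diameter. Choose shift vectors $\sigma_i \coloneqq (i\cdot 10D,0,\dots,0)$ and form $P \coloneqq \bigcup_i (P_i+\sigma_i)$ and $\mathcal Q \coloneqq \bigcup_i (\mathcal Q_i+\sigma_i)$. For $\tau\in B'$, the point $p+\sigma_i+\tau$ lies within distance $D$ of $B+\sigma_i$. Every cube of $\mathcal Q_j+\sigma_j$ with $j\neq i$ lies at distance at least $10D-2D$ from $B+\sigma_i$. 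Hence $p+\sigma_i+\tau$ can only be covered by cubes coming from its own sub-instance, and the combined instance is feasible at $\tau$ if and only if every sub-instance is.

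\textbf{Step 3: the target-box gadget, which I expect to be the main subtlety.} We must ensure that every feasible $\tau$ lies in $B'$, using only hypercubes of the same side length $\delta$. Write $B'=[\ell_1,u_1]\times\dots\times[\ell_d,u_d]$. Add two gadget points $g_1,g_2$, placed far from all other parts by further shifts. For $g_1$ add the single cube $C_1$ of side $\delta$ with lower corner $g_1+\ell$, so that $C_1-g_1=[\ell,\ell+\delta\mathbf 1]$. For $g_2$ add the single cube $C_2$ with upper corner $g_2+u$, so that $C_2-g_2=[u-\delta\mathbf 1,u]$. Since $u_j-\ell_j\le\delta_0<\delta$ in every coordinate, the intersection of these two translation regions is exactly $B'$. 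The same spacing argument as in Step 2 shows that $g_1$ and $g_2$ can only be covered by their own cubes, and that the gadget cubes never cover any other point when $\tau$ is near $B'$. If a uniform side length is required globally, all cubes already have side $\delta$; otherwise a common rescaling suffices.

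Combining the three steps, $\tau$ is feasible for the constructed instance if and only if $\tau\in B'$ and every sub-instance is satisfied at $\tau$, which is exactly the TPwO condition. The construction has $n+2$ points and $m+2$ hypercubes and is computed in linear time, which proves \Cref{lem:tpwo-to-tpwc}.
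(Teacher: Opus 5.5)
Your construction matches the paper's: two one-point/one-cube gadgets whose feasible translation regions intersect exactly in $B'$ (using $u_j-\ell_j\le\delta_0<\delta$), and separation of the sub-instances by large shifts along the first axis. You can drop Step~1. In the definition of TPwO every sub-instance is already a TPwC instance with common side $\delta$. As written, Step~1 is also false: an orthant can meet $B$ while its apex is far from $B$. For example, in 1-D take $[-10,\infty)$ and $B=[0,1]$; the replacement cube $[-10,-10+\delta]$ misses $B$.

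The genuine gap is in the soundness direction, i.e., showing that every $\tau$ feasible for the combined instance lies in $B'$. Your isolation argument in Step~2 is proved only for $\tau\in B'$. In Step~3 you then claim, by ``the same spacing argument'', that $g_1,g_2$ can only be covered by their own cubes. To conclude $\tau\in B'$ you need this for an \emph{arbitrary} feasible $\tau$, so the reasoning is circular. The claim is also false unconditionally: for $\tau=(g_2-g_1)+u$ the point $g_1+\tau=g_2+u$ lies in $C_2$. More generally, a translation whose first coordinate is about one shift length maps each group onto the cubes of its neighbour.

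The missing ingredient is an extremal argument along the shift axis; the paper gives it in its remark that otherwise $P_1$ or $P_k$ is left uncovered. Suppose all original points and cubes, gadgets included, lie in $[-R,R]^d$, and the groups are shifted by distinct multiples of $10R$ in the first coordinate.
\begin{itemize}
\item If $\tau_1>2R$, the points of the rightmost group land beyond every cube.
\item If $\tau_1<-2R$, the points of the leftmost group land before every cube.
\end{itemize}
Hence $|\tau_1|\le 2R$. Only then does spacing show that every point, including $g_1$ and $g_2$, can be covered only by cubes of its own group. The gadgets then force $\tau\in B'$, and Step~2 applies. For this you must ensure that both extreme groups contain points, for instance by placing the gadget groups of $g_1$ and $g_2$ at the two ends.
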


\begin{proof} 
    For the TPwO instance, a set of TPwC sub-instances $(P_1, \mathcal Q_1), ..., (P_k, \mathcal Q_k)$ and a target box $B$ with maximum side length $\delta_0$ is given. 
    We need to show how to encode the target box as well as the separate sub-instances into a single TPwC instance.
    
    For the target box, restricting the translated point set, we note that it implies a bounding box of the available translations: For each dimension, take the corresponding boundaries of the target box and its nearest points, their difference is the bounding box for the available translations.
    Having the bounding box for translations, we introduce two more TPwC sub-instances.
    By definition of TPwO, we assume that the side length of the hypercubes in the TPwC instances is $\delta$ with $\delta > \delta_0$, so it is larger than the side length of the bounding box for the available translations.
    The newly introduced TPwC sub-instances each get one point and one hypercube, so that the only feasible translations that map both points inside their hypercubes are exactly the bounding box of feasible translations.
    
    For the encoding of the separate TPwC sub-instances with side length of $\delta$, we assume that all hypercubes intersect the target box, else we remove these hypercubes.
    We translate each sub-instance $(P_j, \mathcal Q_j)$ by a vector $(j9\delta, 0, ... 0)$, such that no target box and hypercube of different sub-instances overlap. 
    Thus under all feasible translations, a point in $P_j$ must lie in a box of $\mathcal Q_j$.
    Else, there exists a point set, either $P_1$ or $P_k$, for which the translated points do not lie in any hypercube of $\mathcal Q_1 \cup ... \cup \mathcal Q_k$.

    Any translation is now equivalently feasible to the original TPwO instance as for the constructed TPwC instance.  
\end{proof}

%%%%%%%%%%%%%%%%%%%%%%%

% Lower and upper bounds
\section{Complexity of Directed HuT in Higher Dimensions \texorpdfstring{$d \geq 3$}{d >= 3}}
\label{sec:apx-higher-dimension-overall}
We turn towards exploring the complexity landscape of (continuous) directed Hausdorff under Translation in higher dimension $d \geq 3$. 
In particular, we prove novel non-combinatorial lower bounds and devise a faster algorithm for a lopsided setting. Both lower bounds and the algorithm are tight for specific parameter regimes.

\begin{figure}
    \centering
    \includegraphics[width=0.9\linewidth]{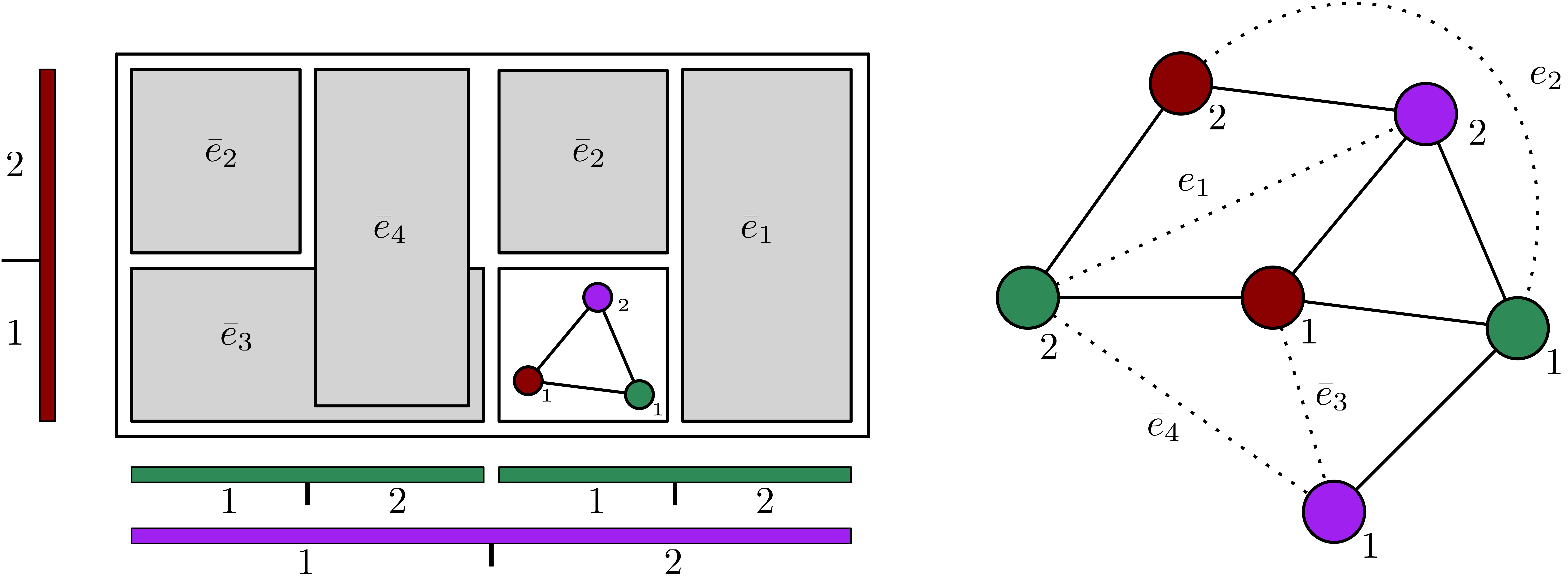}
    \caption{An example of an encoding of a $3$-partite graph to a $2$-D box (skewed cube). Normal lines represent edges, dotted lines represent a subset of non-edges. The gray boxes correspond to the forbidden regions of the respective non-edge. Note that in our reductions, we aim to cover the complement of these regions, the feasible regions.}
    \label{fig:basic-reduction-idea-app}
\end{figure}

For our lower bounds, we use a similar approach to Chan's reduction from $k$-clique to Hausdorff distance under translation~\cite{Chan23}, 
which encodes any $k$-tuple of vertices of a graph $G = (V,E)$ as cells inside a $d$-dimensional hypercube, see \Cref{fig:basic-reduction-idea-app}. 
For any non-edge $\bar e \in {V \choose 2} \setminus E$, he covers exactly the cells within the hypercube for which the corresponding $k$-tuple does not include $\bar e$, we call these cells the \emph{feasible region} of $\bar e$. 
In contrast, the cells for which the corresponding $k$-tuple includes all vertices of $\bar e$ are called the \emph{forbidden region}.
As a $k$-clique is a $k$-tuple which does not contain any non-edge, a cell that lies within the feasible regions of all $\bar m$ non-edges, and is thus covered $\bar m$ times, encodes a $k$-clique.

We use the same approach but reduce from the $3$-uniform $k$-hyperclique problem -- the corresponding hypothesis is that no (non-combinatorial) $O(n^{k - \varepsilon})$-time algorithm exists. For a formal definition of the problem and its hardness hypothesis, see~\Cref{sec:definitions}. 
Accordingly, for each of the $\bar m$ non-edges $\bar e$ of our hypergraph\footnote{Even though we consider hypergraphs with uniformity $u > 2$, we will refer to hyperedges as edges.} we cover the feasible region of $\bar e$, i.e., the cells of our hypercube which do not include $\bar e$.
We give a reduction from finding a cell inside all feasible regions to directed Hausdorff under Translation with appropriate parameters.

We start in \Cref{sec:apx-lopsided} with new combinatorial and non-combinatorial lower bounds that apply for almost arbitrary distributions of $n, m$. Note that all previous lower bounds for HuT in $d$-D with $d \geq 3$ assume $n \geq m$, to the best of our knowledge.
We further consider directed HuT with more balanced parameters $n, m$ in $3$-D. 
In particular, we prove a new non-combinatorial lower bound for $m = O(n^{1/2})$ in \Cref{sec:apx-HuT-3D-new} that matches the previous best, tight, but combinatorial lower bound of Chan \cite{Chan23}.

In \Cref{sec:apx-other-side}, we present a novel algorithm which beats the currently best $O((nm)^{d/2})$ algorithm \cite{Chan23} in the case of $m \gg n$. This runtime also matches our lower bounds for $n = \Theta(m^\varepsilon)$ for a sufficiently small but constant $\varepsilon > 0$.

\subsection{(Non-Combinatorial) Lower Bounds for Arbitrary \texorpdfstring{$n, m$}{n, m}}
\label{sec:apx-lopsided}

In this section, we prove the first of our lower bounds for directed HuT in $d \geq 4$ dimensions, which applies for nearly arbitrary distributions of $n,m$. The reduction has three steps. 

In a first step, given an instance of the $k$-hyperclique problem in $3$-uniform hypergraphs, we encode it into a $\lfloor d/2 \rfloor$-dimensional hypercube. The used encoding was introduced by Chan \cite{Chan23}.
For each non-edge of our input hypergraph, we construct a covering of its feasible region with axis-aligned boxes\footnote{As we only consider axis-aligned boxes in this paper, we may only call these boxes.}, see \Cref{lem:apx-feasible-region-boxes} for the exact encoding and covering. 
We shall prove that our hyperclique instance is $\YES$ if and only if there is a cell inside our hypercube that lies within the feasible region of all non-edges. 
In \Cref{lem:feasible-region-decomposition} we reduce the problem of finding a cell located within all feasible regions to an instance of the Translated (Box-)Shape Problem, see \Cref{def:translated-box-shape}, with $n$ translates of common shapes consisting of $m$ boxes, for our desired values of $n, m$. 

In the second step, the resulting Translated Shape Problem is reduced to the Translation Problem with Boxes, see \Cref{lem:shape-translation-reduction}.

Lastly, in our third step we reduce the Translation Problem with Boxes in $d$-D to the Translation Problem with Orthants in $2d$ dimensions with $n$ points and $m$ orthants, see \Cref{lem:box-to-cube}.
As given in \Cref{lem:tpwo-to-tpwc}, the Translation Problem with Orthants has a straightforward reduction to directed Hausdorff under Translation, which yields our lower bound.  

\paragraph*{Step 3: From Boxes to Orthants}
We start with the last step of our lower bound proof, which is reducing the Translation Problem with Boxes to the Translation Problem with Orthants. 
We implement the natural idea to cover boxes of arbitrary aspect ratios by orthants (large hypercubes), at the cost of doubling the underlying dimension. 
As this basic idea allows additional, unintended solutions, we need a technical translation gadget that ensures soundness. 

The basic idea of our translation gadget is to scale and rotate each dimension $i$ by $45\degree$, introducing a second dimension each. We denote the resulting dimensions by $i_1, i_2$. 
Any point is scaled and rotated accordingly, so all points of $\R$ in dimension $i$ lie on the diagonal $\{(x,x)\mid x \in \R\}$ in dimensions $i_1, i_2$.
Each box, which in dimension $i$ is represented as an interval $[a, b]$, is split into two intervals $[a, \infty)$ in dimension $i_1$ and $(-\infty, b]$ in dimension $i_2$, which describes an orthant. 
The proof idea is that a translation that for all dimensions $i_1, i_2$ stays on the diagonal, is equivalently a solution to the Translation Problem with Boxes as well as to the Translation Problem with Orthants.
To make this transformation usable for our purposes, we need $2^d$ (mirrored) copies of the constructed orthants as possible translations do not need to constrict themselves to the desired diagonal.
\Cref{fig:box-to-orthant-app} gives a visual intuition of the proof in $1$-D.

\begin{figure}
    \centering
    \includegraphics[width=0.9\linewidth]{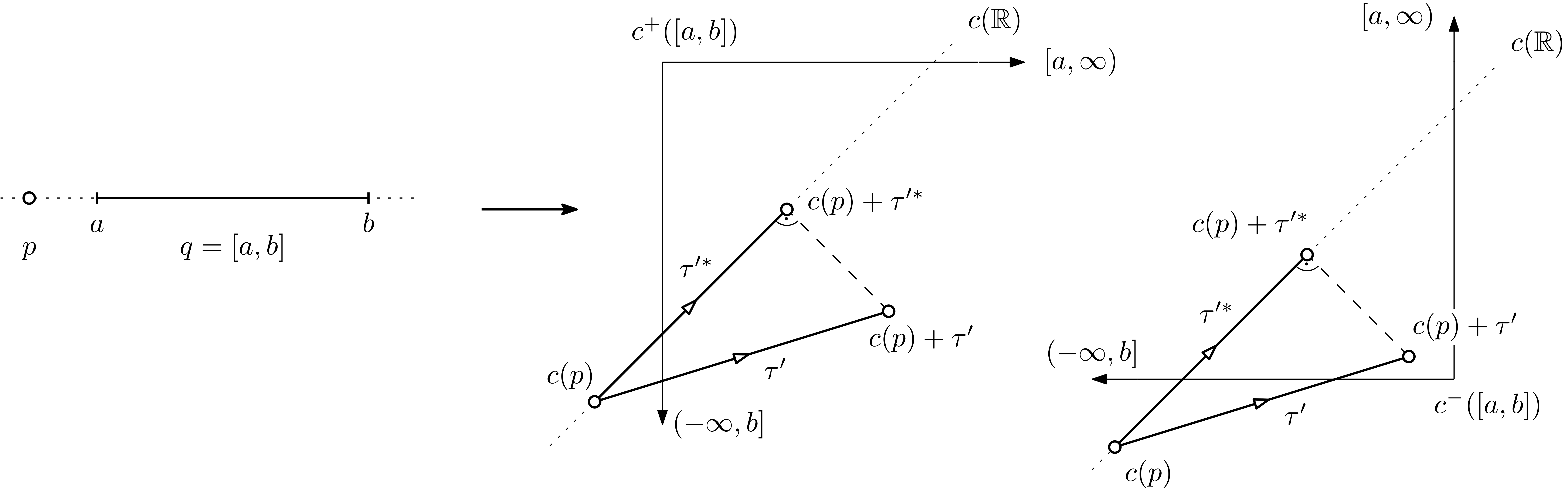}
    \caption{The reduction of a box in $1$-D, i.e., an interval $[a,b]$, to orthants in $2$-D. 
    We prove that for each feasible translation $\tau'$ there exists a feasible translation $\tau'^*$ which lies on the diagonal $c(\R)$. For that, we always assume that the translation $\tau '$ lies between the diagonal $c(\R)$ and the vertex of the orthant, i.e., for this example the orthant of $c^-$ is relevant. 
    }
    \label{fig:box-to-orthant-app}
\end{figure}

\begin{lemma}
    \label{lem:box-to-cube}
    The Translation Problem with Boxes in $d$-D with input parameters $n,m$ reduces in linear time to the Translation Problem with Orthants in $2d$-D with input parameters $n' = \Theta(n), m' = \Theta(m)$.
\end{lemma}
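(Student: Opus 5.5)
The plan is to encode each coordinate of the $d$-dimensional instance by two coordinates, which turns every box into an orthant. Diagonal translations then behave exactly like the original problem. To handle off-diagonal translations, we take $2^d$ mirrored copies of the instance as separate sub-instances of the Translation Problem with Orthants, so that any feasible translation can be projected back onto the diagonal.

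\emph{Construction.} Let $(P,\mathcal Q)$ be the Translation Problem with Boxes instance in $\R^d$. Define $c\colon\R^d\to\R^{2d}$ by sending coordinate $x_i$ to the pair $(x_{i_1},x_{i_2})=(x_i,x_i)$, so $c(\R^d)$ is the ``diagonal''. For a box $\prod_i[a_i,b_i]\in\mathcal Q$ and a sign vector $\sigma\in\{+,-\}^d$, define the orthant $O_\sigma$ coordinate by coordinate:
\begin{itemize}
\item if $\sigma_i=+$, it is $[a_i,\infty)$ in coordinate $i_1$ and $(-\infty,b_i]$ in coordinate $i_2$;
\item if $\sigma_i=-$, the two roles are swapped, i.e., $(-\infty,b_i]$ in $i_1$ and $[a_i,\infty)$ in $i_2$.
\end{itemize}
For every $\sigma$ we create one sub-instance $(P_\sigma,\mathcal Q_\sigma)$ with $P_\sigma=c(P)$ and $\mathcal Q_\sigma=\{O_\sigma : \text{box}\in\mathcal Q\}$. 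This gives $n'=2^dn$ and $m'=2^dm$, which are $\Theta(n)$ and $\Theta(m)$ since $d$ is constant.

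\emph{Target box.} Choose a target box $B$ that contains all bounded parts of all boxes lifted via $c$, i.e., all corners $(a,b)$ pairs, with some slack. Realise each orthant as a hypercube of side $\delta>\delta_0$ whose relevant vertex is the corner of the orthant. Within $B$ these hypercubes coincide with the orthants. One checks that $B$ excludes no intended solution: if $\tau$ solves the original instance, every $p+\tau$ lies in some box and hence inside the region spanned by the corners, so $c(p)+c(\tau)\in B$.

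\emph{Completeness.} If $\tau\in\R^d$ is feasible for the boxes, take $\tau'=c(\tau)$. A point of the form $(x,x)$ in coordinates $i_1,i_2$ lies in $[a_i,\infty)\times(-\infty,b_i]$ if and only if $x\in[a_i,b_i]$. The same holds for the swapped orthant. Hence for every $\sigma$, each lifted point $c(p)+\tau'$ lies in $O_\sigma$ for the box that contained $p+\tau$.

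\emph{Soundness (the main step).} Let $\tau'=(s_i,t_i)_{i\in[d]}$ be feasible for all sub-instances. Choose $\sigma_i=+$ if $s_i\le t_i$ and $\sigma_i=-$ otherwise, and put $\tau^*_i=\min(s_i,t_i)$. Fix $p\in P$. Feasibility of sub-instance $\sigma$ gives a box with $O_\sigma\ni c(p)+\tau'$.
\begin{itemize}
\item In a dimension with $\sigma_i=+$, this means $p_i+s_i\ge a_i$ and $p_i+t_i\le b_i$. Since $s_i\le t_i$, we get $p_i+s_i\in[a_i,b_i]$.
\item In a dimension with $\sigma_i=-$, symmetrically $p_i+t_i\in[a_i,b_i]$.
\end{itemize}
Thus $p+\tau^*$ lies in that box. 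Because $\tau^*$ depends only on $\tau'$ and not on $p$, it is a common feasible translation for the original instance. This is why all $2^d$ orientations are needed: we cannot know in advance on which side of the diagonal $\tau'$ lies in each dimension.

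I expect the main obstacle to be the technical bookkeeping around the target box $B$ and the condition $\delta>\delta_0$, rather than the diagonal argument itself. Off-diagonal translations may send points far away, and truncating orthants to hypercubes must neither create spurious solutions nor destroy real ones. The fix is to take $B$ tight around the lifted boxes and $\delta$ much larger than the diameter of $B$. Then, inside $B$, each hypercube agrees with its orthant, and any feasible translation has $c(p)+\tau'\in B$ for all points, so the soundness argument applies verbatim.
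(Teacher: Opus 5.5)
Your proposal is correct and follows the paper's proof essentially step by step: the same lifting $x\mapsto(x,x)$, the same $2^d$ mirrored orthant families, and the same soundness argument, which picks in each coordinate pair the orientation matching the side of the diagonal on which $\tau'$ lies and then moves $\tau'$ onto the diagonal. The only differences are cosmetic: you use $\tau^*_i=\min(s_i,t_i)$ where the paper takes the nearest diagonal point (any value between $s_i$ and $t_i$ works), and you spell out the target-box and hypercube details that the paper only sketches.
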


\begin{proof}
    \label{prf:box-to-cube}
    We define the coordinate-wise transformations 
    \begin{align*}
        c : p \mapsto (p, p), && c^+: [a,b] \mapsto [a, \infty) \times (-\infty, b], && c^-: [a,b] \mapsto (-\infty, b] \times [a, \infty). 
    \end{align*}
    For points in $\R^d$, we define the transformation $f$ as the coordinate-wise application of $c$. 
    For our analysis, we further apply $f$ to sets or intervals in which case we apply $f$ element-wise. 
    For boxes in $\R^d$, we define the transformations $f_1, ..., f_{2^d}$ as all possible combinations of the coordinate-wise application of $c^+, c^-$.
    Note that the images of $f_i$ are orthants.

    Given an instance of the Translation Problem with Boxes with $P$ being the point set and $\mathcal{B}$ being the set of boxes. 
    We construct the sub-instances $(f(P), f_1(\mathcal B)), ..., (f(P), f_{2^d}(\mathcal B))$ of the resulting Translation Problem with Orthants instance, the target box is simply the bounding box of $\mathcal B$. We claim that \[ 
        \exists {\tau \in \R^d} ~ \forall p \in P ~ \exists B \in \mathcal B: p + \tau \in B 
        \iff \exists {\tau' \in \R^{2d}}~\forall i\in [2^d] ~ \forall p' \in f(P) ~\exists C \in f_i(\mathcal B) : p' + \tau' \in C. 
    \]
    
    First, we see that if a translation for the right side is restricted to $f(\R^d)$, we get the desired equivalence irrespective of which $i \in [2^d]$, i.e., function $f_1,...,f_{2^d}$, we use. Formally, we have 
    \begin{equation}
        \label{eq:box-orthants}
        \forall p\in P, ~ B \in \mathcal B, ~ i \in [2^d]: \quad p + \tau^* \in B \iff f(p) + f(\tau^*) \in f_i(B)
    \end{equation}
    because $f(p) + f(\tau^*) \in f(\R^d)$ and $f_i(B) \cap{f(\R^d)} = f(B)$ for any $i \in [2^d]$, i.e., $f_i(B)$ covers in each dimension the same interval on $f(\R)$ as $B$ on $\R$.

    Let $\tau^*$ be a translation that fulfills the left side of the equivalence. We directly get that $f(\tau^*)$ is a feasible translation for the right side.

    For the other direction, we prove that if there is a fulfilling translation $\tau' \in \R^{2d}$ for the right side, there also is a fulfilling translation $\tau'^* \in f(\R^d)$.
    Assuming that the fulfilling translation $\tau' = f(\tau^*)$ for some $\tau^* \in \R^d$, by \Cref{eq:box-orthants} we get that $\tau^*$ fulfills the left side.
   
    Let $\tau'^* = f(\tau^*)$ be the point closest to $\tau'$ on $f(\R^d)$.
    We choose $i \in [2^d]$ such that $f_i$ in dimension $\ell \in [d]$ applies $c^+$ if $\tau'_{2\ell-1} \leq \tau'_{2\ell}$, and $c^-$ otherwise, i.e., we assume that $\tau'$ always lies on the same side of $f(\R)$ where the vertices of the orthants $f_i(\mathcal B)$ are. 
    Assume that in dimension $\ell$ we have $\tau'_{2\ell-1} \leq \tau'_{2\ell}$, otherwise switch the roles of $2\ell$ and $2\ell -1$ as well as $c^+, c^-$ in the following.
    
    For a point $p'=f(p) \in P'$ let $C = f_i(B)$ such that $p' + \tau' \in C$.
    Assume that $C$ has the vertex $(a', b')$ in dimensions $2\ell-1, 2\ell$.
    Since $C=f_i(B)$ covers $p' + \tau'$ and we have assumed that $f_i$ applies $c^+$ in dimension $\ell$, we know that $a' \leq (p' + \tau')_{2\ell -1}$ and $(p' + \tau')_{2\ell} \leq b'$.
    As we have chosen $\tau'^*$ to be the closest point to $\tau'$ on $f(\R^d)$ and in dimension $\ell$ we know $\tau'_{2\ell-1} \leq \tau'_{2\ell}$, we get that \[
        \tau'_{2\ell-1} \leq \tau'^*_{2\ell-1} = \tau'^*_{2\ell} \leq \tau'_{2\ell}
    \] 
    and can thus combine 
    \begin{equation}
        \label{eq:box-orthants-2}
        a' \leq (p' + \tau')_{2\ell-1} \leq (p' + \tau'^*)_{2\ell-1} = (p' + \tau'^*)_{2\ell} \leq (p' + \tau')_{2\ell} \leq b',
    \end{equation}
    proving that when projected to dimensions $2\ell-1, 2\ell$, $p'$ is covered by $C$ when translated by $\tau'^*$ instead of $\tau'$.
    As we have chosen $i$ such that \Cref{eq:box-orthants-2} holds for all dimensions $2\ell-1, 2\ell$ for $\ell \in [d]$, we know $p' + \tau'^* \in C$.
    By \Cref{eq:box-orthants} and $p' = f(p), \tau'^* = f(\tau^*), C = f_i(B)$ as defined, we get that $p + \tau^* \in B$,
    thus $\tau^*$ is a fulfilling translation to the left side of our equivalence.

    Note that the constructed instance $P', \mathcal C$ has size $|P'| = 2^d|P|$ and $|\mathcal{C}| = 2^d |\mathcal B|$. Since $2^d = \Theta(1)$, the result follows.
\end{proof}

\paragraph*{Step 2: The intermediate Translated Box-Shape Problem}
We introduce an intermediate problem, the Translated Shape Problem, similar to the one of \cite[Problem 4]{Chan23}. In contrast to \cite{Chan23}, it does not operate with hypercubes but arbitrary boxes. Still, the reduction from the Translated Shape Problem to the Translation Problem with Boxes is analog to \cite[Lemma 3]{Chan23}.

\begin{definition}
    \label{def:translated-box-shape}
    \textsf{\emph{(Translated (Box-)Shape Problem).}} 
    Let $\mathcal Z$ be a set of shapes in $\mathbb{R}^d$, composed of a union of boxes. Let $m$ be the total number of boxes over all shapes of $\mathcal Z$. 
    Given a set $\mathcal S$ of $n$ objects where each object is the translate of some shape in $\mathcal Z$, decide whether $\bigcap_{S \in \mathcal S} S \neq \emptyset$. 
\end{definition}

\begin{lemma}
    \label{lem:shape-translation-reduction}
    The Translated (Box-)Shape Problem with $n$ translation objects and a total of $m$ boxes reduces to the Translation Problem with Boxes with $|P| = \Theta(n)$ and $|\mathcal Q| = \Theta(m)$.
\end{lemma}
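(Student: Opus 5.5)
The plan follows Chan's Lemma 3: rewrite the shape-intersection question as a translation question, then separate the different shapes in space so that one translation can be checked against all of them at once. Write each object as $S = Z_S + v_S$, where $Z_S \in \mathcal Z$ is its shape and $v_S\in\R^d$ its offset. Then $x \in \bigcap_{S} S$ holds iff $x - v_S \in Z_S$ for all $S\in\mathcal S$. With $\tau = x$ and point $p_S = -v_S$, this says $p_S + \tau \in Z_S$ for every $S$. So the Translated Shape Problem already asks for a translation putting each point into the union of boxes of its own shape.

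What remains is the constraint that the point $p_S$ must land in \emph{its own} shape $Z_S$, not in a box of some other shape. First I would reduce to bounded translations. Every $Z\in\mathcal Z$ is a finite union of boxes, so any feasible $x$ lies in a bounded region: take some object $S_0$ and the bounding box $R$ of $S_0$. Let $D$ be larger than the diameter of $R$ plus the diameters of all shapes and the spread of all offsets. Then index the distinct shapes $Z_1,\dots,Z_k$ (with $k\le m$) and place shape $Z_j$ at position $Z_j + j\cdot(3D,0,\dots,0)$. For each object $S$ of shape $Z_j$ we use the point $p_S = -v_S + j\cdot(3D,0,\dots,0)$. The box set $\mathcal Q$ is the union of all boxes of all shifted shapes, so it has $m$ boxes, and $|P| = n$.

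\textbf{Soundness.} Any feasible $\tau$ has to map $p_{S_0}$ into some box. The main obstacle is showing that this forces $\tau$ to lie in a window of width $O(D)$ around the correct region. Only then can a point of class $j$ reach only boxes of class $j$, because the classes are $3D$ apart while every point within a class is within $D$ of its shape. To enforce the window without depending on $S_0$'s shape class, I would add, as in \Cref{lem:tpwo-to-tpwc}, one extra anchor point together with a single box equal to $R$ placed at a fresh far-away slot $0$. This restricts $\tau$ to $R$ at the cost of $O(1)$ extra input. After that, for $\tau\in R$ each $p_S+\tau$ lies within distance $D$ of the class-$j$ slot and farther than $D$ from every other class. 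Hence $p_S+\tau$ lies in a box of $\mathcal Q$ iff $-v_S+\tau\in Z_j$, i.e.\ iff $\tau\in S$.

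\textbf{Completeness.} Any $x\in\bigcap S$ lies in $R$, because $R$ is the bounding box of $S_0$. So $\tau=x$ is feasible, including for the anchor point. Together the two directions give equivalence, with $|P|=n+1=\Theta(n)$ and $|\mathcal Q|=m+1=\Theta(m)$, computed in linear time.
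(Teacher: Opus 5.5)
Your plan is the same as the paper's (rewrite $x\in\bigcap_S S$ as $x-v_S\in Z_S$, then separate the shape classes along the first axis). However, the soundness argument has two holes, both at the step you yourself flag as the main obstacle.

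The first hole is in the choice of $D$. It bounds only the diameter of $R$, the diameters of the shapes and the spread of the offsets. It does not control where the shapes lie relative to each other, so the claim that $p_S+\tau$ stays within $D$ of its own slot fails. Here is a concrete counterexample in $1$-D. Take one object of each of the shapes $Z_1=[0,1]$, $Z_2=[3D,3D+1]$, $Z_3=[-3D,-3D+1]$, all with offset $0$, and let $S_0$ be the object of shape $Z_1$. The quantities entering your $D$ are the same whatever the positions, so $D$ is a fixed constant and the shapes may be placed using it. Then $\bigcap S=\emptyset$. Yet for $\tau=0\in R=[0,1]$, the points $3D,6D,9D$ lie in $Z_1+3D$, $Z_3+9D$ and $Z_2+6D$ respectively, and the anchor is satisfied. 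This is a spurious YES-instance. You must normalize positions as well. The paper does this by scaling and translating the shapes \emph{and} the offsets into $[0,1/2]^d$. Equivalently, you can shift each shape so its bounding box starts at the origin and absorb the shift into the offsets.

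The second hole is the sentence ``this restricts $\tau$ to $R$'', which is only asserted. The anchor point is just another point. Nothing stops it from being covered by a box of some class $j$ instead of its own box, which is exactly the cross-class problem the anchor was meant to solve.

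The paper closes this with an extremal argument that makes the anchor unnecessary. It puts everything in $[0,1/2]^d$, shifts class $i$ by $u_i=(10i,0,\dots,0)$, and indexes only the shapes used by some object, as $Z_1,\dots,Z_\ell$. If a point of class $i$ is covered by a box of class $j\neq i$, this forces $\tau_1\ge 10$ or $\tau_1\le -9$. Then the points of class $\ell$ (respectively class $1$) leave the first-coordinate range $[10,10\ell+1/2]$ of all boxes, so the translation is infeasible.

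If you want to keep your anchor, you need the same kind of argument for it. For example, put its slot at distance much larger than $kD$ beyond the whole arrangement. Then covering the anchor point by any class box shifts every other point out of the range of all boxes. With these two repairs, your completeness direction and the rest of your soundness argument go through.
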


\begin{proof}
\label{prf:shape-translation-reduction}
    Let $\mathcal Z$ be the set of shapes and let the translations objects of $\mathcal S$ consist of a translation vector and a pointer to the shape.
    We rescale our shapes and translation objects so that they completely lie within the $[0, 1/2]^d$ hypercube and thus any point in $\bigcap_{S \in \mathcal S}S$ lies within $[0, 1]^d$. 

    We construct an instance of the Translation Problem with Boxes with inputs $P, \mathcal Q$.
    Let $Z_1, ..., Z_\ell$ be the shapes of $\mathcal{Z}$ that are used by at least one translation object and let $\mathcal B(Z_i)$ be the boxes of a shape $Z_i$.
    Let $u_i = (10i, 0, ..., 0) \in \R^d$ be a translation vector spatially separating the shapes of $\mathcal{Z}$.
    For each $B \in \mathcal B(Z_i)$, we add the box $B + u_i$ to $\mathcal Q$.
    For each $(t, Z_i) = S \in \mathcal S$, we add the point $u_i-t$ to $P$.

    For correctness, observe the following relation between a feasible solution of the Translated Shape Problem to the transformed Translation Problem on $P, \mathcal Q$.  
    \begin{align*}
        \tau \in\bigcap_{S \in \mathcal{S}}S
        \iff& \forall S=(t, Z_i) \in \mathcal S ~ \exists B \in\mathcal B(Z_i): \tau \in B + t \\
        \iff &\forall S=(t, Z_i) \in \mathcal S ~ \exists B \in \mathcal B(Z_i): u_i - t + \tau \in B + u_i \tag{*}\\
        \iff & \forall p \in P ~ \exists B' \in \mathcal Q: p + \tau \in B'.
    \end{align*}
    For the reverse direction assuming a feasible solution $\tau$, note that $u_i - t = p \in P$ and $B + u_j = B' \in \mathcal Q$ with $p + \tau \in B'$ implies $u_i = u_j$ as otherwise there is an uncovered point.
    For the sake of contradiction assume $u_j > u_i$, for $u_j < u_i$ the same argumentation applies with $(t, Z_1)$.
    First, $u_j > u_i$ implies $(\tau)_1 > 1$ as $B, t \subseteq [0, 1/2]^d$. 
    For the same reason, a point $p \in P$ corresponding to a translation object $(t, Z_\ell)$ would not be mapped inside any box because $(p + \tau)_1 = 10\ell - (t)_1 + (\tau)_1 > 10\ell+1/2$ but $\bigcup_{B' \in \mathcal Q} (B)_1 \subseteq [10, 10\ell + 1/2]$.
    
    As the remaining steps follow from definitions, we proved the equivalence statements.
\end{proof}

Lastly, we give a simple observation allowing, by repeated application, to compose arbitrary many individual Translated (Box-)Shape Problems without additional overhead. 

\begin{observation}
    \label{lem:shape-translation-composition}
    Given two instances of the Translated Shape Problem with inputs $\mathcal{S}_1, \mathcal{Z}_1$ and $\mathcal{S}_2, \mathcal{Z}_2$ of size $n_1, m_1$ and $n_2, m_2$ in a shared universe. 
    Deciding whether $\bigcap_{S \in \mathcal S_1 \cup \mathcal S_2}S \neq \emptyset$ reduces to a single instance of the Translated Shape Problem with $\mathcal{S} = \mathcal{S}_1 \cup \mathcal S_2$, $\mathcal Z = \mathcal{Z}_1 \cup \mathcal Z_2$ of size $n \leq n_1 + n_2$ and $m \leq m_1 + m_2$.
\end{observation}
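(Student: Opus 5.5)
The plan is to read the combined problem off the definition of the Translated (Box-)Shape Problem (\Cref{def:translated-box-shape}) and check that the data needed to specify it is exactly the union of the two inputs. First we record what "shared universe" means here. Both instances live in the same $\R^d$ and use the same coordinate system. Each object $S \in \mathcal S_1 \cup \mathcal S_2$ is therefore a subset of the same space, a pair (translation vector, pointer to a shape). This lets us set $\mathcal S = \mathcal S_1 \cup \mathcal S_2$ and $\mathcal Z = \mathcal Z_1 \cup \mathcal Z_2$ without any re-embedding. Every pointer of an object in $\mathcal S_j$ refers to a shape in $\mathcal Z_j \subseteq \mathcal Z$. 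Hence $(\mathcal S, \mathcal Z)$ is a syntactically valid instance, and each $S$ is still a translate of a shape in $\mathcal Z$.

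Next comes correctness, which is immediate. The new instance asks whether $\bigcap_{S\in\mathcal S} S \neq \emptyset$. This is literally the question $\bigcap_{S \in \mathcal S_1 \cup \mathcal S_2} S \neq \emptyset$, since the set of objects, and hence the set being intersected, is identical. No geometric argument is needed because nothing is transformed.

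Finally we bound the sizes. We have $n = |\mathcal S_1 \cup \mathcal S_2| \le n_1 + n_2$. The total box count of $\mathcal Z$ is at most $m_1 + m_2$. Equality fails only when objects or shapes coincide in the two inputs, and deduplicating them only helps. The reduction takes linear time: we concatenate the lists and, if desired, merge duplicate shapes. Repeated application composes any number of instances with additive size.

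The only point needing care is the size accounting. We must make sure that the cost parameter $m$ counts boxes per distinct shape in $\mathcal Z$, not boxes per object. Otherwise merging could not be free, and the bound $m \le m_1+m_2$ relies on exactly this convention from \Cref{def:translated-box-shape}.
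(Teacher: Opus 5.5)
Your argument is correct and matches the paper, which treats this as an immediate observation from the definition and gives no proof. Taking $\mathcal S=\mathcal S_1\cup\mathcal S_2$ and $\mathcal Z=\mathcal Z_1\cup\mathcal Z_2$ poses literally the same intersection question, and both size parameters are at most additive. One small overstatement: the bound $m\le m_1+m_2$ does not depend on counting boxes per shape rather than per object, because counting per object would give the same additive bound when the object sets are united.
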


\paragraph*{Step 1: From $k$-Hyperclique to the Translated Box-Shape Problem}
We now turn to the remaining reduction, which follows the basic idea presented in \Cref{sec:apx-higher-dimension-overall}.
Given a hypergraph on $n$ vertices, recall that the feasible region of a non-edge $\bar e$ are the cells within the hypercube that correspond to $k$-tuples which do not include $\bar e$. 
Note that we defined the complement, the forbidden region of $\bar e$, as the cells whose $k$-tuple includes $\bar e$.
A cell that lies in the feasible regions of all non-edges encodes a $k$-hyperclique.
In the course of the next three lemmas, we give a reduction from finding a cell that lies in the feasible regions of all non-edges to the Translated (Box-)Shape Problem with a desired distribution of parameters $n, m$.
As our approach will also provide novel combinatorial lower bounds, we may more generally consider a $u$-uniform hypergraph with $u \in \{2,3\}$.

We first give the encoding of the hypergraph to a $d$-dimensional hypercube and show that the feasible regions of the non-edges can be covered efficiently. 
The same encoding for a graph can exemplarily be seen in \Cref{fig:basic-reduction-idea-app}, note that for illustrative reasons we there have not covered the feasible but the forbidden regions.

\begin{figure}
    \centering
    \includegraphics[width=0.8\linewidth]{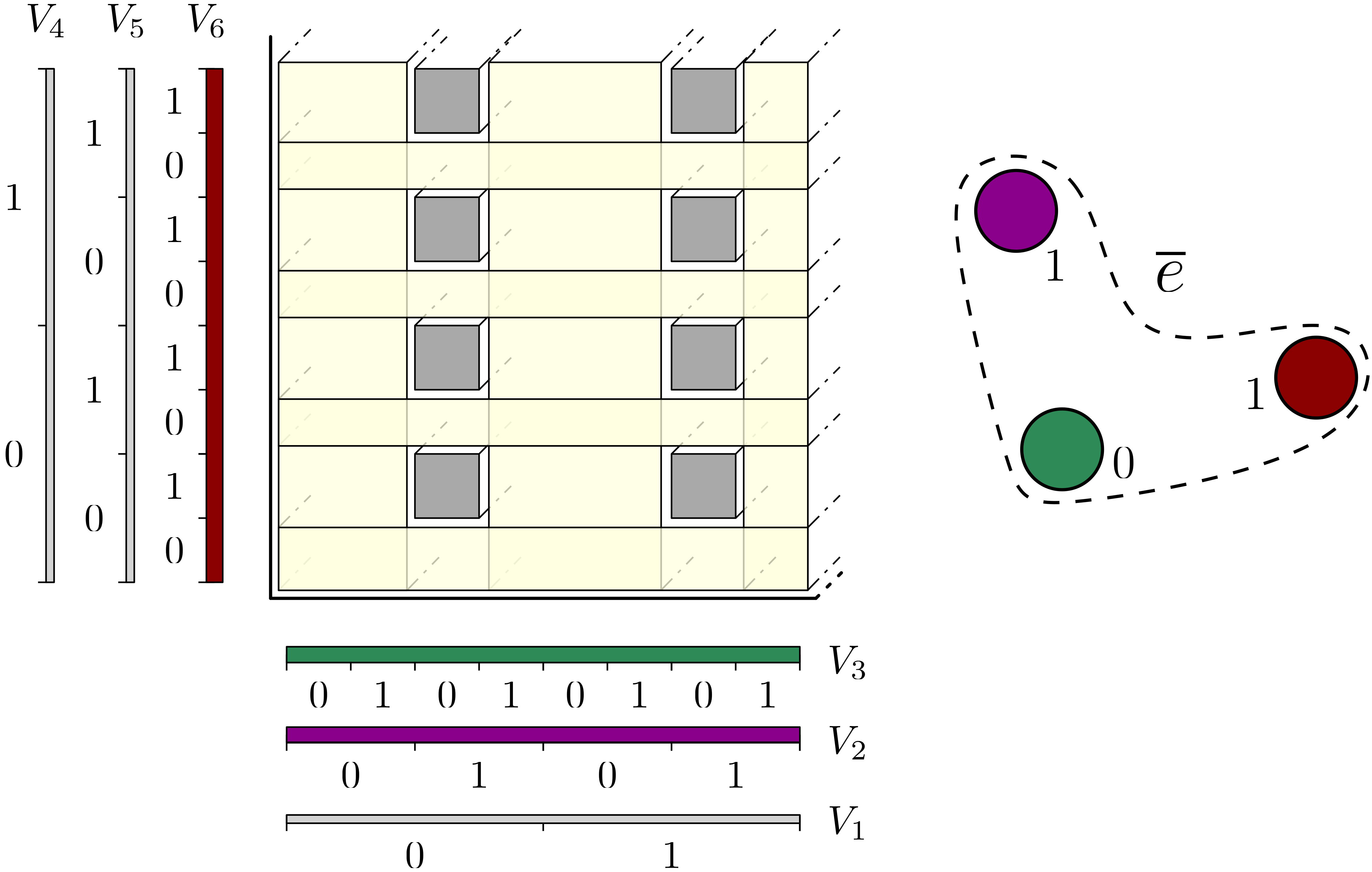}
    \caption{An example of the feasible region of a non-edge $\bar e \in V_2 \times V_3 \times V_6 \setminus E$ covered by boxes, depicted in yellow, within a $d$-dimensional hypercube projected to $2$ dimensions.}
    \label{fig:feasible-region-encoding-app}
\end{figure}

\begin{lemma}
    \label{lem:apx-feasible-region-boxes}
    Let $d \in \N$ and $u \in \{2,3\}$ be constant and let $k\in \N$ be constant with $d | k$.
    Given a $u$-uniform hypergraph $H = (V, E)$ with $|V| = n$, where $V$ is partitioned into $k$ color classes $V = V_1 \cup V_2 \cup ... \cup V_k$. 
    There is an encoding of $V_1 \times ... \times V_k$ to a $d$-dimensional hypercube so that for each $\bar e \in {V \choose u} \setminus E$ the feasible region of $\bar e$ can be covered by $O(n^{k/d-1})$ boxes, which can be computed in $O(n^{k/d-1})$ time.
\end{lemma}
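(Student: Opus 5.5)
We use Chan's encoding: group the $k$ color classes into $d$ blocks of $k/d$ consecutive classes, and let coordinate $j\in[d]$ encode the tuples of block $j$, i.e.\ $V_{(j-1)k/d+1}\times\dots\times V_{jk/d}$. We assume each color class has at most $n$ vertices, and fix a bijection of $[n]$ with the integer points $\{0,\dots,n-1\}$. A $k/d$-tuple $(v_1,\dots,v_{k/d})$ of block $j$ is mapped to the unit interval $[x,x+1)$ with $x=\sum_{s=1}^{k/d} \mathrm{idx}(v_s)\, n^{k/d-s}$. This is the lexicographic order, so the most significant class of a block is its first one. A cell of the hypercube $[0,n^{k/d})^d$ is then a product of unit intervals and corresponds bijectively to a $k$-tuple in $V_1\times\dots\times V_k$. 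We may assume each $\bar e$ is colorful; otherwise it is contained in no $k$-tuple, so its feasible region is the whole hypercube and a single box covers it.

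The forbidden region of $\bar e=\{a_1,\dots,a_u\}$ is the set of cells whose tuple has the prescribed vertex at each of the $u$ classes of $\bar e$. Its complement, the feasible region, is therefore the union over $i\in[u]$ of the sets $F_i$ of cells whose tuple differs from $a_i$ in the class $c_i$ of $a_i$. Since $u\le 3$ is constant, it suffices to cover each $F_i$ by $O(n^{k/d-1})$ boxes. The set $F_i$ is a product: all coordinates other than the one $j$ containing $c_i$ are unrestricted, so it is enough to cover, in the single coordinate $j$, the set $U\subseteq[0,n^{k/d})$ of block tuples whose entry at class $c_i$ is not $a_i$. We write this as an interval union and extend it by full intervals in the other coordinates.

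Let $s$ be the position of $c_i$ in block $j$. Fix the entries at the $s-1$ more significant positions; this gives $n^{s-1}$ choices. For each choice, the admissible values at position $s$ form at most two contiguous ranges, $\mathrm{idx}<\mathrm{idx}(a_i)$ and $\mathrm{idx}>\mathrm{idx}(a_i)$. All less significant positions are free, so each range is a single interval of length a multiple of $n^{k/d-s}$. Hence $U$ is a union of at most $2n^{s-1}\le 2n^{k/d-1}$ intervals. Each is found by direct enumeration in $O(1)$ time, giving the stated $O(n^{k/d-1})$ time overall.

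The main concern is the worst case $s=k/d$, which gives exactly the bound $n^{k/d-1}$. We cannot do better by placing $c_i$ most significantly, since a block contains several classes and only one of them can be first. The counting above shows the bound holds uniformly for every position, so no reordering of the classes is needed.
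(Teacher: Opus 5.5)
Your proof is correct and follows the paper's argument. Both use Chan's lexicographic encoding, write the feasible region of $\bar e$ as the union over its $u$ vertices of the cells whose entry in that vertex's class differs from it, and cover each such set by the $O(n^{s-1})$ gaps between occurrences of the vertex in the projection onto its dimension, extended to full intervals in the remaining coordinates. Your De Morgan/product formulation handles several vertices of $\bar e$ sharing a dimension more cleanly than the paper's separate remark. Your count of $2n^{s-1}$ intervals, versus the paper's $n^{s-1}+1$ (which merges adjacent gaps), makes no difference to the bound.
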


\begin{proof}
\label{prf:feasible-region-boxes}
    For our proof we assume hypergraphs of uniformity $u=3$, the argument is analogous for $u=2$, i.e., graphs. 
    A slightly simplified example of the encoding and the forbidden regions of the non-edges can be seen in \Cref{fig:basic-reduction-idea-app}.
    By scaling, we assume that the $d$-dimensional hypercube has a side-length of $n^{k/d}$.
    Given the numbers $x_1, ..., x_{k/d} \in [n]$, we use the canonical function \[\ind: (x_1, ..., x_{k/d}) \mapsto x_1\cdot n^{k/d-1} + x_2\cdot n^{k/d-2} + ... + x_{k/d}\] to address an integer point in the range $[0, n^{k/d}-1]$ 
    We denote each unit hypercube as a cell, its relative origin can be addressed by $(\ind(X_1), ..., \ind(X_d))$ where $X_1, ..., X_d \in [n]^{k/d}$.

    We first give the encoding of $H$ to the hypercube which is the same as used in \cite{Chan23}. 
    We evenly assign the $k$ color-classes to the $d$ dimensions, i.e., the color classes $V_1, ..., V_{k/d}$ are assigned to the first dimension, the classes $V_{k/d + 1}, ..., V_{2k/d}$ to the second dimension, etc.
    Now a $k$-tuple $(v_1, ..., v_k) \in V_1 \times ... \times V_k$ is bijectively assigned to the cell at \[\cel((v_1,...,v_k)) = (\ind(v_1, ..., v_{k/d}), \ind(v_{k/d+1}, ..., v_{2k/d}), ..., \ind(v_{(d-1)k/d}, ..., v_{k})).\]

    For any non-edge $\bar e \in V^u \setminus E$, we want to cover the feasible region of $\bar e$ using (axis-aligned) boxes $\mathcal B$, as can be seen in \Cref{fig:feasible-region-encoding-app}.
    Assume that $\bar e = (v_i, v_j, v_\ell) \in V_i \times V_j \times V_\ell \setminus E$ for some $i<j<\ell \in [k]$ and let $d_i, d_j, d_\ell$ such that the color class $V_i, V_j, V_\ell$ was assigned to dimension $d_i, d_j, d_\ell$, respectively.
    We project the forbidden region of $\bar e$ separately to the dimensions $d_i, d_j, d_\ell$ so that we can cover the feasible, non-forbidden, region on this projection using intervals. 
    We extend the resulting intervals to the other $d-1$ dimensions, the resulting boxes, denoted as $\mathcal{B}_i, \mathcal{B}_j, \mathcal{B}_\ell$, respectively, are added to $\mathcal B$.
    More specifically, assume that $d_i$ is distinct from $d_j, d_\ell$. The boxes we add to $\mathcal B$ based on the projection to dimension $d_i$ are
    \begin{align*}
        \mathcal{B}_i =~& 
        \{      [0,n^{k/d}]^{d_i-1} \times [\ind(0, 0, ..., 0), \ind(0, ... 0, v_i, 0, ..., 0)) \times [0, n^{k/d}]^{d-d_i}, \\
        & [0,n^{k/d}]^{d_i-1} \times [\ind(0, ..., 0, v_i+1, 0, ..., 0), \ind(0, ..., 1, v_i, 0, ..., 0)) \times [0, n^{k/d}]^{d-d_i}, \\
        & \qquad \vdots \\
        & [0,n^{k/d}]^{d_i-1} \times [\ind(n, ..., n-1, v_i+1, 0, ..., 0), \ind(n, ..., n, v_i, 0, ..., 0) \times [0, n^{k/d}]^{d-d_i}, \\
        & [0,n^{k/d}]^{d_i-1} \times [\ind(n, ..., n, v_i+1, 0, ..., 0), \ind(n,..., n)] \times [0, n^{k/d}]^{d-d_i} \}.
    \end{align*}
    We see that $|\mathcal B_i| = n^{i \bmod k/d} + 1= O( n^{k/d - 1} )$ and these can be listed in time $O( n^{k/d - 1} )$.
    In case that $d_i$ is not distinct, we get fewer boxes that, when projected to $d_i$, include all of $\mathcal B_i$ in case $d_i$ is distinct: 
    Let $v_i$ be the only vertex of $\bar e$ being encoded into $d_i$, let $I$ be the set of indices such that there exists a cell whose $k$-tuple includes $\bar e$ and that in dimension $d_i$ has an index taken from $I$.
    Assuming that a second vertex $v_j$ of $\bar e$ is encoded in $d_i$, the analog set $\hat I$ is a subset of $I$ as all relevant indices have to match $\bar e$.

    We formally prove that $\mathcal B$ covers exactly the feasible region of $\bar e = (v_i, v_j ,v_\ell)$.
    Given a cell $C$ of the feasible region and let $\cel^{-1}(C) = (w_1, ..., w_k)$.
    As $C$ is within the feasible region, one of $w_i \neq v_i, w_j \neq v_j$, or $w_\ell \neq v_\ell$ must hold.
    Assume that w.l.o.g. $w_i < v_i$, the argument for $w_i > v_i$ or $w_j \neq v_j, w_\ell \neq v_\ell$ holds analogously. 
    Then $C$ is covered by the box $[0,n^{k/d}]^{d_i-1} \times [\ind(w_1, ..., w_{i-1}-1, v_i+1, 0, ..., 0), \ind(w_1, ..., w_{i-1}, v_i, 0, ..., 0)) \times [0, n^{k/d}]^{d-d_i}$ which is en element of $\mathcal B_i$ or, if $d_i$ is not distinct, is contained in another box in $\mathcal B_i$.
    
    Conversely, let $C$ with $\cel^{-1}(C) = (w_1, ..., w_k)$ be a covered cell, thus it is included in $\mathcal B_i, \mathcal B_j$, or $\mathcal B_\ell$. For it being included, it must have $w_i \neq v_i$, $w_j \neq v_j, w_\ell \neq v_\ell$, respectively, so $(w_1,..., w_k)$ does not include $\bar e$ and it belongs to the feasible region.

    As $\mathcal{B} = \mathcal B_i \cup \mathcal B_j \cup \mathcal B_\ell$ and $\mathcal B_i, \mathcal B_j, \mathcal B_\ell = O( n^{k/d - 1} )$, we have $\mathcal{B} =  O( n^{k/d - 1} )$.
\end{proof}

Using \Cref{lem:apx-feasible-region-boxes}, we can already turn a $k$-(hyper-)clique instance into a Translated (Box-)Shape instance with $n = O(|V|^u), m = O( |V|^{k/d - 1 + u})$ as we can treat each feasible region of the $O(|V^u|)$ non-edges as an independent shape with $O(|V|^{k/d - 1})$ boxes each. 

However, we wish to extend the reduction to different, more balanced, parameter choices.
Such a rebalancing of parameters would maintain the resulting lower bound for directed HuT if the product $n m$ stays the same.
Unfortunately, we are not able to prove such a result. 
However, by additionally decomposing the hypercube in an intermediate step into smaller parts, we get non-trivial lower bounds for directed HuT that apply for arbitrary distributions of $n,m$.

We first start with \Cref{lem:hypercube-decomposition-1D} allowing the decomposition of our $d$-dimensional hypercube into identical slices.
We directly continue with \Cref{lem:feasible-region-decomposition} generalizing this decomposition, so that we partition our hypercube into identical sub-regions.
Each of these can efficiently encode the feasible region of any non-edge, and we can collectively turn them into an instance of the Translated (Box-)Shape Problem with our desired parameter distribution.

\begin{lemma}
    \label{lem:hypercube-decomposition-1D}
    Let $\lambda \in [0,1]$, $u \in \{2,3\}$ be constant, and $d,k \in \N$ be constants with $d|k$.
    Let $H=(V, E)$ be a $u$-regular hypergraph. 
    For any non-edge $\bar e \in {V \choose u} \setminus E$ and for any dimension $d_i \in [d]$ we can decompose the hypercube where $H$ is encoded in, see \Cref{lem:apx-feasible-region-boxes}, into $O(|V|^{\lambda k/d})$ many slices in $d_i$ that are identical under translation, i.e., there exists a translation such that two slices match, including the feasible and forbidden regions of $\bar e$.
    Note that the slices will not necessarily be continuous but have at most $(|V|^{k/d})^{1-\lambda}$ many disconnected parts. The slices can be computed in time $O(|V|^{k/d})$. 
\end{lemma}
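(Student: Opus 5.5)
We prove the statement by slicing along the \emph{least significant digits} of the index in dimension $d_i$, rather than cutting into contiguous intervals. By the encoding of \Cref{lem:apx-feasible-region-boxes}, the coordinate of a cell in dimension $d_i$ is $\ind(x_1,\dots,x_{k/d})$, a number written in base $n$ with $k/d$ digits, where $n=|V|$. Assume first that $\lambda k/d$ is an integer $s$; otherwise we round and absorb the constant-factor loss into the $O(\cdot)$. For every $y\in[n]^{s}$ we define the slice $\Sigma_y$ as the union of all cells whose $d_i$-coordinate has its last $s$ digits equal to $y$. These slices partition the hypercube and there are $n^{s}=O(|V|^{\lambda k/d})$ of them. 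Each slice is a union of $n^{k/d-s}=(|V|^{k/d})^{1-\lambda}$ unit-width layers, one for each prefix $(x_1,\dots,x_{k/d-s})$. Hence each slice has at most that many connected parts, as claimed. We translate $\Sigma_y$ onto $\Sigma_{y'}$ by the vector $(\ind(0,\dots,0,y')-\ind(0,\dots,0,y))\cdot e_{d_i}$. This is a translation by less than $n^{s}$, so every layer moves into the gap beneath the next prefix layer and the slice structure is preserved.

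The main point is that these translations preserve the feasible and forbidden regions of $\bar e$, and this is where the choice of slices matters. The forbidden region is described coordinatewise. In dimension $d_i$, a cell is compatible with $\bar e$ if and only if the digits of its index corresponding to the vertices of $\bar e$ assigned to $d_i$ take the prescribed values. With an arbitrary choice of slicing digits this is not translation invariant. We fix it by reordering the color classes assigned to dimension $d_i$ so that none of the at most $u\le 3$ classes containing a vertex of $\bar e$ sits in the last $s$ digit positions. Such a reordering is just a relabelling of the bijection $\cel$, and the coverings of \Cref{lem:apx-feasible-region-boxes} remain valid under it. If the constraint cannot fit into the first $k/d-s$ positions, we instead place those classes in the sliced positions. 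In that case the slices meeting the forbidden region are exactly those whose $y$ matches $\bar e$, and we restrict the ``identical under translation'' claim to the classes of slices with equal $y$ restricted to those positions. I expect the statement intends the first situation, so I would handle it and note the other.

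Once the last $s$ digits carry no constraint from $\bar e$, membership of a cell in the forbidden region depends only on the prefix digits in $d_i$ and on the other coordinates. Translating by a multiple of the last-digit value leaves the prefix digits unchanged. So the translation maps forbidden cells to forbidden cells and feasible cells to feasible cells, and the slices are identical including both regions. Computing the slices means enumerating the $n^{k/d}$ index values in dimension $d_i$ and bucketing them by their suffix, which takes $O(|V|^{k/d})$ time. The step I consider the main obstacle is justifying that the digit ordering can be chosen so that the constraint of $\bar e$ avoids the sliced digits. This must hold uniformly across non-edges if the slices are later shared, and I would first try to fix the order once so that the sliced digits belong to color classes designated in advance.
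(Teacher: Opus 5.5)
There is a genuine gap, but it is repairable within your framework. Slicing by the last $s$ digits of the $d_i$-index is translation-invariant with respect to $\bar e$ only if no class of $\bar e$ occupies one of those positions. You enforce this by permuting the color classes assigned to $d_i$, i.e.\ by changing $\cel$ depending on $\bar e$. That is not allowed. The lemma refers to the single fixed encoding of \Cref{lem:apx-feasible-region-boxes}, and this matters downstream: in \Cref{lem:feasible-region-decomposition} the feasible regions of all $O(|V|^u)$ non-edges are intersected in one hypercube, so every cell must encode the same $k$-tuple for every non-edge. Your suggested remedy, fixing the order once so that the sliced digits belong to classes designated in advance, fails because non-edges can involve any color class. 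The fallback case, where you weaken ``identical'' to ``identical within classes of equal $y$'', does not prove the statement.

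What is missing is that you never need to move classes. You only need to choose which digits to slice on, and that choice may depend on $\bar e$, since each non-edge gets its own shape and translates in \Cref{lem:feasible-region-decomposition}. Take any set $S$ of $s$ digit positions of $d_i$ that host no class of $\bar e$, and let $\Sigma_y$ be the cells whose $d_i$-digits on $S$ equal $y$. Since $\ind$ is linear in the digits, replacing $y$ by $y'$ shifts the $d_i$-coordinate by $\sum_{q\in S}(y'_q-y_q)|V|^{k/d-q}$, independently of all other digits. This translation maps $\Sigma_y$ bijectively onto $\Sigma_{y'}$ and fixes every digit outside $S$, so it preserves the forbidden and feasible regions of $\bar e$. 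Each $\Sigma_y$ consists of $|V|^{k/d-s}$ unit layers, hence has at most that many parts. (Conjugating your permutation back to the original encoding gives exactly this.)

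This needs $s\le k/d-u_i$, where $u_i$ is the number of vertices of $\bar e$ encoded in $d_i$. That restriction is unavoidable, not an artifact. If $u_i\ge 1$, then $N$ mutually identical slices must each contain the same, necessarily positive, number of the $|V|^{k/d-u_i}$ unit layers of $d_i$ that meet the forbidden region, so $N\le |V|^{k/d-u_i}$ for any decomposition. Hence the lemma, read with $\Theta(|V|^{\lambda k/d})$ slices, is only meaningful for $\lambda k/d\le k/d-u$. This regime covers its use in \Cref{thm:lopsided-directed-LB} ($\lambda'<1/3$, $k$ large), so the fallback should be dropped, not patched.

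Also, rounding $\lambda k/d$ is not a constant-factor loss. It costs up to a factor $|V|$ in either the slice count or the parts bound, so a fractional part needs its own treatment, e.g.\ splitting one more free digit into residue classes after padding the color classes.

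Once repaired, your route is a more explicit alternative to the paper's proof. The paper inducts on the number of $\bar e$-vertices in $d_i$. It splits the digits at the first $\bar e$-class into a higher part, cut into contiguous groups, and a lower part, sliced recursively. It then builds each slice by interleaving the $j$-th sub-slices of the lower copies.
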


\begin{proof}
    We prove the statement by induction over the number of vertices of $\bar e$ encoded in $d_i$.

    Assume only one vertex of $\bar e$ is encoded in $d_i$, let this vertex be $v_i$ from the color-class $V_i$. As we have seen in \Cref{lem:apx-feasible-region-boxes}, in the projection of the hypercube to $d_i$ we can group the resulting intervals by whether some forbidden region of $\bar e$ is projected onto them or not. 
    These intervals are regular, because they depend on whether $v_i$ is encoded in this interval, which only depends on the index for $V_i$. 
    Thus, we can form up to $|V|^{\lambda k/d}$ many, continuous groups, each of them having the same size. 
    At the same time, these groups have the same sequence of unit intervals regarding whether a forbidden region of $\bar e$, projected to $d_i$, is mapped to the interval, or not.

    Otherwise, assume there is more than one vertex of $\bar e$ encoded in dimension $d_i$. 
    Let $V_i$ be the first color class encoded in $d_i$ that has a vertex $v_i$ in $\bar e$, i.e, there is no other color class $V_j$ that has a vertex in $\bar e$ with $j < i$ and $\lfloor j/(k/d) \rfloor = \lfloor i/(k/d) \rfloor$.
    We first consider the complete hypercube that in dimension $d_i$ only encodes the vertices from $V_{i+1}, ..., V_{d_ik/d}$, that is, we "remove" the first $i \bmod k/d$ many color classes from the encoding and disregard $v_i$ in $\bar e$. Thus dimension $d_i$ only has a range of $[0, |V|^{k/d - (i\bmod k/d)}]$. We denote it as the lower hypercube.
    By the inductive argument, we can decompose the lower hypercube into $(|V|^{k/d - (i \bmod k/d)})^\lambda$ many continuous slices in dimension $d_i$ that are identical under translation.
    Next, we consider the complete hypercube that in dimension $d_i$ only encodes the vertices from $V_{(d_i-1)k/d + 1}, ..., V_{i}$, which we denote as the higher hypercube. 
    There is only one vertex of $\bar e$ encoded in $d_i$ and by the induction basis, we can decompose the higher hypercube into $O((|V|^{i \bmod k/d})^\lambda)$ many slices.
    
    To combine these slices, we do the following: Each of the slices of the higher hypercube contains some intervals where a forbidden region of $\bar e$ is mapped to, called relevant intervals, and some intervals where this is not the case, called irrelevant.
    Each relevant interval spans the interval where the vertices from $V_{i+1}, ..., V_{d_ik/d}$ are mapped to, i.e., it spans a lower hypercube. 
    Thus, we can decompose a relevant interval further into $(|V|^{k/d - (i \bmod k/d)})^\lambda$ identical sub-slices using the decomposition of the lower hypercube.
    For an irrelevant interval, we can decompose it into $(|V|^{k/d - (i \bmod k/d)})^\lambda$ many continuous sub-slices. Note that these sub-slices are identical as well since there is no forbidden region mapped to it altogether.
    
    For each of the slices of the higher hypercube, we now construct $(|V|^{k/d - (i \bmod k/d)})^\lambda$ new slices. For the $j$-th such constructed slice, we append alternatingly the $j$-th sub-slice from the relevant / irrelevant intervals. As each of the relevant (irrelevant) sub-slices is identical to the other relevant (irrelevant) sub-slices, respectively, each newly constructed slice is identical. 
    Note that a newly constructed slice is thus split into at most $(|V|^{i \bmod k/d})^{1-\lambda} \leq (|V|^{k/d - 1})^{1-\lambda}$ many disconnected parts, as this is the number of different relevant/irrelevant intervals within a slice of the higher hypercube.
    In total, we create $(|V|^{i \bmod k/d})^\lambda\cdot (|V|^{k/d - (i \bmod k/d)})^\lambda = (|V|^{k/d})^\lambda$ many identical slices for dimension $d_i$, which fulfills the lemma's statement. 
\end{proof}

\begin{figure}
    \centering
    \includegraphics[width=\linewidth]{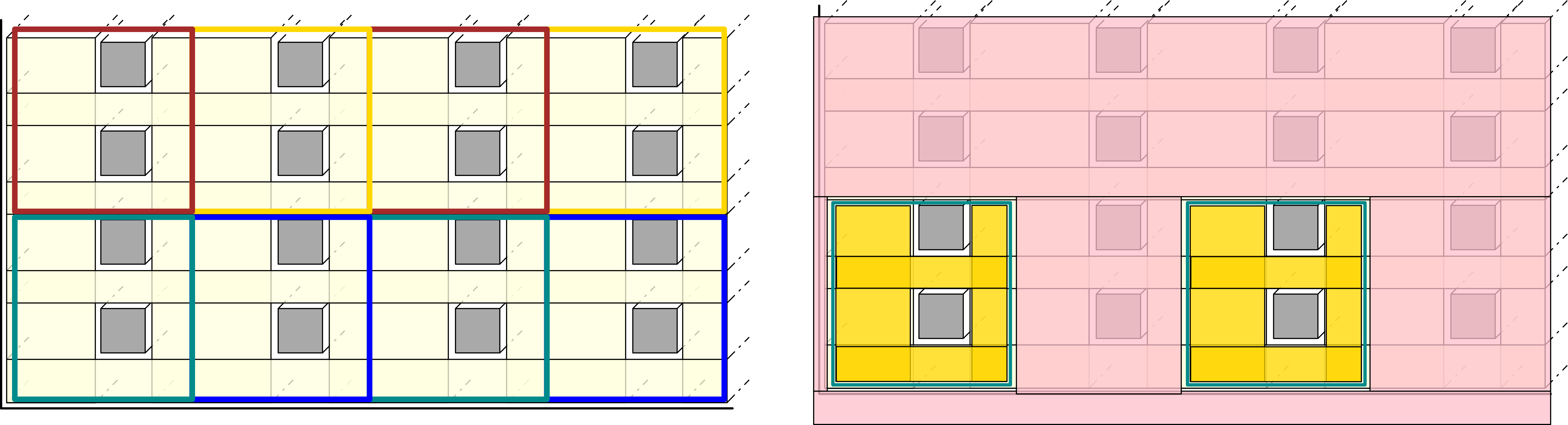}
    \caption{Left: The decomposition of a (skewed) hypercube in $2$-D into four identical sub-regions, marked in red, yellow, green, and blue. Boxes in equal color denote a common sub-region, here in each row there are two sub-regions that are split into two parts each.\\
    Right: The boxes used to cover the feasible region of the green sub-regions, in yellow, as well as the boxes covering the complement of the green sub-region, marked in pink.}
    \label{fig:hypercube-decomposition-app}
\end{figure}

\begin{lemma}
    \label{lem:feasible-region-decomposition}
    Let $\lambda \in [0,1]$, $u \in \{2,3\}$ be constant, and $d,k \in \N$ be constants with $d|k$.
    Given a $u$-regular hypergraph $H = (V,E)$ and its encoding to a $d$-dimensional hypercube as constructed in \Cref{lem:apx-feasible-region-boxes}.
    We reduce the problem of finding a $k$-(hyper-)clique in $H$ in $O(|V|^{k/d})$ time to an instance of the Translated (Box-)Shape Problem with $n = O(|V|^{u\lambda k/d + u})$ translates of common shapes consisting of $m = O(|V|^{(1-\lambda)k/d + u})$ boxes. 
\end{lemma}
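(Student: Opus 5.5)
We plan to decompose the $d$-dimensional hypercube of \Cref{lem:apx-feasible-region-boxes} into $N = O(|V|^{\lambda k})$ sub-regions. For every fixed non-edge, all sub-regions must be identical under translation, including the feasible and forbidden regions of $\bar e$. Concretely, we apply \Cref{lem:hypercube-decomposition-1D} separately in every dimension $d_i\in[d]$, which splits dimension $d_i$ into $|V|^{\lambda k/d}$ slices. The product of the slices over all $d$ dimensions gives the sub-regions. A $k$-hyperclique exists iff some cell lies in the feasible region of every non-edge. Equivalently, there is a sub-region $R$ and a cell of $R$ (in local coordinates) feasible for all non-edges.

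The key issue is that the slice structure of \Cref{lem:hypercube-decomposition-1D} depends on $\bar e$. It depends only on which color classes are hit by $\bar e$ and in which dimensions, however, and not on the concrete vertices. Therefore we make the decomposition canonical. We fix one common "reference" sub-region $R_0$, a box of side length $|V|^{(1-\lambda)k/d}$ per dimension. For each non-edge $\bar e$ and each sub-region $R$, we take the translation mapping $R$ onto $R_0$. Restricted to $R$, the feasible region of $\bar e$ then becomes a fixed shape $Z_{\bar e}\subseteq R_0$ after translation. Since sub-regions are unions of at most $(|V|^{k/d})^{1-\lambda}$ disconnected parts per dimension, covering this shape with boxes via the interval structure of \Cref{lem:apx-feasible-region-boxes} costs $O(|V|^{(1-\lambda)k/d})$ boxes. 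The number of non-edges is $O(|V|^u)$, so $m=O(|V|^{(1-\lambda)k/d+u})$ in total. To pin the search to a cell of a single sub-region, we choose for the $j$-th sub-region $R_j$ a translation object. This object is the shape $Z_{\bar e}$ shifted back onto $R_j$, together with boxes covering the complement of $R_j$ in the hypercube (as in \Cref{fig:hypercube-decomposition-app}, the pink boxes). The complement cover also only needs $O(|V|^{(1-\lambda)k/d})$ boxes per dimension, because $R_j$ is a product of slices. Hence each shape is (sub-region cover of feasible region) $\cup$ (complement of that sub-region). Its translates over all $N$ sub-regions yield $n=O(|V|^u\cdot |V|^{\lambda k})$ objects.

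This count is larger than the claimed $n=O(|V|^{u\lambda k/d+u})$, so we must reduce it. The fix is that the shape of $\bar e$ only "sees" the at most $u$ dimensions containing the classes of $\bar e$. In the remaining dimensions, $Z_{\bar e}$ is the full extent $[0,|V|^{k/d}]$, and we do not decompose those dimensions for this non-edge. Then only $(|V|^{\lambda k/d})^{u}$ translates per non-edge are needed. This gives $n=O(|V|^{u\lambda k/d+u})$, and the feasible-region box count remains $O(|V|^{(1-\lambda)k/d})$ per shape. Correctness means: a point $\tau$ in the intersection of all objects lies, for every non-edge $\bar e$ and every slice-combination $R$ of the relevant dimensions, either outside $R$ or in the feasible part of $\bar e$ in $R$. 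Since the slices partition the hypercube, this is exactly feasibility for all non-edges, i.e., a $k$-hyperclique by \Cref{lem:apx-feasible-region-boxes}. We combine all shapes with \Cref{lem:shape-translation-composition}, plus one object restricting $\tau$ to the hypercube. The running time $O(|V|^{k/d})$ for computing the slices comes from \Cref{lem:hypercube-decomposition-1D}. The output size itself is linear in $n+m$.

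The main obstacle is the complement boxes for non-contiguous slices. A slice consists of up to $(|V|^{k/d})^{1-\lambda}$ parts, whose gaps must be covered by intervals. We also need that the shapes attached to the same $\bar e$ really are translates of one shape, which requires the gaps to be identical across slices. \Cref{lem:hypercube-decomposition-1D} guarantees this only up to translation of the whole slice. We would first verify that its construction yields the same gap pattern, since the relevant and irrelevant sub-slices are appended in the same order. If that holds, we use the gap intervals, extended to boxes, in the complement.
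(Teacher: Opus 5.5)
Your proposal is correct and follows essentially the paper's route. Per non-edge, you restrict to the at most $u$ dimensions carrying $\bar e$ and slice each of them via \Cref{lem:hypercube-decomposition-1D} into $(|V|^{k/d})^\lambda$ translation-identical slices. You then build one shape from the complement of a sub-region plus the $O(|V|^{(1-\lambda)k/d})$ feasible-region boxes, translate it to every sub-region origin, add one object bounding the hypercube, and compose via \Cref{lem:shape-translation-composition}.

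Your worry about gap patterns is already settled by \Cref{lem:hypercube-decomposition-1D}, which asserts that whole slices match as sets under translation, so their gaps match too. One small correction: the complement boxes should cover the complement of the sub-region in all of space, not only inside the hypercube. Otherwise the single shape does not work for every translate, and the bounding object is exactly what makes these outside boxes harmless.
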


\begin{proof}
\label{prf:feasible-region-decomposition}
    We consider the encoding of the hypergraph to the $d$-dimensional hypercube as given in \Cref{lem:apx-feasible-region-boxes}.
    As a $k$-(hyper-)clique is a $k$-tuple that does not include any non-edge, a cell that lies in the feasible regions of all non-edges encodes a $k$-(hyper-)clique, and vice versa.
    Thus, we have to show that we can encode the feasible regions of all non-edges in an instance of the Translated Shape Problem with the desired parameters.
    For this reduction, we show that the feasible region of any non-edge $\bar e$ can be represented by a Translated Shape instance with $O(|V|^{u\lambda k/d})$ translates of a shape with $O(|V|^{(1-\lambda)k/d})$ boxes.
    By \Cref{lem:shape-translation-composition}, we can compose the resulting shapes and translates of the $O(|V|^u)$ many non-edges into a single instance with the desired parameters.
    
    We first note that by adding a single translate and box, spanning exactly the hypercube we consider in \Cref{lem:apx-feasible-region-boxes}, i.e., $[0, |V|^{k/d}]^d$, only allows solutions to the Translated Shape Problem that lie within that hypercube. Thus, we can add boxes covering space outside our hypercube, without changing the resulting instance's feasibility. 

    Second, we see that a single non-edge is a $u$-tuple and thus its included vertices are only encoded in up to $u$ dimensions. 
    It suffices to consider the hypercube projected to these $u$ dimensions, as any cells from the feasible (forbidden) region can be extended to the other $d-u$ dimensions, while still being completely contained in the feasible (forbidden) region, respectively.
    
    For our construction, we first construct $((|V|^{k/d})^\lambda)^u$ many sub-regions that decompose the $u$-dimensional hypercube:
    By \Cref{lem:hypercube-decomposition-1D}, we can decompose each single dimension into $(|V|^{k/d})^\lambda$ many (not necessarily continuous) slices that are identical under translation, i.e., when translated so that both slices have the same relative origin, the feasible and forbidden regions of $\bar e$ within these slices match. 
    We combine these one-dimensional slices independently in all dimensions to get the desired decomposition into sub-regions.
    Note that the resulting sub-regions are indeed identical: Identical slices imply identical cells at the same relative position within a slice (belonging to the feasible or forbidden region of $\bar e$), so even if we disregard one dimension the given slices stay identical under translation.
    The number of sub-regions we construct thereby is $((|V|^{k/d})^\lambda)^u$.
 
    For the resulting Translated (Box-)Shape Problem we construct a single shape $S$ handling one sub-region. 
    We first add boxes to this shape that cover the complement of the sub-region, which are $2u$ many boxes to cover the space outside the sub-region's bounding box and at most $u(|V|^{k/d})^{1-\lambda}$ many boxes to cover the space between the disconnected parts, which is bounded by \Cref{lem:hypercube-decomposition-1D}. 
    Since all sub-regions are identical under translation and we have $(|V|^{k/d})^\lambda$ slices per dimension, the sub-regions have a (possibly discontinuous) side-length of $(|V|^{k/d})^{1-\lambda}$.
    Within a sub-region, \Cref{lem:apx-feasible-region-boxes} tells us that the feasible region of $\bar e$ constricted to the sub-region can be covered by $O(|V|^{(1-\lambda) k/d})$ many boxes. We add these boxes to our constructed shape.
    See \Cref{fig:hypercube-decomposition-app} for a visual example.
    We note that $|S| \leq O(|V|^{(1-\lambda) k/d})$. 

    For the translation objects, each translating the single constructed shape $S$, we use the translation vectors that mark the relative origin of each of the sub-regions, which are $((|V|^{k/d})^\lambda)^u$ many.

    For correctness, we see that each cell lies in exactly one sub-region. For a sub-region, we see that all cells that do not lie within this sub-region are covered. Thus, whether a cell lies within all translation objects solely depends on whether it lies within the boxes of its sub-region, which are constructed by \Cref{lem:apx-feasible-region-boxes}. As these boxes exactly cover the feasible region of $\bar e$, the solution set to our constructed Translation (Box-)Shape Problem instance equals the feasible region of our considered non-edge.
\end{proof} 
    
For our reduction to directed Hausdorff under Translation with a fixed parameter distribution we now assemble all necessary steps.
Note that since all parts also hold for graphs, i.e., hypergraphs of uniformity $2$, we not only get novel non-combinatorial but also combinatorial lower bounds. 
These combinatorial lower bounds are better for larger values of $\lambda$ compared to the non-combinatorial bounds as our balancing of parameters is susceptible to the uniformity of the input graph.

\begin{theorem}
    \label{thm:lopsided-directed-LB}
    Let $\varepsilon > 0, \lambda \in (0, 1], d \in \N_{\geq4}$ be constant. We get the following lower bounds.
    \begin{itemize}
        \item 
        Under the $k$-clique hypothesis, there is no $O((nm)^{\lfloor \frac{d}{2} \rfloor \frac{\lambda +2}{2\lambda+2} - \varepsilon})$ combinatorial algorithm for directed Hausdorff under Translation in $d$-D with $n = \Theta(m^\lambda)$.
  
        \item 
        Under the $3$-uniform $k$-hyperclique hypothesis, there is no $O((nm)^{\lfloor \frac{d}{2} \rfloor \frac{\lambda + 3}{3 \lambda + 3} - \varepsilon})$ (non-combinatorial) algorithm for directed Hausdorff under Translation in $d$-D with $n = \Theta(m^\lambda)$.
    \end{itemize}
\end{theorem}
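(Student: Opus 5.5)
The plan is to chain the reductions of this section with $d' \coloneqq \lfloor d/2 \rfloor \ge 2$ as the dimension of the hypercube encoding, and to pick the decomposition parameter of \Cref{lem:feasible-region-decomposition} so that the resulting instance has $n = \Theta(m^\lambda)$. I treat both bullets at once with uniformity $u=2$ (graphs, $k$-clique hypothesis, combinatorial) or $u=3$ ($3$-uniform $k$-hyperclique hypothesis, general algorithms). All steps are combinatorial, so the combinatorial statement follows as well.

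\textbf{The chain of reductions.} Let $H=(V,E)$ be a $k$-partite $u$-uniform hypergraph with $N=|V|$, where $k$ is a large constant divisible by $d'$, to be fixed at the end. For a parameter $\lambda'\in[0,1]$, \Cref{lem:feasible-region-decomposition} in dimension $d'$ yields, in time $O(N^{k/d'})$, a Translated (Box-)Shape instance with
\[
n_0 = O(N^{u\lambda' k/d' + u}) \text{ translates} \quad\text{and}\quad m_0 = O(N^{(1-\lambda')k/d' + u}) \text{ boxes}.
\]
Then I apply \Cref{lem:shape-translation-reduction} to obtain a Translation Problem with Boxes in $d'$-D of the same sizes up to constants. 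Next, \Cref{lem:box-to-cube} gives a Translation Problem with Orthants in $2d'$-D, and \Cref{lem:tpwo-to-tpwc} gives a Translation Problem with Hypercubes, which is the decision version of directed HuT in $2d'$ dimensions. If $d$ is odd, I pad with one extra coordinate: every point of $P$ and $Q$ gets value $0$ in it, and the translation is forced into a bounded range by an extra point/hypercube gadget as in \Cref{lem:tpwo-to-tpwc}. This does not change feasibility. All reductions are linear in the output size, and $O(N^{k/d'})$ is negligible for large $k$.

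\textbf{Choice of $\lambda'$ and the exponent computation.} I set $\lambda'$ so that the exponents satisfy $u\lambda' k/d' + u = \lambda\bigl((1-\lambda')k/d' + u\bigr)$. Solving gives
\[
\lambda' = \frac{\lambda k/d' + \lambda u - u}{(u+\lambda)k/d'},
\]
which lies in $[0,1]$ for $k$ large since $\lambda>0$, and tends to $\lambda/(u+\lambda)$ as $k\to\infty$. Then $n=\Theta(m^\lambda)$ up to the rounding discussed below, and
\[
nm = N^{\frac{k}{d'}\left((u-1)\lambda' + 1\right) + 2u}, \qquad (u-1)\lambda' + 1 \;\to\; \frac{u(\lambda+1)}{u+\lambda}.
\]
Now suppose there is an algorithm running in time $O\bigl((nm)^{d'\frac{\lambda+u}{u\lambda+u}-\varepsilon}\bigr)$. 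Its exponent in $N$ is $k(1+O(1/k)) + O(1) - \varepsilon\cdot\Omega(k/d')$, which is at most $k-\varepsilon'$ once $k$ is chosen large enough in terms of $\varepsilon, d, \lambda$. This contradicts the respective (hyper)clique hypothesis. For $u=2$ the exponent $d'\frac{\lambda+u}{u\lambda+u}$ is $d'\frac{\lambda+2}{2\lambda+2}$, and for $u=3$ it is $d'\frac{\lambda+3}{3\lambda+3}$, as claimed.

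\textbf{Main obstacle.} I expect the main difficulty to be bookkeeping rather than a new idea. First, \Cref{lem:hypercube-decomposition-1D} needs $N^{\lambda' k/d'}$ to be an integer count of slices compatible with the mixed-radix index structure. I would handle this by padding color classes with isolated dummy vertices to a suitable size; this only changes $N$ by a constant factor, or I would restrict $\lambda'$ to rationals with small denominator and approximate. Second, I must make sure the additive $+u$ terms and the $\Theta$-slack still give $n=\Theta(m^\lambda)$ exactly. I would fix this by adding dummy points or dummy far-away boxes to the smaller side to balance exactly. Third, I need the $2^{d'}$ blow-up and the padding dimension to stay constant-factor, which they do.
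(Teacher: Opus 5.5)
Your proposal is correct and is essentially the paper's proof. It uses the same chain \Cref{lem:feasible-region-decomposition}, \Cref{lem:shape-translation-reduction}, \Cref{lem:box-to-cube}, \Cref{lem:tpwo-to-tpwc} in hypercube dimension $\lfloor d/2\rfloor$, your $\lambda'$ coincides exactly with the paper's choice $\frac{\lambda}{u+\lambda}-\frac{(u-u\lambda)\lfloor d/2\rfloor}{(u+\lambda)k}$, and the large-$k$ exponent estimate is the same; your zero-coordinate padding for odd $d$ just replaces the paper's ``decrease $d$ by one'' (the extra gadget is not needed, since a zero coordinate only imposes the satisfiable constraint $|\tau_d|\le\delta$), and your care about integral slice counts and exact balancing of $n$ against $m^\lambda$ fills in details the paper leaves implicit.
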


\begin{proof}
\label{prf:lopsided-directed-LB}
    We here prove the second lower bound statement, where we reduce from the $3$-uniform $k$-hyperclique problem and $n =\Theta(m^\lambda)$. The other statement follow analogously, the necessary parameter for $\lambda'$ is given at the end of the proof.
    
    Assume $d$ to be even, otherwise we decrease $d$ by one for the following construction.
    Choose $\varepsilon' > 0$ and $k$ large enough so that $2k\varepsilon/d > 3d + \varepsilon'$, $d|k$, and
    \[\lambda' = \frac{\lambda}{3 + \lambda} - \frac{(3-3\lambda)d}{2k(3+\lambda)}>0.\]
    
    By \Cref{lem:apx-feasible-region-boxes,lem:feasible-region-decomposition}, we reduce the $3$-uniform $k$-hyperclique problem on a hypergraph with vertex set $V$ to the Translated Shape Problem in $d/2$-D with $n = \Theta(|V|^{3\lambda'k/(d/2)+3})$ translation objects and a total of $m = \Theta(|V|^{(1-\lambda')k/(d/2) + 3})$ boxes.
    It is easy to verify that $n = \Theta(m^\lambda)$.
    By \Cref{lem:shape-translation-reduction}, we reduce it to the Translation Problem with Boxes in $d/2$-D, which by \Cref{lem:box-to-cube} we further reduce to the Translation Problem with Orthants in $d$-D with input sizes $|P_C| = \Theta(n), |\mathcal C| = \Theta(m)$.
    The Translation Problem with Orthants directly reduces to the directed Hausdorff under Translation Problem, see \Cref{sec:translation-problems}, maintaining the parameter size.

    Given an algorithm running in time $O((nm)^{\lfloor \frac{d}{2} \rfloor \frac{\lambda + 3}{3 \lambda + 3} - \varepsilon})$ solving directed Hausdorff under Translation in $d$-D, we get an algorithm solving $3$-uniform $k$-hyperclique running in time
    \begin{align*}
        &\left(|V|^{3\lambda'k/(d/2) + 3} \cdot |V|^{(1-\lambda')k/(d/2) + 3}\right)^{\lfloor \frac{d}{2} \rfloor \frac{\lambda + 3}{3 \lambda + 3} - \varepsilon} \\
        = &~ \left( |V|^{2(1+2\lambda')k/d+6} \right)^{\frac{d}{2}  \frac{\lambda + 3}{3 \lambda + 3} - \varepsilon} \\
        = &~ \left( |V|^{2k/d + 4\left(\frac{\lambda}{3 + \lambda} - \frac{(3-3\lambda)d}{2k(3+\lambda)}\right)k/d+6} \right)^{ \frac{d}{2} \frac{\lambda + 3}{3 \lambda + 3} - \varepsilon} \\
        \leq &~ \left( |V|^{2k/d + 4\left(\frac{\lambda}{3 + \lambda}\right)k/d+6} \right)^{ \frac{d}{2} \frac{\lambda + 3}{3 \lambda + 3} - \varepsilon} \\
        \leq &~ \left( |V|^{k\frac{\lambda + 3}{3 \lambda + 3} - 2k\varepsilon/d}\right) \cdot 
            \left(|V|^{ 4\left(\frac{\lambda}{3 + \lambda} \right)( \frac{d}{2} \frac{\lambda + 3}{3 \lambda + 3}) k/d} \right)\cdot 
            \left(|V|^{6( \frac{d}{2} \frac{\lambda + 3}{3 \lambda + 3})} \right) \\
        \leq &~ \left( |V|^{k\frac{\lambda + 3}{3 \lambda + 3} - 2k\varepsilon/d}\right) \cdot 
            \left(|V|^{\left(\frac{2 \lambda}{3\lambda + 3} \right) k} \right) \cdot 
            \left(|V|^{3d} \right) \\
        = &~ \left( |V|^{k - 2k\varepsilon/d}\right) \cdot 
            \left(|V|^{3d} \right) \\
        < &~ |V|^{k - \varepsilon'},
    \end{align*}
    noting that we have chosen $k, \varepsilon'$ such that $2k\varepsilon/d > 3d + \varepsilon'$.

    For the combinatorial lower bound statement of this theorem, we set $\lambda' = \frac{\lambda}{2 + \lambda} - \frac{(2-2\lambda)d}{2k(2+\lambda)}$ accordingly.
    All other steps of the proof are exactly the same, using the respective statement from \Cref{lem:feasible-region-decomposition}.
\end{proof}

We note that \Cref{thm:lopsided-directed-LB} also yields (non-)combinatorial lower bounds for the case $m \in \Theta(n^\lambda)$ for $\lambda \in [0,1]$. These bounds however are always worse than the combinatorial $(nm)^{d/2 -o(1)}$ lower bound via a reduction from $k$-clique given in \cite{Chan23} and its respective non-combinatorial $(nm)^{d\omega/6 - o(1)}$ lower bound.

\subsection{Tight Non-Combinatorial Lower Bound in \texorpdfstring{$3$-D}{3-D}}
\label{sec:apx-HuT-3D-new}

In contrast to our lower bounds of \Cref{thm:lopsided-directed-LB}, Chan \cite{Chan23} has proven a tight lower bound that applies for distributions with $m \in O(n)$ in $3$-D.
We here give a non-combinatorial version of this $(nm)^{d/2-o(1)}$ lower bound that applies for $m \in O(n^{1/2})$.

However, his approach, reducing from $k$-clique, is not directly transferable to a reduction from $k$-hyperclique (\Cref{def:colorful-k-hyperclique}).
Instead, we will combine his ideas with the work of Künnemann \cite{kunnemann2022TightNoncombinatorialConditional} using \emph{prefix covering designs}.
As done in \Cref{sec:apx-lopsided}, we encode a $k$-tuple of vertices inside a $d$-dimensional hypercube. However, we use a redundant encoding, that is, each color class is encoded more than once to our hypercube.

We first prove that there exist so-called prefix covering sequences which later defines our encoding. 
While a similar encoding was already used by \cite{kunnemann2022TightNoncombinatorialConditional}, our analysis shows an additional crucial property. 
\begin{observation}[Prefix Covering Designs]
    \label{obs:good-pcd-designs}
	The sequences $s_1, s_2, s_3$, called \emph{prefix covering sequences}, defined over $\{1,...,k\}$ which we rename to $\{x_1, ..., x_{k/3}, y_1, ..., y_{k/3}, z_1, ..., z_{k/3}\}$, are of length $4k/9$ each and given as 
    \begin{align*}
        s_1 &= (x_1, x_2, ..., x_{k/3}, ~ y_{k/3}, ..., y_{2k/9+1}) \\
        s_2 &= (y_1, y_2, ..., y_{k/3}, ~ z_{k/3}, ..., z_{2k/9+1}) \\
        s_3 &= (z_1, z_2, ..., z_{k/3}, ~ x_{k/3}, ..., x_{2k/9+1}).
    \end{align*}
    For each $3$-tuple $e \in {k\choose 3}$, there exists $i, j, \ell \in \{0, ..., k\}$ such that there exists a (1) balanced prefix of $s_1, s_2, s_3$ of (2) combined length at most $2k/3+1$ that (3) covers $e$, formally,  
    \begin{enumerate}
        \item $\min(i,j,\ell) + \max(i,j,\ell) \leq 4k/9 $
        \item $i + j + \ell \leq 2k/3 + 1$
        \item $e \subseteq s_1[1..i] \cup s_2[1..j] \cup s_3[1..\ell].$
    \end{enumerate}
\end{observation}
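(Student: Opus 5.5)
We prove the observation by a direct case analysis on how the three elements of $e$ are distributed among the groups $X=\{x_1,\dots,x_{k/3}\}$, $Y=\{y_1,\dots,y_{k/3}\}$ and $Z=\{z_1,\dots,z_{k/3}\}$. Throughout we assume $9 \mid k$, so that all lengths are integers; this is compatible with the divisibility constraints on $k$ used elsewhere.

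First we record where each symbol sits in the sequences. The symbol $x_a$ appears in $s_1$ at position $a$. If $a \geq 2k/9+1$, it also appears in the tail of $s_3$, at position $k/3 + (k/3 - a + 1) = 2k/3 - a + 1$. The analogous statements hold for $y_b$ (in $s_2$, and in the tail of $s_1$) and for $z_c$ (in $s_3$, and in the tail of $s_2$). The map $x\to y\to z\to x$ permutes the sequences cyclically as $s_1\to s_2\to s_3\to s_1$, and conditions (1)--(3) are invariant under this cyclic relabeling of $(i,j,\ell)$. We may therefore rotate freely.

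\emph{Case 1: $e$ meets at most two groups.} Up to rotation, $e \subseteq X\cup Y$. Choose $i$ as the largest index of an $x$-symbol in $e$, $j$ as the largest index of a $y$-symbol in $e$ (or $0$ if there is none), and $\ell = 0$. Every element of $e$ then lies in a head prefix $s_1[1..i]$ or $s_2[1..j]$, so (3) holds. Since $i,j \leq k/3$, we get $i+j+\ell \leq 2k/3$ for (2). Since $\min = 0$ and $\max \leq k/3 \leq 4k/9$, condition (1) holds as well. If $e$ meets only one group, this is the same argument with two of the indices zero.

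\emph{Case 2: $e=\{x_a,y_b,z_c\}$ meets all three groups.} This is the only place where the tails are needed, because the naive choice $(a,b,c)$ may have sum up to $k$.

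If $\max(a,b,c) \leq 2k/9$, take $(i,j,\ell)=(a,b,c)$. Then $i+j+\ell \leq 6k/9 = 2k/3$, and $\min+\max \leq 4k/9$, as required.

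Otherwise, rotate so that $a = \max(a,b,c) > 2k/9$. We drop $s_1$ entirely and cover $x_a$ through the tail of $s_3$. Set
\[
i=0,\qquad j=b,\qquad \ell=\max\bigl(c,\; 2k/3-a+1\bigr).
\]
The prefix $s_3[1..\ell]$ contains $z_c$ (since $\ell \geq c$) and contains $x_a$ (since $\ell \geq 2k/3-a+1$), and $s_2[1..j]$ contains $y_b$, so (3) holds. For (1), note $\min = 0$, and $\max \leq \max(k/3,\; 2k/3 - a + 1) \leq 4k/9$, using $a \geq 2k/9+1$. For (2), if $\ell = c$ then $j+\ell = b+c \leq 2k/3$. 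If instead $\ell = 2k/3-a+1$, then $j+\ell = b + 2k/3 - a + 1 \leq 2k/3+1$, because $b \leq a$ by the choice of $a$ as the maximum.

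The key step is recognizing that in Case 2 one should discard the sequence of the group with the largest index and reach that symbol through the reversed tail. The maximality of $a$ is exactly what makes the sum bound (2) work, with the additive $+1$. The remaining work is routine bookkeeping of the positions in the three sequences.
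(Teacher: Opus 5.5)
Your proof is correct and follows essentially the same route as the paper. The paper handles the two-group case by head prefixes ($i=j=k/3$, $\ell=0$). In the three-group case with maximal index above $2k/9$, it drops the sequence headed by the maximal group and reaches that symbol through the reversed tail of the preceding sequence, using maximality to get the $2k/3+1$ sum bound. The paper writes this with $z_\ell$ maximal, dropping $s_3$ and extending $s_2$ to length $2k/3-\ell+1$; yours is the cyclically rotated version of the same argument.
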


\begin{proof}
    The properties (2), (3) have already been proven by \cite{kunnemann2022TightNoncombinatorialConditional}, however we reprove them for a complete picture.

    We use a case distinction on $e = (a, b, c)$.
    If $a,b,c \in \{x_1, ..., x_{k/3}, y_1, ..., y_{k/3}\}$, analog with $\{y_1, ..., z_1, ...\}$ and $\{z_1, ..., x_1, ...\}$, then $i = j = k/3, \ell = 0$ fulfills the definition.
    Otherwise, we can assume that, w.l.o.g., $a = x_i, b = y_j, c = z_\ell$.
    Assume $\ell \geq i,j$, the other cases are analog. 
    If $\max(i,j,\ell) \leq 2k/9$, the properties hold.
    Else, we set $j' = (2k/3 - \ell + 1)$, noting that $j' \leq 4k/9$, and see that $a,b,c \in s_1[1..i] \cup s_2[1..j']$. 
    Further, we have $i + j' = i + 2k/3 -\ell + 1 \leq 2k/3+1$.
    Thus, $i, j'$ and $\ell' = 0$ fulfill the properties.
\end{proof}

We next give a reduction from the $3$-uniform $k$-hyperclique problem to the Translated Orthant Problem, first introduced in \cite{Chan23}, which is a special case of the Translated Box-Shape Problem (\Cref{def:translated-box-shape}).

\begin{definition}[Translated Orthant Problem (Problem 4, \cite{Chan23})]
    \label{def:translated-orthant-shape}
    Let $\mathcal Z$ be a set of shapes in $\mathbb{R}^d$, composed of a union of orthants. Let $m$ be the total number of orthants over all shapes of $\mathcal Z$. 
    Given a set $\mathcal S$ of $n$ objects where each object is the translate of some shape in $\mathcal Z$, decide whether $\bigcap_{S \in \mathcal S} S \neq \emptyset$. 
\end{definition}

\begin{observation}
    \label{lem:orthant-translation-composition}
    Given two instances of the Translated Orthant Problem with inputs $\mathcal{S}_1, \mathcal{Z}_1$ and $\mathcal{S}_2, \mathcal{Z}_2$ of size $n_1, m_1$ and $n_2, m_2$ in a shared universe. 
    Deciding whether $\bigcap_{S \in \mathcal S_1 \cup \mathcal S_2}S \neq \emptyset$ reduces to a single instance of the Translated Orthant Problem with $\mathcal{S} = \mathcal{S}_1 \cup \mathcal S_2$, $\mathcal Z = \mathcal{Z}_1 \cup \mathcal Z_2$ of size $n \leq n_1 + n_2$ and $m \leq m_1 + m_2$.
\end{observation}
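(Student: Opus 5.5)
The claim is that an intersection of two intersections is a single intersection over the union of the index sets. The plan is to verify this identity and then check that the merged input is still a valid instance of the Translated Orthant Problem (\Cref{def:translated-orthant-shape}) with the stated sizes. The argument mirrors \Cref{lem:shape-translation-composition}, with orthants in place of boxes.

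First, I use the shared universe: both instances live in the same ambient space $\mathbb{R}^d$. Hence every object $S \in \mathcal S_1 \cup \mathcal S_2$ is a subset of the same space, and set-theoretically
\[
\Bigl(\bigcap_{S\in\mathcal S_1} S\Bigr)\cap\Bigl(\bigcap_{S\in\mathcal S_2} S\Bigr)=\bigcap_{S\in\mathcal S_1\cup\mathcal S_2} S .
\]
So deciding non-emptiness of the combined intersection is exactly deciding non-emptiness for the instance with object set $\mathcal S = \mathcal S_1 \cup \mathcal S_2$.

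Second, I check well-formedness. Each $S \in \mathcal S_i$ is a translate of some shape in $\mathcal Z_i \subseteq \mathcal Z_1 \cup \mathcal Z_2 = \mathcal Z$. Each shape in $\mathcal Z$ is still a union of orthants. Thus $(\mathcal S, \mathcal Z)$ is a valid input to the Translated Orthant Problem. An object is represented as a pair (translation vector, pointer to shape), and the pointers remain valid in $\mathcal Z$; I keep the shapes of the two instances as distinct entries, so no relabeling issue arises.

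Finally, I bound the sizes. We have $|\mathcal S| \le |\mathcal S_1| + |\mathcal S_2| = n_1 + n_2$, with inequality only if an identical object occurs in both sets. The total number of orthants over $\mathcal Z$ is at most $m_1 + m_2$, again because shared shapes can only reduce the count. The construction is a concatenation of lists, so it takes linear time.

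I do not expect any real obstacle. The only point needing care is the shared-universe assumption: the two instances must not have been independently rescaled or shifted into different coordinate systems. Otherwise the intersection identity would not correspond to the intended combined question.
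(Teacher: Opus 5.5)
Your argument is correct and is exactly the reasoning the paper leaves implicit: it states this observation, like its box analogue \Cref{lem:shape-translation-composition}, without proof, relying on the identity $\bigl(\bigcap_{S\in\mathcal S_1}S\bigr)\cap\bigl(\bigcap_{S\in\mathcal S_2}S\bigr)=\bigcap_{S\in\mathcal S_1\cup\mathcal S_2}S$ and on the fact that the merged input is again a valid instance with at most $n_1+n_2$ objects and $m_1+m_2$ orthants. One cosmetic point: since you keep the shapes of both instances as distinct entries, the orthant count is simply $m_1+m_2$, so your remark that shared shapes can only reduce it is unnecessary.
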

Analog to \Cref{lem:shape-translation-reduction} and first given by \cite{Chan23}, the Translated Orthant Problem with $n$ translation objects and a total of $m$ orthants is reducible to the Translation Problem with Orthants with $n$ points and $m$ orthants, and thus has a reduction to directed Hausdorff under Translation with point sets of size $n, m$. The proof can be found in \cite[Lemma 3]{Chan23}.

\begin{lemma}[Lemma 3, \cite{Chan23}]
    \label{lem:translated-orthant-reduction}
    The Translated Orthant Problem in $d$-D with $n$ translation objects and a total of $m$ orthants reduces to the Translation Problem with Orthants in $d$-D with $|P| = \Theta(n)$ points and $|\mathcal Q| = \Theta(m)$ orthants.
\end{lemma}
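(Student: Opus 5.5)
The reduction mirrors \Cref{lem:shape-translation-reduction}, with one extra ingredient: a confinement gadget, since orthants are unbounded and cannot be separated spatially the way bounded boxes can. Let $\mathcal Z$ be the shapes and $\mathcal S$ the translation objects, each given as a pair $(t, Z_i)$. First I rescale so that every orthant apex and every translation vector lies in $[0,1/2]^d$. Then, after possibly adding a constant number of dummy orthants, the question of whether $\bigcap_{S\in\mathcal S} S\neq\emptyset$ reduces to whether this intersection contains a point of a fixed bounded target box $B$ of side length $\delta_0=O(1)$. The dummy orthants handle the unbounded part: for a nonempty intersection of unions of orthants, some witness can be clipped into $B$, or else the whole instance is trivially YES along some direction. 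To avoid this subtlety, I can alternatively restrict the search to $B$ and add, for each shape, copies of its orthants intersected with $B$, which is harmless. Within $B$, each orthant with apex in $[0,1/2]^d$ coincides with an axis-parallel hypercube of side length $\delta>\delta_0$ placed around that apex with the matching orientation.

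Next comes the instance itself. For each shape $Z_i$ used by some object, I form one sub-instance of the Translation Problem with Orthants. Its hypercubes $\mathcal Q_i$ are the hypercubes of side length $\delta$ representing the orthants of $Z_i$ as described above. Its point set is $P_i=\{-t : (t,Z_i)\in\mathcal S\}$. Then $\tau\in t+Z_i$ if and only if $-t+\tau\in Z_i$, which, restricted to the target box, holds if and only if $-t+\tau$ lies in some hypercube of $\mathcal Q_i$. The target box constraint $P_i+\tau\subset B'$ is chosen so that its enlargement by the bounded translation vectors contains every feasible $\tau\in B$. Since all $-t$ lie in $[-1/2,0]^d$, I take $B'$ with side length $O(1)$ and then choose $\delta$ larger. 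The total size is $\sum|P_i|=n$ and $\sum|\mathcal Q_i|=m$.

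For correctness, I show both directions. If $\tau\in\bigcap_S S\cap B$, then every $-t+\tau$ lies in the target box and in an orthant of its shape, hence in the corresponding hypercube. Conversely, a feasible $\tau$ for the TPwO instance keeps every point inside $B'$, where the hypercubes agree with the orthants, so $\tau\in t+Z_i$ for every object. No cross-shape interference can arise, because TPwO already separates the sub-instances by definition; it is \Cref{lem:tpwo-to-tpwc} that later performs the spatial separation.

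The main obstacle I expect is the first step: making sure the intersection of unbounded orthant unions can be confined to a bounded box with only $O(1)$ extra objects, and fixing $\delta$ relative to $\delta_0$ so that the hypercubes really act as orthants everywhere a translated point can land. I would handle this by explicitly adding one extra object, consisting of the $2d$ "outer" orthants covering the complement of $B$, whose complement is intersected in. Equivalently, I would add the target-box constraint directly, as the TPwO definition allows.
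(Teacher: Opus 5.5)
Your construction is sound. Use one TPwC sub-instance per shape $Z_i$ with points $P_i=\{-t : (t,Z_i)\in\mathcal S\}$, and one hypercube of side $\delta$ per orthant, with a corner at the apex and the orthant's orientation. With target box $B'=B+[-1/2,0]^d=[-1/2,1]^d$, the hypercubes agree with the orthants on $B'$ for any $\delta\ge 1$, and $\delta>3/2$ meets the TPwO requirement. The converse direction therefore works as you argue. Using the sub-instance structure of TPwO instead of the offset vectors $u_i$ of \Cref{lem:shape-translation-reduction} differs from the route the paper points to, namely Chan's Lemma 3 ``analog to'' \Cref{lem:shape-translation-reduction}. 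It is a legitimate simplification, since spatial separation is done later by \Cref{lem:tpwo-to-tpwc}.

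The gap is in the forward direction. You need that whenever $\bigcap_{S\in\mathcal S}S\neq\emptyset$, the intersection contains a point of the bounded box $B$. You assert this with a hedge (``or else the whole instance is trivially YES along some direction'') that is not a real dichotomy. Your fall-backs are restricting the search to $B$, adding an object made of outer orthants, or adding the TPwO target-box constraint. Each of these only \emph{enforces} $\tau\in B$ and never shows that this loses no solutions, so the argument is circular at exactly the step you identified as the main obstacle. The extra-object version is also ill-formed: objects are intersected, not complemented, and $B$ is not a union of orthants.

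The missing step is a short clamping argument. After your normalization, every object $t+Z$ is a union of closed orthants $\{x : x_i\ge c_i\ (i\in I_+),\ x_i\le c_i\ (i\in I_-)\}$ with apex $c\in[0,1]^d$. If $\tau$ lies in such an orthant, then so does its coordinatewise projection $\tau'$ onto $[0,1]^d$. For example, for $i\in I_+$ we have $\tau_i\ge c_i\ge 0$, so $\tau'_i=\min(\tau_i,1)\ge c_i$; the case $i\in I_-$ is symmetric. Hence $\tau'$ lies in every object, and you may take $B=[0,1]^d$, enlarging it slightly if orthants are open. With this fact no dummy orthants or extra objects are needed, and the TPwO target box does the rest.
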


For our reduction, the encoding of $k$-tuples to our hypercube is given by the prefix covering sequences, i.e., the sequence $s_i$ defines the encoding into the $i$-th dimension of our $3$-dimensional hypercube. Note that this way, some vertex parts are encoded redundantly into the hypercube.
Again, we want to cover the feasible regions of each non-edge $\bar e$, that is the cells whose corresponding $k$-tuple does not include $\bar e$. Recall that the forbidden region of $\bar e$ are the cells for which the respective $k$-tuple includes $\bar e$, which basically is the complement of the feasible region\footnote{By the redundant encoding, we introduce regions that in our basic understanding neither lie in the forbidden nor feasible region. See the proof of \Cref{thm:pcd-reduction-3D} for details.}.
A cell lying in the feasible regions of all non-edges encodes a $k$-hyperclique.

Along the lines of the approach of Chan, to cover the feasible regions of each non-edge $\bar e$, we split the forbidden regions of $\bar e$ into \emph{quasi-diagonals}. A quasi-diagonal is a set of contiguous regions such that in all projections to a single dimension the projected regions are disjoint and feature the same ordering.
The complement of each quasi-diagonal in $3$-D of $x$ forbidden regions can be covered by $3x + 3$ orthants, a visual intuition as well as an example for a quasi-diagonal can be found in \Cref{fig:quasi-diagonal}. 

Besides giving the formal encoding and reduction to the Translated Orthant Problem, our main contribution lies in achieving the desired parameters for the resulting instance. 

\begin{lemma}
    \label{lem:forbidden-decomposition-quasi-diagonal}
    Given a $3$-uniform $k$-partite hypergraph $H=(V, E)$ with $n = |V|/k$ and $\lambda \in [2/3, 1]$.
    Given a non-redundant encoding of $H$ in a $3$-dimensional hypercube with side length $n^i \times n^j \times n^\ell$, for values that fulfill $\min(i,j,\ell) + \max(i + j +\ell) \leq 4k/9$ and $i + j + \ell \leq 2k/3 + 1$.
    For each $u$-tuple $\bar e \in {V \choose u} \setminus E$ for $u \in \{2,3\}$, the forbidden region of $\bar e$ can be decomposed into $O(n^{\lambda (2k/3 + 1)})$ many quasi-diagonals which are identical under translation. 
    Each of the quasi-diagonals has $O(n^{(1-\lambda)(2k/3+1)})$ many contiguous regions and intrinsic dimension $2$.
    The decomposition can be constructed in $O(n^{\lambda(2k/3+1)})$.
\end{lemma}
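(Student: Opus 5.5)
We follow Chan's quasi-diagonal decomposition, but balance the parameters so that the number of quasi-diagonals is $N^\lambda$ and each has $N^{1-\lambda}$ regions, where $N = n^{i+j+\ell}\le n^{2k/3+1}$ is the total number of cells. Write $a\le b\le c$ for the sorted values of $i,j,\ell$, and let the sides of the hypercube be $n^a, n^b, n^c$.

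\textbf{Step 1: the forbidden region.} Let $\bar e$ have $u\le 3$ vertices. We first determine how its forbidden region projects to each dimension.
\begin{itemize}
\item In a dimension whose prefix contains no vertex of $\bar e$, the forbidden region is the full range.
\item In a dimension whose prefix contains a vertex $v$ of $\bar e$ from color class at position $p$ of the prefix, the projection is periodic. It consists of $n^{p-1}$ blocks, each a contiguous interval of length $n^{(\text{length})-p}$, with the blocks evenly spaced.
\end{itemize}
Redundant encodings do not matter here, because the encoding is assumed non-redundant. The forbidden region is therefore a product of such periodic interval families. It is a grid of contiguous boxes, one per combination of block indices over the dimensions that contain vertices of $\bar e$.

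\textbf{Step 2: grouping boxes into quasi-diagonals.} A quasi-diagonal is a set of boxes whose projections are disjoint and identically ordered in every dimension. Following Chan, we index boxes by their block-index vector $(\beta_1,\beta_2,\beta_3)$ and group them into families of the form $\{(t,\ \sigma_2+t,\ \sigma_3 + t)\}$, i.e.\ shifted diagonals in block-index space, using cyclic or staircase indexing so that the families are translates of one another.

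To get intrinsic dimension $2$, we keep one dimension entirely inside the quasi-diagonal: the dimension of side $n^a$, or a dimension carrying no vertex of $\bar e$. We then diagonalize only in the other two dimensions. The condition $a+c\le 4k/9$ is what makes this affordable.

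\textbf{Step 3: balancing the parameters.} We cut each diagonal into pieces of length $L=N^{1-\lambda}$ by partitioning the block grid into subgrids that are identical under translation. This is the same slicing idea as in \Cref{lem:hypercube-decomposition-1D}.

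To count pieces, let $D$ be the number of diagonals. If the two diagonalized dimensions have $B_1\le B_2$ blocks, then there are $D = B_2$ diagonals, each of length $B_1$. After chopping, the number of quasi-diagonals is
$$B_1B_2/L \;=\; N^{\lambda}\cdot(\text{fraction of cells forbidden})\cdot(\text{cells per box})^{-1}\cdot\ldots \;\le\; N^\lambda.$$

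For chopping to length $L$ to be possible, every diagonal must have length at least $L$. This requires $\min(B_1, B_2) \ge N^{1-\lambda}$. The hypotheses $\lambda\ge 2/3$ and $a+c\le 4k/9$ should guarantee this: $N^{1-\lambda}\le n^{(2k/3+1)/3}$, which is roughly $n^{2k/9}$.

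If the diagonals are instead shorter than $L$, we merge several parallel diagonals into one quasi-diagonal. This is allowed as long as the ordering in both projections is preserved, which holds by concatenating diagonals along the third dimension's block order.

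\textbf{Step 4: construction time.} The construction lists one representative quasi-diagonal and the $O(N^\lambda)$ translation vectors, which takes $O(n^{\lambda(2k/3+1)})$ time.

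\textbf{Main obstacle.} The hardest step is verifying that the chopped or merged pieces are genuinely translates of each other, including at the boundaries of periodic blocks, and that the length constraint can be met across all cases of how the vertices of $\bar e$ fall into the dimensions. This holds via the balanced-prefix inequality. I would first handle the case where $\bar e$ occupies all three dimensions, then reduce the degenerate cases (two vertices in one dimension, or $u=2$) to it by treating empty dimensions as full intervals.
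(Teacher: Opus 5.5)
Your outline (diagonals in two dimensions, the remaining dimension kept or sliced, chopping to trade count against length) follows the paper's, but the case where two vertices of $\bar e$ lie in the same dimension is not handled. Step~1 claims that every occupied dimension carries one evenly spaced family of blocks. That is false here, and ``treating empty dimensions as full intervals'' does nothing for the doubly occupied dimension. With vertices at positions $p<q$ of the same prefix, the forbidden projection is two-level periodic: there are $n^{p-1}$ groups, each of $n^{q-p-1}$ evenly spaced blocks, with larger gaps between consecutive groups. The shape of a piece of a block-index diagonal then depends on where the group boundaries fall inside it. So shifting along this dimension, or chopping at arbitrary positions, gives pieces that are not translates of each other.

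Confining quasi-diagonals to single groups does not help either. Take $\bar e$ with vertices in $V_{x_{k/6}},V_{x_{k/3}},V_{y_{k/3}}$, which is encoded in an $n^{k/3}\times n^{k/3}$ grid. Each group holds $n^{k/6-1}\cdot n^{k/3-1}$ forbidden regions, but a quasi-diagonal inside it has at most $n^{k/6-1}$ elements. You therefore need $n^{k/6-1}\cdot n^{k/3-1}=n^{k/2-2}$ quasi-diagonals overall, which exceeds $n^{\lambda(2k/3+1)}$ for $\lambda=2/3$ and $k>48$.

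The paper treats this case separately, and this is the missing idea:
\begin{itemize}
\item The diagonals, of length $n^{\min(j,(1-\lambda)(2k/3+1))}$, are shifted only along the single-vertex dimension, so they are translates of each other.
\item The two-vertex dimension is cut into identical slices via Lemma~\ref{lem:hypercube-decomposition-1D}.
\item Translates are placed slice-aligned, and a diagonal may run through several slices (Case~B). The price is that each forbidden region is covered at most twice.
\end{itemize}

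Step~3 also uses the hypothesis $\min(i,j,\ell)+\max(i,j,\ell)\le 4k/9$ for the wrong purpose. Nothing forces $\min(B_1,B_2)\ge N^{1-\lambda}$: a vertex at the first position of a prefix gives a single block. Nothing needs it either, since diagonals shorter than $L$ already meet the length bound. What must be bounded is the number of unchopped diagonals times the number of forbidden slices of the non-diagonalized dimension. That product is $O(n^{\min+\max})$, and $n^{\min+\max}\le n^{4k/9}<n^{\lambda(2k/3+1)}$. This is how the paper uses the hypothesis: it splits each diagonal into $n^{\lambda(2k/3+1)-\min-\max}$ identical parts.

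Several other points in Steps~2 and~3 also fail:
\begin{itemize}
\item Your fallback of concatenating parallel diagonals along the third dimension does not produce quasi-diagonals. Parallel diagonals have overlapping projections in the diagonalized dimensions, and the result would not have intrinsic dimension $2$. The fallback is also unnecessary, by the count above.
\item When all three dimensions are occupied, the $n^a$ dimension cannot be kept ``entirely inside'', because the pieces would then contain non-forbidden cells. The quasi-diagonals must be built separately in each forbidden slice of that dimension; this is where the factor $n^{\min}$ in the count comes from.
\item Cyclic indexing breaks the monotone order that a quasi-diagonal requires. The paper instead uses full-length diagonals that run outside the hypercube and ignores the outside parts, which is what makes them exact translates.
\end{itemize}
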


\begin{proof}
    \begin{figure}
        \begin{subfigure}[t]{0.45\textwidth}
            \centering
            \includegraphics[width=0.81\linewidth]{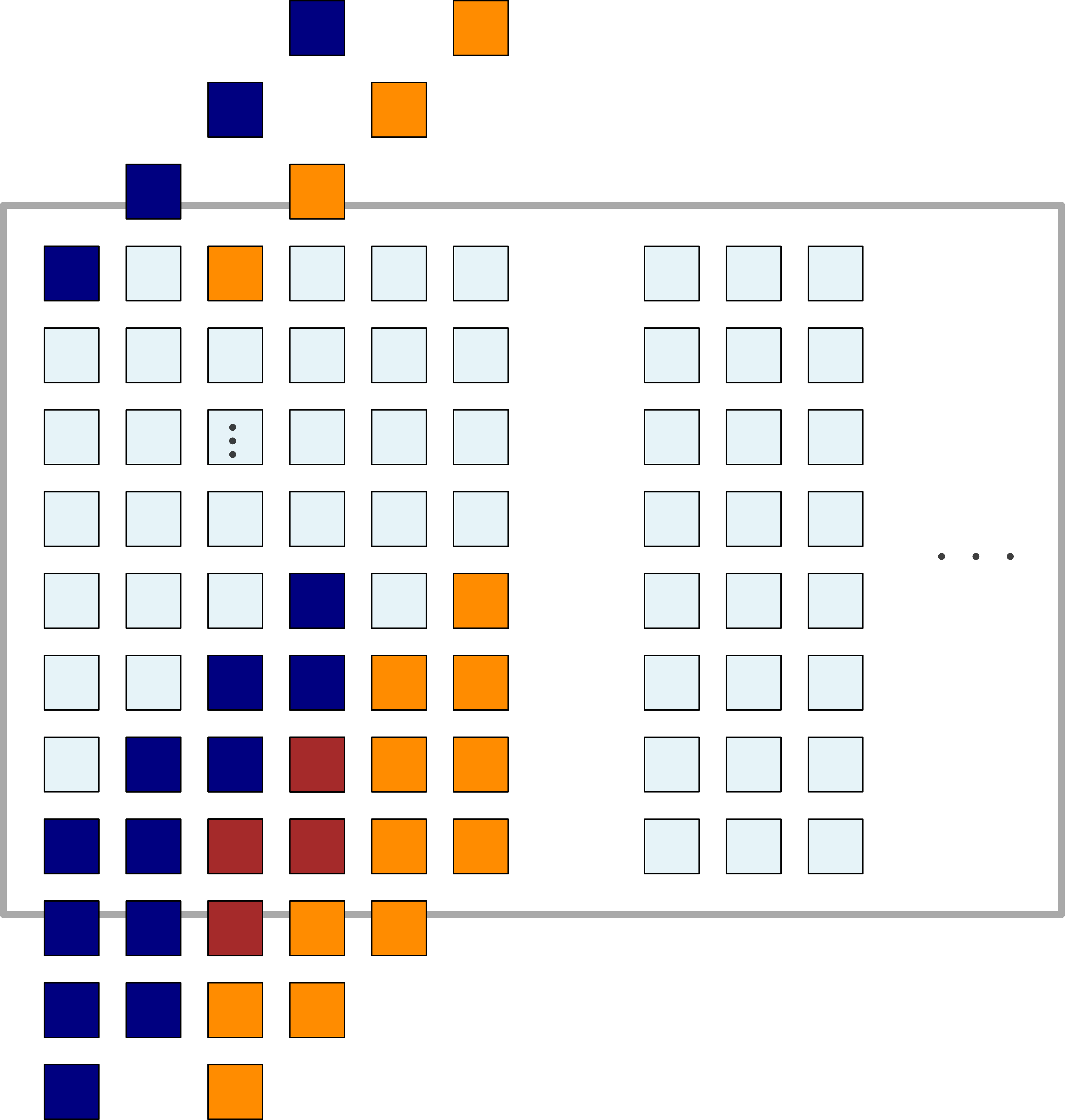}
            \subcaption{Case A, where a quasi-diagonal does not extend to another group of forbidden regions.}
        \end{subfigure}
        \hfill
        \begin{subfigure}[t]{0.45\textwidth}
            \centering
            \includegraphics[width=0.9\linewidth]{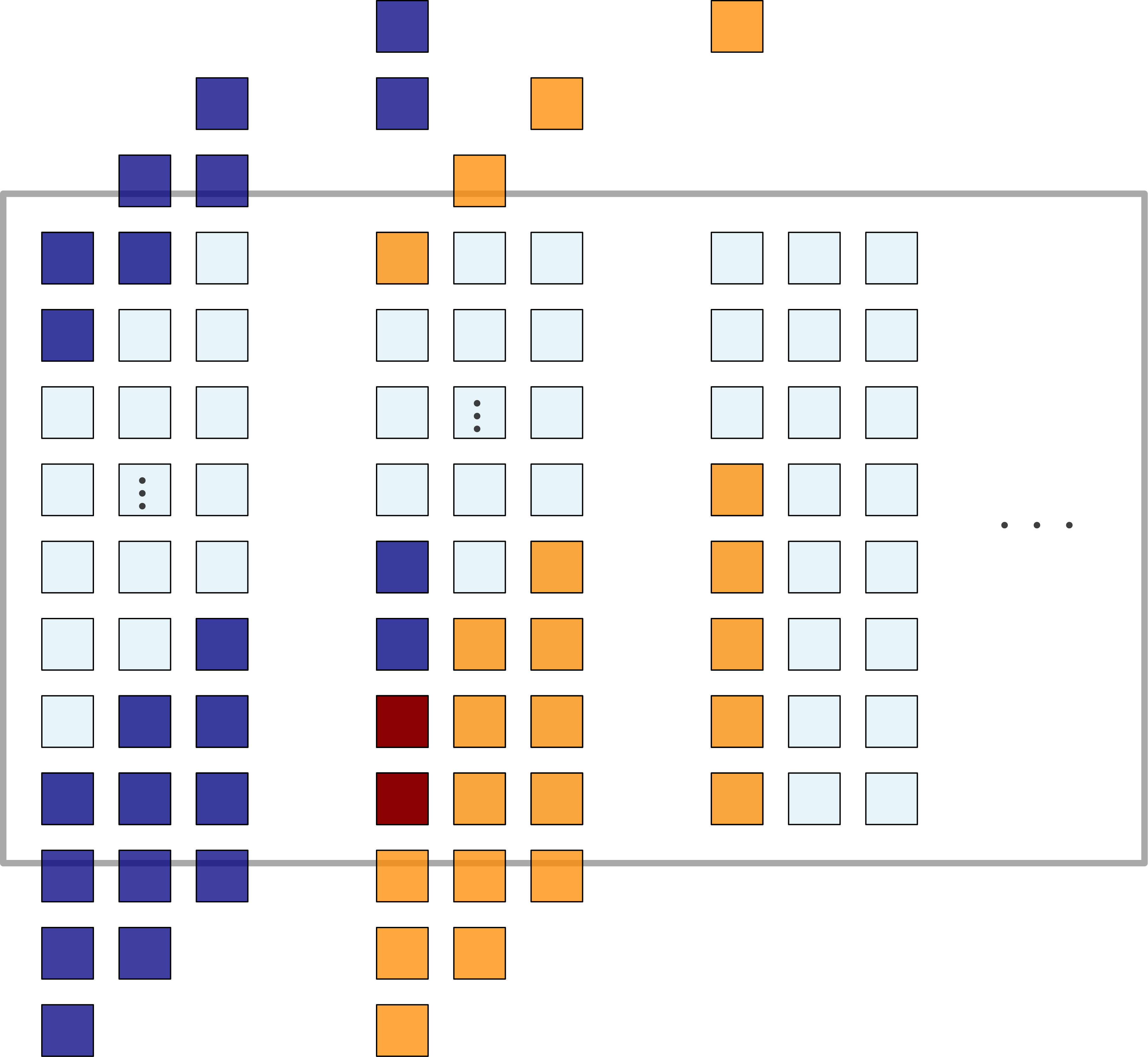}
            \subcaption{Case B, where a single quasi-diagonal crosses multiple groups of forbidden regions.}
        \end{subfigure}
        
        \caption{An schematic example of forbidden regions, marked in light blue, which are covered by quasi-diagonals, identical under translation, marked either in blue or orange. Overlapping blue and orange quasi-diagonals are depicted in brown.}
        \label{fig:quasi-diagonal-decomposition}
    \end{figure}
    Assume $i \leq j \leq \ell$, else we rename.
    We do a case distinction over the intrinsic dimension of (the forbidden regions of) $\bar e$, i.e., the number of dimensions $\bar e$ is encoded in.

    First, assume that $\bar e$ has intrinsic dimension $1$, i.e., $i = j = 0$ and $\ell > 0$.
    By $\max(i, j, \ell) \leq 4k/9$ and $\lambda(2k/3 + 1) > 4k/9$ we construct a one-element quasi-diagonal for each individual forbidden region of $\bar e$, which fulfills the definition.

    We next assume that $\bar e$ has intrinsic dimension $2$, that is $i = 0,$ $j > 0$ and we only need to consider the dimensions $d_1, d_2$. 
    Assume that only two elements are encoded in $d_1, d_2$, one in each dimension, thus the grid is regular in the sense that all neighboring contiguous regions have the same distance to each other.
    Indexing the $2$-D cells by their $2$-D position, we can form $O(n^\ell)$ many diagonals $D$ of the form $((x+y,x))_{x \in [n^j]}$ for each $y \in \{- n^j,..., n^\ell\}$. 
    Note that these also cover regions outside our hypercube, which we can ignore.
    As neighboring forbidden regions have the same distance to each other, we may split these diagonals further into $n^{\lambda(2k/3+1)-\ell}$ many parts which are identical under translation. 
    Note that $\ell = \max(i,j, \ell) \leq 4k/9 < 2/3 (2k/3+1) \leq \lambda(2k/3+1)$.
    Combining these yields $O(n^{\lambda(2k/3+1)})$ many quasi-diagonals.
    As we consider each forbidden region of $\bar e$ once and $j + \ell \leq 2k/3+1$, each slice has at most $O(n^{(1-\lambda)(2k/3+1)})$ many elements.

    Otherwise, we assume that there are two elements encoded in dimension $d_1$, having side-length $n^\ell$, and one in $d_2$, with side length $n^j$, the other case is analog.
    \Cref{fig:quasi-diagonal-decomposition} gives a visual overview on how we construct the quasi-diagonals.
    As before, we index the $2$-D cells by their $2$-D position.
    We initially form the diagonals of size $n^{\min(j,(1-\lambda)(2k/3+1))}$ 
    \[ D = \{((x,x + y))_{x \in [n^{\min(j,(1-\lambda)(2k/3+1))}]} \mid y \in \{- n^j,..., n^j\} \},\]
    depicted in blue in \Cref{fig:quasi-diagonal-decomposition}.
    Note that these diagonals are identical except for translation, as there is only one element encoded in dimension $d_2$ and thus all neighboring forbidden contiguous regions have the same distance in $d_2$.
    Note that we can decompose $d_1$ into identical slices, such that only one element is encoded in each slice, proven in \Cref{lem:hypercube-decomposition-1D} and visualized in \Cref{fig:quasi-diagonal-decomposition} as groups of forbidden regions.
    If the diagonals of $D$ do not fully cover one such slice, see Case A in \Cref{fig:quasi-diagonal-decomposition}, we add translates of $D$ within the same slice such that we fully cover the forbidden regions inside the slice. 
    We can do so naively, as within the slice there is only one element encoded in each dimension.
    We further add translates of $D$ to cover all other slices, which are identical, alike.
    Else, the diagonals of $D$ extend to other slices, as can be seen in Case B in \Cref{fig:quasi-diagonal-decomposition}.
    We add translates of D starting at the first not fully covered slice. As all slices are identical, the constructed translates cover the forbidden region in the next slice(s).
    Note that each forbidden region is covered at most twice, and for each column of forbidden regions $c$ there are at most twice as many quasi-diagonals with relative origin in column $c$ as there are forbidden regions in $c$. 
    Each quasi-diagonal has size $n^{\min(j,(1-\lambda)(2k/3+1))}$ and we have $[n^j] \times [n^\ell]$ forbidden regions, so by $i+j+\ell \leq 2k/3+1$ and $\ell \leq 4k/9 \leq \lambda2k/3$ we have created $O(n^{(j+\ell)/\min(j,(1-\lambda)(2k/3+1))}) \leq O(n^{\max(\ell, \lambda(2k/3+1))}) = O(n^{\lambda(2k/3+1)})$ many identical copies of quasi-diagonals.

    Lastly, we next move to the case where the forbidden region of $\bar e$ has intrinsic dimension $3$, i.e., $i >0$.
    We construct the decomposition into quasi-diagonals individually in each of the $n^{i}$ many slices in dimension $d_1$, note that these are identical by \Cref{lem:hypercube-decomposition-1D}.
    In the remaining dimensions $d_2, d_3$, there is one vertex encoded in each, thus by our previous argument, we get $O(n^\ell)$ many translations of a single quasi-diagonal of size $n^j$.
    We may split this diagonal further into $n^{\lambda(2k/3+1)-i-\ell}$ many equal parts, note that $i + \ell \leq 4k/9 < 2/3 (2k/3+1) \leq \lambda(2k/3+1)$.
    Overall in all $n^i$ slices in dimension $d_1$, we get $O(n^{\lambda(2k/3+1)})$ many translations of our diagonal. Again, we unnecessarily cover each forbidden region only constantly often, so by $i + j + \ell \leq 2k/3+1$ our diagonal has size $O(n^{(1-\lambda)(2k/3+1)})$.

    For all cases, to determine the first quasi-diagonal and subsequent translations thereof, we only need to know the grid's structure. Thus, our runtime is dominated by outputting the quasi-diagonal and its translations.
\end{proof}

\begin{figure}
    \centering
    \includegraphics[width=0.5\linewidth]{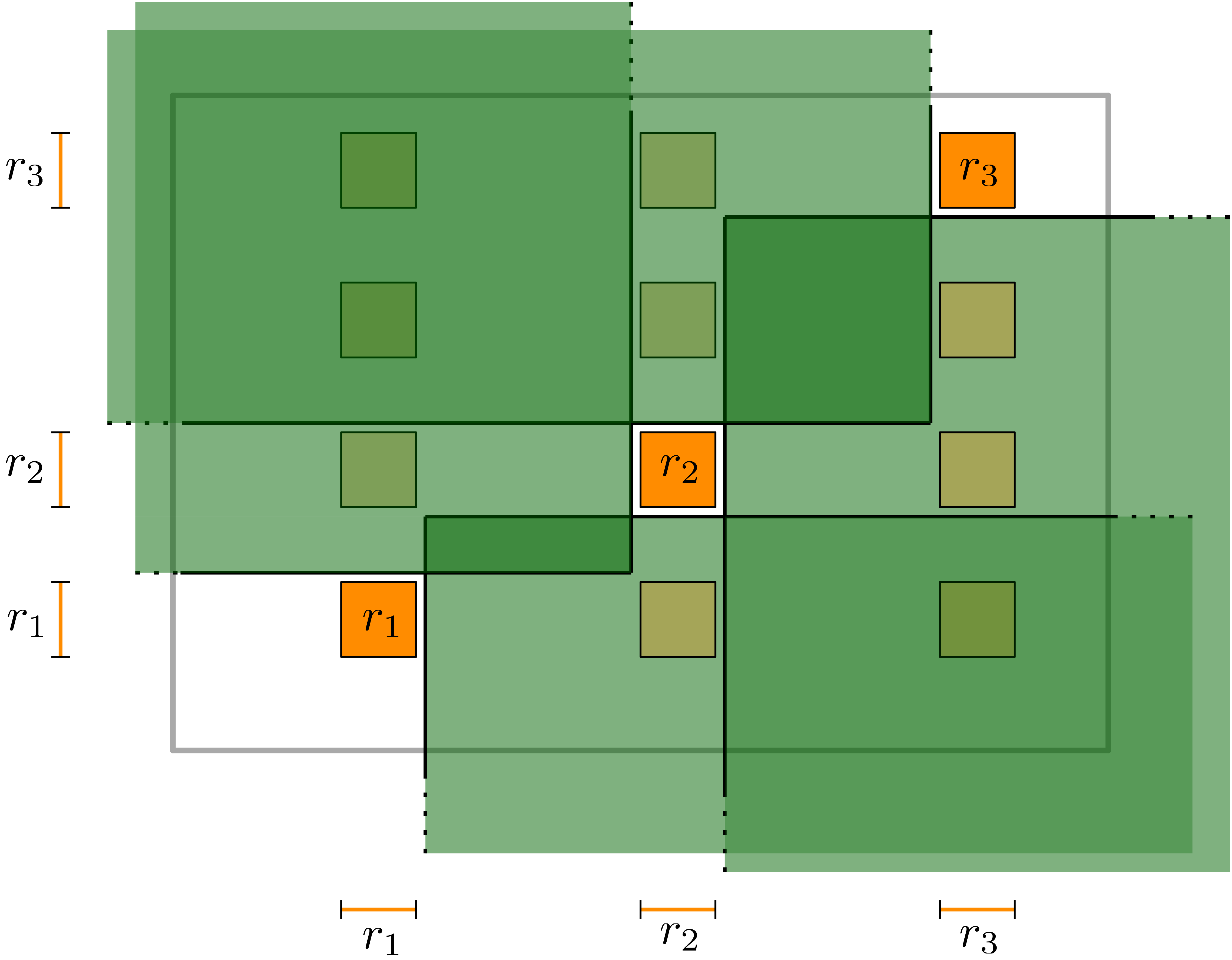}
    \caption{A schematic picture of a quasi-diagonal and its complement-covering orthants.
    The (faint) orange boxes depict the forbidden regions of a non-edge $\bar e$, the strongly colored boxes form a quasi-diagonal as their projections have the same ordering. The green orthants are the ones constructed in \Cref{thm:pcd-reduction-3D}.}
    \label{fig:quasi-diagonal}
\end{figure}

\begin{theorem} 
	\label{thm:pcd-reduction-3D}
    Let $\lambda \in [2/3,1]$ and $k \in \N$ with $3|k$ be constants.
    The $k$-hyperclique problem on a $3$-regular graph with $|V|$ vertices reduces to the Translated Orthant Problem in $3$-D with $n' = \Theta(|V|^{\lambda 2k/3 + 4})$ points and a total of $m' = \Theta(|V|^{(1-\lambda)2k/3 + 4})$ cubes.
    The reduction runs in $O(|V|^{\lambda2k/3 + 4})$ time. 
\end{theorem}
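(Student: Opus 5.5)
We encode the hypergraph redundantly into a $3$-dimensional hypercube using the prefix covering sequences $s_1,s_2,s_3$ of \Cref{obs:good-pcd-designs}. Let $N=|V|/k$ and assume $H$ is $k$-partite with classes of size $N$. Dimension $t\in[3]$ gets side length $N^{4k/9}$. A cell is addressed by $(\ind(s_1\text{-assignment}),\ind(s_2\text{-assignment}),\ind(s_3\text{-assignment}))$, where the $r$-th digit in dimension $t$ is the vertex chosen from the class $s_t[r]$. Each class appears in at most two sequences. A cell is \emph{consistent} if every doubly encoded class gets the same vertex in both of its occurrences. A consistent cell defines a $k$-tuple, and conversely every $k$-tuple has a consistent cell.

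We add two families of constraints. First, \emph{consistency constraints}: for each doubly encoded class $c$ and each pair of distinct vertices $v\neq w$ in $V_c$, we forbid the cells with $v$ at one occurrence of $c$ and $w$ at the other. This is a $2$-tuple ``non-edge'' (with $u=2$). Second, \emph{non-edge constraints}: for each non-edge $\bar e$, \Cref{obs:good-pcd-designs} gives prefix lengths $i,j,\ell$ that cover $\bar e$. We forbid the cells in which the prefixes $s_1[1..i],s_2[1..j],s_3[1..\ell]$ contain the vertices of $\bar e$. It suffices to look at one occurrence of each vertex, because the consistency constraints enforce agreement with the other occurrence. Hence a cell is feasible for all constraints if and only if it is consistent and encodes a $k$-hyperclique.

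The key point is that each forbidden region depends only on the prefixes, i.e.\ on the top digits. So the region is a product: a non-redundant encoding of a sub-hypercube of side lengths $N^i\times N^j\times N^\ell$ (restricted to the prefix coordinates), extruded over the lower digits. These side lengths satisfy the hypotheses of \Cref{lem:forbidden-decomposition-quasi-diagonal}. For the consistency constraints we take prefixes reaching the two occurrences. Their lengths are at most $4k/9$ in two dimensions, but since the constraint has intrinsic dimension at most $2$, the same lemma (or an analogous argument) applies. Each prefix cell corresponds to a contiguous interval in the full hypercube, so we get contiguous regions in the full cube.

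By \Cref{lem:forbidden-decomposition-quasi-diagonal}, each forbidden region is the union of $O(N^{\lambda(2k/3+1)})$ translates of one quasi-diagonal with $O(N^{(1-\lambda)(2k/3+1)})$ regions. A cell is feasible for the constraint if and only if it avoids every translate. The complement of a quasi-diagonal with $x$ regions is covered by $3x+3$ orthants, the staircase construction of \Cref{fig:quasi-diagonal} (in the spirit of Chan). So each constraint yields one orthant shape with $O(N^{(1-\lambda)(2k/3+1)})$ orthants, together with $O(N^{\lambda(2k/3+1)})$ translates of it. We also add orthant translates confining the solution to the hypercube and to integral cell positions. Cell integrality can be enforced, as in Chan, by shrinking the regions slightly, since any point in a cell is equivalent to that cell.

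There are $O(|V|^3)$ constraints in total. Composing them with \Cref{lem:orthant-translation-composition} gives
\[n'=O(|V|^{3}N^{\lambda(2k/3+1)})\le O(|V|^{\lambda 2k/3+4}),\qquad m'=O(|V|^{(1-\lambda)2k/3+4}).\]
Here we treat shapes as shared per constraint, and we absorb the $+1$ exponent terms into the $+4$. We pad with dummy objects to reach the $\Theta$ bounds. The running time is dominated by outputting the translates, which is $O(|V|^{\lambda 2k/3+4})$ by the construction time in \Cref{lem:forbidden-decomposition-quasi-diagonal}; this holds because $\lambda\ge 2/3$ makes translates dominate shapes.

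The main obstacle is the consistency constraints. In $s_1$ the $x$-block is followed by the reversed $y$-tail, so the two occurrences of a class may sit deep in both sequences, with prefixes of length up to $4k/9$ each and total up to $8k/9>2k/3+1$. This can break the budget of \Cref{lem:forbidden-decomposition-quasi-diagonal}. To fix this, we exploit that a consistency constraint has intrinsic dimension $2$ and is a union of ``diagonal'' pairs $(v,w)$. We would decompose it as the union over $v\neq w$ of the projections onto just the two relevant digits, which forms a regular 2-D grid pattern extruded over the other digits. Then we cover the complement with quasi-diagonals directly. If that fails, the fallback is to enforce equality with orthants via the box-to-orthant diagonal trick.
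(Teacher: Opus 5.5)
\paragraph{Gap: the consistency constraints.} Apart from one step, your construction is the paper's. It uses the redundant encoding by $s_1,s_2,s_3$, non-edge constraints on covering prefixes via \Cref{obs:good-pcd-designs}, consistency constraints as $O(N^2)$ two-element non-edges per doubly encoded class, decomposition into translates of one quasi-diagonal via \Cref{lem:forbidden-decomposition-quasi-diagonal}, staircase orthant shapes, and composition. The gap is the consistency step: you leave it open, and you base it on a false premise. You claim the two occurrences of a doubly encoded class can both lie deep, with combined prefix length up to $8k/9>2k/3+1$. The tails are reversed precisely to prevent this. For $r\in\{2k/9+1,\dots,k/3\}$, the class $x_r$ sits at position $r$ of $s_1$ and at position $k/3+(k/3-r+1)=2k/3+1-r$ of $s_3$. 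Likewise $y_r$ sits at positions $r$ and $2k/3+1-r$ of $s_2$ and $s_1$, and $z_r$ at those positions of $s_3$ and $s_2$. So the two positions always sum to exactly $2k/3+1$.

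Coarsening to $N^{r}\times N^{2k/3+1-r}\times N^{0}$ therefore meets both hypotheses of \Cref{lem:forbidden-decomposition-quasi-diagonal} directly: $\min+\max\le 0+4k/9$, and the sum is exactly $2k/3+1$. Moreover, the two prefixes $x_1,\dots,x_r$ and $z_1,\dots,z_{k/3},x_{k/3},\dots,x_r$ share only the class $x_r$ itself. The paper treats its second copy as a fresh class $V'_{e_i}$. This is exactly the paper's argument, and it gives $O(N^{\lambda 2k/3+3})$ translates and $O(N^{(1-\lambda)2k/3+3})$ orthants for the consistency part.

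\paragraph{Why the proof does not go through as written.} You assert both that the lemma applies and that its budget can be violated. Moreover, the repairs you sketch would not help even if the premise were true. Looking only at the two relevant digits does not reduce the number of contiguous forbidden boxes. For a fixed pair $(v,w)$ there are still $N^{j-1}\cdot N^{\ell-1}$ of them in the full cube. Within the quasi-diagonal method, translates times orthants per pair must therefore be at least about $N^{j+\ell-2}$, which would exceed the budget under your premise. The diagonal trick of \Cref{lem:box-to-cube} doubles the ambient dimension, so it is not available inside a 3-D reduction. The missing ingredient is simply the position computation above.
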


\begin{proof}
    We reduce from the equivalent colorful $k$-hyperclique problem on $3$-regular hypergraphs $H = (V, E)$, that is, the vertex set is decomposed into $k$ disjoint parts $V = V_1 \cup ... \cup V_k$, each with $n$ vertices. Note that $n \in \Theta(|V|)$.
    We start by giving the encoding of $k$-tuples $C \in V_1 \times ... \times V_k$ to the $3$-dimensional hypercube\footnote{For consistency with previous proofs, we call it the hypercube although being $3$-dimensional.} and continue constructing the orthants covering the feasible regions of a non-edge.    

    By scaling, we assume that the $3$-dimensional hypercube has a side-length of $n^{4k/9}$.
    We use the prefix covering sequences $s_1, s_2,s_3$ of \Cref{obs:good-pcd-designs} to define the encoding of the $k$-tuples to the hypercube. 
    Let $\ind_i: V_1 \times ... \times V_k \to [n^{4k/9}]$ be the function which assigns a $k$-tuple to a unit interval in dimension $i\in[3]$ with \[
        {\ind}_i: (x_1, ..., x_{k}) \mapsto x_{s_i[1]}\cdot n^{4k/9-1} + x_{s_i[2]}\cdot n^{4k/9-2} + ... + x_{s_i[4k/9]}.
    \]
    Likewise, we define the assignment function $f: V_1 \times ... \times V_k \to [n^{4k/9}]^3$, assigning each $k$-tuple a unit cell in the hypercube with $f(C) = (\ind_1(C), \ind_2(C), \ind_3(C))$.
    
    First, for each non-edge $\bar e \in {V \choose 3} \setminus E$, we construct an instance of the Translated Orthant Problem $I$ such that exactly the feasible region of $\bar e$ is covered by the translation objects of $I$. 
    Second, we cover the complement of so-called inconsistent cells:
    As we use up to two indices to encode each vertex part $V_{x_{2k/9+1}}, ..., V_{x_{k/3}}, V_{y_{2k/9+1}}, ..., V_{y_{k/3}}, V_{z_{2k/9+1}}, ..., V_{z_{k/3}}$, there are cells which for the same vertex part encode different vertices, which we call inconsistent as they do not encode a $k$-tuple of $V_1 \times ... \times V_k$.
    For each of the mentioned vertex parts $V_{e_i}$, we construct an additional Translated Orthant instance which covers the complement of the cells which are inconsistent in the indexing of $V_{e_i}$.
    By \Cref{lem:orthant-translation-composition}, we can combine all individual instances into a single compound one.

    We start to construct an instance covering the feasible region of a non-edge $\bar e \in {V \choose 3} \setminus E$.
    Let $\bar e = (v_i, v_j, v_\ell) \in V_{e_i} \times V_{e_j} \times V_{e_\ell}$, by the definition of the prefix covering sequences, there exist $i,j, \ell$ such that $e_i, e_j, e_\ell \in s_1[1..i] \cup s_2[1..j] \cup s_3[1..\ell]$.
    We note that the forbidden region of $\bar e$ can be fully described in the coarsement of our hypercube defined by the assignment function $f(X) = (\ind^i_1(X), \ind^j_2(X), \ind^\ell_3(X))$ with \[{\ind}_z^i: (x_1, ..., x_{k}) \mapsto x_{s_z[1]}\cdot n^{4k/9-1} + ... + x_{s_z[i]}\cdot n^{4k/9-i},\] for $z \in \{1,2,3\}$.
    The resulting coarse hypercube has scaled side length of $n^i \times n^j \times n^\ell$.
    By \Cref{lem:forbidden-decomposition-quasi-diagonal}, we can decompose the forbidden region of $\bar e$ into $O(n^{\lambda (2k/3 + 1)})$ many quasi-diagonals of size $m = O(n^{(1-\lambda)(2k/3 + 1)})$, which are identical except for translation, each having intrinsic dimension $2$.
    Note that having intrinsic dimension $2$ implies that all regions in the two intrinsic dimensions are extended to the complete third dimension, which we denote by $[\infty]$. 
    
    We construct the Translated Orthant instance with shapes $\mathcal{Z}$ and translation objects $\mathcal S$.
    For a single quasi-diagonal with intrinsic dimensions $d_1, d_2$ and origin $q$, we construct a single shape $Z$ covering the complement of the quasi-diagonal: Let $r_i, r_{i+1}$ be neighboring elements in the quasi-diagonal with $r_i \prec r_{i+1}$. We add the orthants $(-\infty, (r_{i+1})_{d_1}] \times [(r_i)_{d_1}, \infty) \times [\infty]$ and $[(r_i)_{d_2}, \infty) \times (-\infty, (r_{i+1})_{d_2}] \times [\infty]$, a visual example is given in \Cref{fig:quasi-diagonal}.
    We additionally add orthants covering the space before the first ($r_1$) and after the last ($r_\ell$) element of the quasi-diagonal, that is we add $(-\infty, (r_1)_{d_1}] \times [\infty] \times [\infty]$, $[\infty] \times (-\infty, (r_1)_{d_2}] \times [\infty]$ and $[(r_\ell)_{d_1}, \infty) \times [\infty] \times [\infty]$, $[\infty] \times [(r_\ell)_{d_2}, \infty) \times [\infty]$.
    Clearly, $Z$ covers exactly the complement of the quasi-diagonal and consists of $m + 4$ many orthants.
    For each of the $O(n^{\lambda (2k/3 + 1)})$ quasi-diagonals with origin $q'$, we add a translation object translating $Z$ by $(q' - q)$.
    As the quasi-diagonals are identical when translated by $(q' - q)$, the translated shape of $Z$ covers exactly the complement of the quasi-diagonal with origin $q'$.

    Next, we cover the complement of the inconsistent cells for a redundantly encoded vertex part $V_{e_i}$, with $e \in \{x,y,z\}$, $i \in [2k/9+1,k/3]$.
    Handling the redundant indexing of $V_{e_i}$ as a new vertex part $V_{e_i}'$, we see that the inconsistent regions corresponding to $V_{e_i}$ can be represented as $O(n^2)$ many $2$-uniform non-edges of the form $(x, x') \in V_{e_i} \times V_{e_i}'$ with $x,x' \in [n]$ and $x \neq x'$.
    Let $j, \ell$ be the indices at which $s_z[j] = e_i, s_{z'}[\ell] = e_i'$ for some $z,z' \in \{1,2,3\}$.
    As before, we can coarsen our hypercube to be of dimension $n^{j} \times n^{\ell} \times n^0$.
    By the definition of the prefix covering sequences, we have $j + \ell \leq 2k/3+1$.
    We can thus apply \Cref{lem:forbidden-decomposition-quasi-diagonal}, yielding a decomposition of the forbidden, here inconsistent, regions into $O(n^{\lambda (2k/3 + 1)})$ many quasi-diagonals of size $m = O(n^{(1-\lambda)(2k/3 + 1)})\leq O(n^{(1-\lambda)2k/3 + 1})$.
    By the same construction as for the forbidden regions of non-edges, we can cover the complement of the quasi-diagonals by a single instance of the Translated Orthant Problem with one shape consisting of $m + 4$ orthants and $O(n^{\lambda (2k/3 + 1)}) \leq O(n^{\lambda2k/3 + 1})$ translation objects.
    Doing so for all $n^2$ constructed non-edges, we get a combined instance with $O(n^{\lambda 2k/3 + 3})$ translation objects and a total of $O(n^{(1-\lambda)2k/3 + 3})$ orthants.

    For the runtime of this construction, note that for each non-edge we can construct the instance of the Translated Orthant Problem in time $O(n^{\lambda(2k/3+1)}) \leq O(n^{\lambda2k/3+1})$ by \Cref{lem:forbidden-decomposition-quasi-diagonal}. Doing so for all $O(n^3)$ non-edges takes time $O(n^{\lambda2k/3 + 4})$.
    The decomposition of the inconsistent regions into quasi-diagonals and the construction of the complement-covering instances of the Translated Orthant Problem also runs in $k/3 ~n^{\lambda2k/3 + 3} = O(n^{\lambda2k/3 + 3})$ time.

    For correctness, all inconsistent regions are not covered by at least one of the Translated Orthant instances.
    For the consistent regions, for each non-edge, we cover the feasible region of at least one (of possibly multiple redundant) of the non-edge's encodings. As a cell corresponding to a $k$-tuple $C$ is only consistent if it includes all encodings of $C$, a cell lying in all feasible regions encodes a $k$-tuple which does not include a non-edge. Conversely, any $k$-hyperclique does not include any non-edge and is thus in the intersection of all Translated Orthant instances.
\end{proof}

\begin{corollary}
    \label{cor:noncomb-LB-3D}
    Let $\varepsilon > 0$.
    For any $\lambda \in (0, 1/2]$, there is no $O((nm)^{3/2 - \varepsilon})$ (non-combinatorial) algorithm for directed Hausdorff under Translation in $3$-D with $m = \Theta(n^\lambda)$ under the $k$-hyperclique hypothesis.
\end{corollary}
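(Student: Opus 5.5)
The plan is to chain \Cref{thm:pcd-reduction-3D} with \Cref{lem:translated-orthant-reduction} and \Cref{lem:tpwo-to-tpwc}, then compare exponents. Fix $\varepsilon>0$ and $\lambda \in (0,1/2]$. Suppose, for contradiction, that directed HuT in 3-D with $m=\Theta(n^\lambda)$ has an $O((nm)^{3/2-\varepsilon})$ algorithm. Let $\mu\in[2/3,1]$ denote the parameter of \Cref{thm:pcd-reduction-3D}, to avoid a clash with the corollary's $\lambda$.

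\Cref{thm:pcd-reduction-3D} turns a $3$-uniform $k$-hyperclique instance on $N=|V|$ vertices into a Translated Orthant instance with $n'=\Theta(N^{2\mu k/3+4})$ translation objects and $m'=\Theta(N^{2(1-\mu)k/3+4})$ orthants. It does so in time $O(N^{2\mu k/3+4})$. \Cref{lem:translated-orthant-reduction} then gives a TPwO instance with $\Theta(n')$ points and $\Theta(m')$ orthants. Finally \Cref{lem:tpwo-to-tpwc} gives a TPwC instance, which is exactly directed HuT with $|P|=\Theta(n')$ and $|Q|=\Theta(m')$.

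For the parameters, choose $\mu$ so that $m'=\Theta(n'^{\lambda})$ up to lower-order terms. Ignoring the additive $+4$, this requires $(1-\mu)=\lambda\mu$, i.e.\ $\mu=1/(1+\lambda)$. This value lies in $[2/3,1)$ exactly when $\lambda\le 1/2$, which is why the corollary is restricted to $\lambda\le 1/2$. The additive constants $+4$ spoil exact proportionality. To fix this, I would perturb $\mu$ by $O(1/k)$, as in the choice of $\lambda'$ in the proof of \Cref{thm:lopsided-directed-LB}, so that $m'=\Theta(n'^\lambda)$ holds exactly. If needed, I would additionally pad the smaller side with dummy points or orthants, which changes sizes only by constant factors. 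The constraint $\mu\ge 2/3$ stays valid for large $k$ when $\lambda<1/2$. At $\lambda=1/2$ one can instead pad $P$ while keeping $\mu=2/3$.

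Now count the running time. We have $n'm'=\Theta(N^{2k/3+8})$, so the hypothesized algorithm runs in time $O(N^{(2k/3+8)(3/2-\varepsilon)})=O(N^{k+12-\varepsilon(2k/3+8)})$. The reduction itself costs $O(N^{2\mu k/3+4})$, which is at most $N^{2k/3+4}$ and hence below $N^{k-\delta}$ for large $k$. Choose $k$ (divisible by $3$, or $9$ if the prefix covering sequences need it) large enough that $2k\varepsilon/3>12+\delta$ for some $\delta>0$. Then the total time is $O(N^{k-\delta})$, contradicting the $3$-uniform $k$-hyperclique hypothesis.

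The main obstacle is the bookkeeping of the additive lower-order exponents, namely the $+4$ together with the perturbation of $\mu$ needed to hit $m=\Theta(n^\lambda)$ exactly. I would handle it the same way as in \Cref{thm:lopsided-directed-LB}: absorb every additive constant into the $\varepsilon k$ slack by choosing $k$ sufficiently large.
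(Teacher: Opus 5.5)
Your proposal is correct and matches the paper's proof: the paper also applies \Cref{thm:pcd-reduction-3D} with $\lambda' = \frac{1}{1+\lambda} + \frac{12(1-\lambda)}{2k(1+\lambda)}$, reduces via TPwO to directed HuT, and obtains the same running-time exponent $k - 2k\varepsilon/3 + 8(3/2-\varepsilon)$. One small remark: this correction pushes $\mu$ \emph{above} $1/(1+\lambda)\ge 2/3$, so $\mu \ge 2/3$ holds automatically, including at $\lambda=1/2$, where your padding fallback is therefore unnecessary; the only constraint that needs $k$ large is $\mu\le 1$.
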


\begin{proof}
    Choose $k$, $\varepsilon' > 0$ such that $2k\varepsilon/3 \geq 8(3/2-\varepsilon) + \varepsilon'$ holds and $\lambda'= \frac{1}{1+\lambda} - \frac{12(\lambda-1)}{2k(1+\lambda)} > 0$.
    By \Cref{thm:pcd-reduction-3D}, we reduce $3$-uniform $k$-hyperclique on $V$ vertices to the Translation Problem with Orthants, which has a reduction to directed Hausdorff under Translation, with $n = \Theta(|V|^{\lambda'2k/3 + 4})$ and $m = \Theta(|V|^{(1-\lambda')2k/3 + 4})$, which implies $m = \Theta(n^\lambda)$.

    A $O((nm)^{3/2 - \varepsilon})$ algorithm for directed Hausdorff under Translation in $3$-D for $m = \Theta(n^\lambda)$ thus gives an algorithm for $3$-uniform $k$-hyperclique in time \[
    O\left(|V|^{\lambda'2k/3 + 4} |V|^{(1-\lambda')2k/3 + 4}\right)^{3/2-\varepsilon} = O(|V|^{k - 2k\varepsilon/3 + 8(3/2-\varepsilon)}) \leq O(|V|^{k-\varepsilon'}).
    \]
\end{proof}

It is possible to extend this approach to higher dimensions. Unfortunately, the obtained lower bounds introduce a gap between lower and upper bound, increasing with $d \geq 4$, and only apply to smaller values $\lambda$ so that $m \in \Theta(n^\lambda)$.

Prefix covering designs for related problems in higher dimension were studied by Gorbachev and Künnemann \cite{gorbachev_combinatorial_2023}. 
They showed that for $d \geq 4$ no sequences $s_1, ..., s_d$ exist so that any $3$-tuple lies in a combined prefix of size at most $2k/d + 1$ (properties (2), (3) of \Cref{obs:good-pcd-designs}). As a result, if we use this approach of prefix covering designs, we are not able to achieve tight lower bounds. However, they introduce a parameter $\gamma_d$, which essentially denotes the best exponent provable via a direct application of prefix covering designs. For directed Hausdorff under Translation in $d$-D using prefix covering designs as presented here, we can obtain a non-combinatorial lower bound of $(nm)^{\gamma_d - o(1)}$. (The values for $\gamma_d$ can be found in \cite{gorbachev_combinatorial_2023}; notably, for $d \geq 4$ they prove $\gamma_d < d/2$.)

These bounds also only apply for small, polynomial-sized, $m \ll n$ since we cannot bound the maximum length of the prefix covering design covering a non-edge.
While there is some (probably small) room for improvement to increase the ratio of $m$ to $n$, we leave out a further analysis of these cases for $d \geq 4$. 

\begin{remark}
    \label{rem:pcd-reduction-all-D}
    Let $\varepsilon > 0$, $d\geq 4$.
    There is no $O((nm)^{\gamma_d - \varepsilon})$ (non-combinatorial) algorithm for directed Hausdorff under Translation in $d$-D with $m = O(n^{\varepsilon/2})$ under the $3$-uniform $k$-hyperclique hypothesis.
\end{remark}

\subsection{A Faster Algorithm for \texorpdfstring{$m \gg n$}{m >> n}}
\label{sec:apx-other-side}
For $n = m^{o(1)}$, \Cref{thm:lopsided-directed-LB} gives a $(nm)^{\lfloor d/2 \rfloor - o(1)}$ lower bound, which for even $d$ matches the best known algorithm running in time $O((nm)^{d/2})$ \cite{Chan23}, while for odd $d$ it seems to have some slack.
Intuitively, one might expect that the known $(nm)^{d/2 - o(1)}$ combinatorial lower bound for $m = O(n)$ \cite{Chan23} also transfers to the case of $m \gg n$.
Surprisingly, in $3$-D the $(nm)^{1 - o(1)}$ lower bound of \cite{BringmannN21} is tight for $n = m^{o(1)}$ as we give an $O(m^{1 + o(1)})$ algorithm.

The algorithm uses a structural insight from~\cite[Theorem 7]{chew1999GeometricPatternMatching}.
Given a directed HuT instance with points $P$ and hypercubes $\mathcal Q$ in $d$-dimensions, they already gave the combinatorial proof that the region of feasible translations, i.e., the region of translation vectors for which all points $P$ are translated into the hypercubes of $\mathcal Q$, has $O(m^{\lfloor d/2 \rfloor}n^d)$ many vertices. 
However, their proof was not algorithmic yet, as it needed the explicit construction (of the vertices) of the union of $O(m)$ unit hypercubes in time $O(m^{\lfloor d/2 \rfloor})$. 
To our knowledge, such an algorithm is not yet known for arbitrary odd $d$. 
For the case $d = 3$ though, such an algorithm has recently been introduced by Agarwal and Steiger~\cite{agarwal_output-sensitive_2021}, which computes the vertices of the union of $n$ unit cubes in an output-sensitive runtime of $\tOh(n)$. 
We can thus turn the combinatorial observation into an algorithm for the directed Hausdorff under Translation problem in $3$-D, additionally filtering vertices that do not correspond to valid translations.

\begin{theorem}
    \label{thm:lopsided-algo-3D}
    Directed Hausdorff under Translation in $3$-D can be solved in time $\tOh(n^4m)$.
\end{theorem}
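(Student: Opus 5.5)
We work in the translation-problem view of \Cref{sec:translation-problems}. Let $\mathcal Q$ be the $m$ axis-aligned cubes of side $2\delta$ centred at the points of $Q$, and let $U=\bigcup_{C\in\mathcal Q} C$. For $p\in P$ put $U_p = U - p$. The feasible region is $T^*=\bigcap_{p\in P} U_p$, and the task is to decide whether $T^*\neq\emptyset$.

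\textbf{Step 1.} Build the boundary of $U$ with the output-sensitive union algorithm of Agarwal and Steiger~\cite{agarwal_output-sensitive_2021}. It runs in $\tOh(m)$ time, because the union of $m$ unit (equal-size) cubes in $3$-D has complexity $O(m)$. We store the boundary as $O(m)$ axis-parallel rectangular faces, edges and vertices. By the combinatorial result of~\cite[Theorem 7]{chew1999GeometricPatternMatching}, $T^*$ has $O(m^{\lfloor 3/2\rfloor}n^3)=O(mn^3)$ vertices.

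\textbf{Step 2.} If $T^*$ is nonempty, it is a bounded orthogonal polyhedral set. Each of its connected components therefore has a vertex, for instance its lexicographically smallest point. Such a vertex is the intersection of three axis-orthogonal facets, one orthogonal to each axis. The facets come from boundaries of at most three translates $U_{p_1},U_{p_2},U_{p_3}$.

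We enumerate candidates in one of two ways. The first is to take every boundary vertex of some $U_p$, which gives $O(nm)$ points. The second is to take every triple of points $p_1,p_2,p_3$ together with coordinates realised by the boundary of $U$ in directions $x,y,z$ respectively. This second option is where an output-sensitive enumeration is needed, since a naive enumeration costs $O(n^3m^3)$. We avoid it as follows.

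Fix the pair $(p_1,p_2)$ and compute the $1$-dimensional edges of $\partial U_{p_1}\cap\partial U_{p_2}$. For a fixed pair of translates this intersection has complexity $O(m)$ in the unit-cube setting, which is exactly the fact that gives Chew et al.\ the $O(mn^3)$ vertex bound. It can be computed in $\tOh(m)$ by sweeping the two orthogonal face structures. For each such edge and each $p_3$, the vertices on the edge are its intersections with $\partial U_{p_3}$, found by querying a $1$-D slice of $U_{p_3}$.

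I expect this enumeration step to be the main obstacle. The issue is to bound the number of edge/facet crossings by the vertex count $O(mn^3)$ rather than a product. If this gets too delicate, the fallback is to accept the weaker $\tOh(n^3m)$ candidate generation, charging $\tOh(m)$ per triple.

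\textbf{Step 3.} Filter the candidates. Each candidate $\tau$ is a valid translation if and only if $p+\tau\in U$ for all $p\in P$. We test this with $n$ orthogonal range or point-location queries against the $L_\infty$ nearest-neighbour structure on $Q$, each in $\tOh(1)$ time, and accept if some candidate passes. This costs $\tOh(n)$ per candidate.

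Combining the steps, the total running time is $\tOh(mn^3)\cdot\tOh(n)=\tOh(n^4m)$, plus $\tOh(m)$ for the preprocessing in Step 1. Correctness follows because every nonempty $T^*$ contains a vertex, and every vertex of $T^*$ appears among the enumerated candidates.
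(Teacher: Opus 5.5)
There is a genuine gap in Step 2, at the point you flagged yourself. Your argument never shows how to produce the candidates for a fixed triple $(p_1,p_2,p_3)$ in $\tOh(m)$ time, nor why there are only $O(m)$ of them. Three pieces are missing:
\begin{itemize}
\item The ``1-D slice query'' of $U_{p_3}$ is unspecified. Done naively it costs $\Theta(m)$ per edge, which is $O(n^3m^2)$ overall.
\item The bound on the number of edge/facet crossings is left open. It also cannot be bounded by the vertex count of $T^*$, since crossings need not be vertices of $T^*$.
\item The fallback ``charging $\tOh(m)$ per triple'' is asserted with no mechanism.
\end{itemize}
Note that $\tOh(n^3m)$ candidate generation is not a weaker fallback. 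It is exactly the budget you need, since filtering already costs $\tOh(n)$ per candidate. Applying the Agarwal--Steiger algorithm to $U$ alone, as in your Step 1, does not give you these triple vertices.

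The missing idea, which is the paper's proof, is to apply the union algorithm to \emph{unions of at most three layers}. By the inclusion-minimality argument of Chew et al., every vertex $v$ of $T^*=\bigcap_{p\in P}U_p$ is a vertex of $\bigcup_{p\in P'}U_p$ for some $P'\subseteq P$ with $|P'|\le 3$. To see this, take $P'$ minimal such that $v$ is a vertex of $\bigcap_{p\in P'}U_p$. If $v$ were interior to some $U_p$ with $p\in P'$, you could drop $p$, contradicting minimality. Each such union consists of at most $3m$ congruent axis-parallel cubes, so it has $O(m)$ vertices, and Agarwal--Steiger lists them in $\tOh(m)$ time. Over the $O(n^3)$ subsets this gives $O(n^3m)$ candidates in $\tOh(n^3m)$ time, and your Step 3 then finishes in $\tOh(n^4m)$.

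This also fixes an omission in your candidate families. Consider a vertex of $T^*$ whose $x$- and $y$-facets come from $U_{p_1}$ and whose $z$-facet comes from $U_{p_2}$. It is neither a vertex of a single $U_p$ nor a crossing with a third boundary, unless you also add the endpoints of the edges of $\partial U_{p_1}\cap\partial U_{p_2}$. It is, however, a vertex of $U_{p_1}\cup U_{p_2}$.

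Your own route can be salvaged with the same fact. In general position, a point of $\partial U_{p_1}\cap\partial U_{p_2}\cap\partial U_{p_3}$ is a vertex of $U_{p_1}\cup U_{p_2}\cup U_{p_3}$, so the crossings number $O(m)$ per triple. They can then be reported output-sensitively: decompose the faces of $\partial U$ into $O(m)$ rectangles and stab them with the translated segments $e+p_3$ in a multi-level segment tree. Once the union algorithm is run per triple, however, that machinery is unnecessary.
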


\begin{proof}

    Given an instance of the directed Hausdorff under Translation Problem in $d=3$ dimensions with point sets $P$ and $Q$, for which we assume general position.
    
    Let $\mathcal{L} = \{L_1, ..., L_n\}$ be a collection of unions of unit hypercubes, called layers, where $L_i = \{q + [-\delta, \delta]^d - p_i \mid q \in Q\}$, i.e., the region of translations feasible for $p_i \in P$. To decide whether the instance has a solution, it would be sufficient to test all vertices of $\bigcap \mathcal{L}$ as translation vectors. While doing so directly, would be too costly, the following algorithm achieves this efficiently.
   
    For each set of at most $d$ many layers\footnote{Since our arguments carry over to $d>3$, we write $d$ rather than 3 in the following proof and corollary.} $\mathcal L' \subset \mathcal L$, we compute the union of its layers, $\bigcup_{L' \in \mathcal L'} L'$, and its resulting vertices $V(\mathcal L')$. 
    Let $V = \bigcup V(\mathcal L')$ be the union over all these vertex sets.
    For each $v \in V$, we check whether it is a feasible translation for all points $p_i \in P$, meaning that there is a point $q \in Q - p_i$ in distance $|q - v|_\infty \leq \delta$.
    To do that, we precompute a $d$-dimensional range tree $\RT_i$ for each\footnote{Alternatively, we could compute a range tree for $Q$ and query with $v + p_i + [-\delta, \delta]^d$.} point set $Q - p_i$. 
    For each $v \in V$ and range tree $\RT_i$ we perform an orthogonal range query $v + [-\delta, \delta]^d$.
    If for each range tree, we find a point in the queried range, we return $v$ as a feasible translation.
    If this procedure does not yield a result, we return NO.

    Computing the range trees takes $O(n \cdot m \log^{d-1}m)$ time while queries run in $O(\log^{d-1}m)$ \cite{AlstrupBR00}.
    We have $O(n^d)$ many configurations of at most $d$ layers, each of these configurations consists of $dm = O(m)$ many unit hypercubes.
    For $d = 3$, computing the vertices of the union of these hypercubes takes time $\tOh(m^{\lfloor d/2 \rfloor})$ by an algorithm from \cite{agarwal_output-sensitive_2021}. 
    For each of the resulting $\tOh(n^dm^{\lfloor d/2 \rfloor})$ vertices it takes $\tOh(n)$ time to query all range trees. 
    Setting $d = 3$ gives the stated runtime.

    For correctness, consider a vertex $v$ of the intersection of $\bigcap \mathcal L = \bigcap_{L \in \mathcal L} L$. 
    Chew, Dor, Efrat and Kedem~\cite{chew1999GeometricPatternMatching} previously observed that $v$ also is a vertex of the union of $\bigcup_{L \in \mathcal L'} L$ for some $\mathcal L' \subseteq \mathcal L$ with $|\mathcal L'| \leq d$:
    Given $v$, let $\mathcal L'$ be the inclusion-minimal subset of layers $\mathcal L' \subset \mathcal L$ such that $v$ is a vertex of $\bigcap \mathcal L'$, i.e., there is no $\mathcal L'' \subset \mathcal L'$ such that $v$ is a vertex of $\bigcap\mathcal L''$.
    For the sake of contradiction, assume that $v$ is not a vertex of the union $\bigcup \mathcal L'$ but lies in the interior of some $L \in \mathcal L'$ as we have assumed general position. 
    Then, $v$ is also a vertex of $\bigcap (\mathcal L' \setminus \{L\})$ and $\mathcal L'$ is not inclusion-minimal.
    Note that such an inclusion-minimal subset may contain at most $d$ layers.

    As we construct all subsets $\mathcal L' \subseteq \mathcal L$ with $|\mathcal L'| \leq d$ and list all vertices of $\bigcup \mathcal L'$, we construct a set including all vertices of the intersection $\bigcap \mathcal L$.
    A solution to the directed HuT instance equivalently lies within $\bigcap \mathcal L$, so iff this region is non-empty, a vertex of this intersection is a feasible translation vector to the directed HuT instance.
    As we check for each constructed vertex whether it is a feasible translation vector, we must find a solution iff there exists one.
\end{proof}

The algorithm directly generalizes to higher dimensions in time $\tOh(m^{\lfloor d/2 \rfloor} n^{d+1})$ assuming an -- unfortunately still unknown -- algorithm that computes the vertices of the union of $m$ unit hypercubes in $d$-D in time $\tOh(m^{\lfloor d/2 \rfloor})$.

\Cref{thm:lopsided-algo-3D} also shows that a $(nm)^{d/2 -o(1)}$ lower bound for arbitrary distributions of $n,m$ does not exist for $d=3$, and we give compelling intuition for a similar result in odd $d>3$.
More specifically, there cannot be a $(nm)^{1 + \varepsilon - o(1)}$ lower bound with constant $\varepsilon > 0$ that applies for $n = O(m^{\lambda})$ with an arbitrary constant $\lambda > 0$.

\begin{corollary}
    \label{cor:d/2-floor-algo}
    Let $d=3$, $\kappa > 0$ and set $\lambda = \frac{\kappa}{d+2}$.     
    For $n = O(m^{\lambda})$, directed Hausdorff under Translation in $d$-D can be solved in time $O(m^{\lfloor d/2\rfloor}n^{d+1}) < O(m^{\lfloor d/2\rfloor +\kappa})$.
\end{corollary}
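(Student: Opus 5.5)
The plan is to invoke \Cref{thm:lopsided-algo-3D} directly and then do an exponent computation. For $d=3$, \Cref{thm:lopsided-algo-3D} gives running time $\tOh(n^4 m) = \tOh(n^{d+1} m^{\lfloor d/2 \rfloor})$, since $\lfloor 3/2\rfloor = 1$ and $d+1 = 4$. So the first step is to rewrite the bound of \Cref{thm:lopsided-algo-3D} in this form.

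The second step substitutes $n = O(m^\lambda)$ with $\lambda = \kappa/(d+2)$. This yields
\[
n^{d+1} m^{\lfloor d/2\rfloor} = O\bigl(m^{\lfloor d/2\rfloor + \lambda(d+1)}\bigr) = O\bigl(m^{\lfloor d/2\rfloor + \kappa\frac{d+1}{d+2}}\bigr).
\]
Because $\frac{d+1}{d+2} < 1$ and $\kappa > 0$, the exponent is strictly less than $\lfloor d/2\rfloor + \kappa$. The gap is $\kappa/(d+2) > 0$, a positive constant.

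The only subtle point, and the main obstacle, is the polylogarithmic factor hidden in $\tOh$. The corollary states a plain $O(\cdot)$ bound, so I would absorb the $\log^{O(1)} m$ factor (note $\log n = O(\log m)$) into the slack $m^{\kappa/(d+2)}$. This uses the fact that $\mathrm{polylog}(m) = O(m^{\eta})$ for any constant $\eta > 0$. The resulting total bound is $O(m^{\lfloor d/2\rfloor + \kappa})$, and indeed strictly smaller in the polynomial sense. Formally, the strict inequality ``$<$'' should be read as $O(m^{\lfloor d/2\rfloor + \kappa - \eta})$ for some constant $\eta > 0$. Taking $\eta = \kappa/(2(d+2))$ and absorbing the logarithms into the remaining $m^{\kappa/(2(d+2))}$ gives exactly this.

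Finally, I would remark that for $d=3$ this specializes to $O(m^{1+\kappa})$ for $n = O(m^{\kappa/5})$. Since $\kappa$ is arbitrary, this rules out any $(nm)^{1+\varepsilon-o(1)}$ lower bound for polynomially lopsided instances, as claimed before the corollary.
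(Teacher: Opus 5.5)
Your proposal is correct and matches the paper's (implicit) argument: the corollary follows from \Cref{thm:lopsided-algo-3D} by substituting $n = O(m^{\kappa/(d+2)})$, so that $n^{d+1} = O(m^{\kappa(d+1)/(d+2)})$ leaves a polynomial slack of $m^{\kappa/(d+2)}$. Your explicit absorption of the polylogarithmic factor hidden in $\tOh$ into this slack is extra care that the paper glosses over.
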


Thus, our lower bound from \Cref{thm:lopsided-directed-LB} is tight for $n = m^{o(1)}$.
Anyway, there is a tight (combinatorial) lower bound of $(nm)^{d/2 -o(1)}$ for $m \in O(n)$ and arbitrary $d \geq 3$, which points to a discrepancy in complexity. 
It begs the question of where the cutoff point is located, i.e., for which values of $n,m$ do we start to get faster than $(nm)^{d/2\pm o(1)}$ algorithms?

\section{Connections to Additive Problems}
\label{sec:apx-additive-problems}
As Hausdorff under Translation has been studied from the angle of additive problems \cite{gokaj_completeness_2025}, we place it in relation to different additive problems. While we mostly focus on discrete HuT, we also show hardness results for the continuous variant.

We organize this section as follows. First, we give lower bounds for directed Hausdorff under Translation in $1$-D in both the discrete and continuous variant, giving evidence that the difference seen in the algorithm runtimes for the directed and undirected case (see \Cref{tab:results-d1-d2}) stems from an intrinsic difference in complexity.

In the second subsection, we consider the discrete variant, which is known to be complete for the class of $\FOPZ(\exists\forall\exists)$ problems \cite{gokaj_completeness_2025}, meaning that all formulas in the class admit a (possibly high-dimensional) fine-grained reduction to a discrete HuT instance. 
Roughly speaking, the class $\FOPZ(\exists\forall\exists)$ contains all formulas of the type $\exists a \in A ~\forall b \in B~ \exists c \in C : \varphi(a,b,c)$, where $\varphi$ is a quantifier free linear arithmetic formula over the multidimensional vectors $a,b,c$; for a precise definition see \Cref{sec:definitions}.

We further extend the progress on class-based lower bounds by showing that discrete HuT is hard for all problems $\mathsf{FOP}_{\mathbb{Z}}(\forall \exists \exists)$ which can be expressed by using at most $d \leq 3$ inequalities (i.e., are of inequality dimension $d \leq 3$). 
In the context of upper bounds in dimension $d \leq 3$, we provide a reduction to 3-SUM for the balanced case and a slight variant of $3$-SUM called All-Ints $3$-SUM$(n,m,k)$ for the unbalanced case. 

\subsection{\texorpdfstring{Lower Bounds for $d = 1$}{Lower Bounds for d = 1}}
\label{sec:lower-bounds-1D}
We first focus on the one-dimensional case. 
While we have an almost-linear algorithm for the undirected HuT problem \cite{rote1991ComputingMinimumHausdorff}, which directly transfers to undirected DiscHuT as it constructs the complete region of feasible translations (see \Cref{sec:discHuT-1D-linear} for more details), our known algorithms for directed HuT have a running time of $\tOh(nm)$ \cite{huttenlocher1990ComputingMinimumHausdorff}. 
For Discrete HuT, one obtains additional straightforward algorithms running in time $\tOh(t\cdot\min(n,m))$, shortly described in \Cref{sec:baseline}, which in the balanced case turns to $\tOh(n^2)$.

Lower bounds for $1$-dimensional (directed) HuT seem particularly difficult to achieve as the single dimension allows little degree of freedom that is necessary for higher-dimensional lower bounds.  
Therefore, we turn our attention to additive problems. Cygan, Mucha, Wegrzycki and Wlodarczyk~\cite{cygan2019ProblemsEquivalentMin+Convolution} studied several variants of $(\min,+)$-Convolution, where they used MaxConv LowerBound as hardness barriers for giving sub-quadratic algorithms for $L_\infty$-Necklace Alignment.
Specifically, they showed a subquadratic reduction from MaxConv LowerBound to $L_\infty$-Necklace Alignment, which in turn has a subquadratic reduction to $(\min,+)$-Convolution \cite{BremnerCDEHILPT14}.

We also use these problems for lower bounds for (discrete) HuT. 
At first, we reduce from the Linear Alignment problem, which is known to be harder than the $L_\infty$-Necklace Alignment problem; see \cite{BremnerCDEHILPT14}.

\begin{definition}[Linear Alignment problem]
    Given sorted vectors $A,B$ of real numbers of size $n$ and $m$ respectively, where $m \geq n.$ 
    Compute an integer $s\leq m-n$ and a real-valued $c \in [0,1)$, which minimize the value 
    $\max_{i\in[n]} \left(\left | \left(A[i] +c \right) - B[i+s]  \right | \right).$
\end{definition}

\begin{theorem}
    \label{thm:necklace-LB}
    Let $P, Q$ be the given point sets of a $1$-dimensional HuT instance. 
    If we can compute a translation $\tau \in \R$ that minimizes $\hddist(P + \tau, Q)$ in time $T(|P|,|Q|)$, then the Linear Alignment problem can be solved in time $\tilde{O}(T(|A|,|B|) +\max(|A|,|B|)).$
\end{theorem}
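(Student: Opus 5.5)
The plan is to encode the index structure of the Linear Alignment problem into the coordinates of a one-dimensional HuT instance. The encoding should make any near-optimal translation decompose uniquely as $\tau = sL + c$, where $s$ is an integer shift and $c$ is a small real offset. Let $A=(A[1],\dots,A[n])$ and $B=(B[1],\dots,B[m])$ be the input. Let $R$ bound the spread of all entries, for instance $R = \max(|A[1]|,|A[n]|,|B[1]|,|B[m]|)+1$ (both vectors are sorted, so this takes constant time), and choose $L \ge 10R$. We set
\[
P = \{A[i] + iL \mid i\in[n]\}, \qquad Q = \{B[j] + jL \mid j \in [m]\},
\]
call the assumed $T(|P|,|Q|)$ algorithm on $(P,Q)$, and read off $s=\mathrm{round}(\tau/L)$ and $c = \tau - sL$ from the returned $\tau$. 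Building the instance takes linear time.

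First I show that every relevant translation induces a consistent shift. The translation $\tau_0 = 0\cdot L + c_0$ with any $c_0\in[0,1)$ already gives $\hddist(P+\tau_0,Q) \le 2R+1$, since each $p_i+\tau_0$ lies within $2R+1$ of $q_i$ (here $n\le m$). So the optimum has cost at most $2R+1 < L/4$. Suppose $\tau$ has cost $< L/4$ and $p_i+\tau$ is matched to $q_j$. Then $|(j-i)L - \tau| \le |A[i]-B[j]| + L/4 < L/2$, so $j-i$ is the same integer $s=\mathrm{round}(\tau/L)$ for every $i$. Consequently every $p_i$ must be matched to exactly $q_{i+s}$. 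This forces $1 \le 1+s$ and $n+s\le m$, so $0 \le s \le m-n$. It also gives $\hddist(P+\tau,Q) = \max_i |A[i]+c-B[i+s]|$, which is precisely the Linear Alignment objective. Conversely, every feasible pair $(s,c)$ yields the translation $\tau = sL+c$ with the same cost. Optimal values therefore coincide, apart from the restriction $c\in[0,1)$.

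The main obstacle is this constraint $c\in[0,1)$, because the HuT optimum may use an unconstrained $c$. My first attempt is to use the context of \cite{BremnerCDEHILPT14}, where entries lie in $[0,1)$. There I would argue that clamping is harmless. For fixed $s$ the cost $\max_i|A[i]+c-B[i+s]|$ is convex in $c$, so the best feasible $c$ for that $s$ is the projection of the unconstrained minimizer onto $[0,1]$. The difficulty is that the HuT optimum commits to a single $s$, and another shift with a constrained offset might be better. To handle this, I would add gadget points that restrict the translation range modulo $L$ to $[0,1]$. Concretely, I would add a far-away point $p^*$ together with points $q^*_k$ at offsets $p^* + kL + 1/2$ for $k=0,\dots,m-n$, with $\delta$-scaling chosen so that their tolerance is exactly $1/2$. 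The difficulty is that the optimization variant has no fixed $\delta$. So I would instead rescale $A,B$ so that the gadget constraint dominates only when $c$ leaves $[0,1]$, or, failing that, binary search over the decision version, which costs only the $\tilde O$ factor. Finally, $c=1$ versus $c\in[0,1)$ is handled by the usual identification $c=1 \leftrightarrow (s+1,c=0)$ after reindexing. The total running time is $\tilde O(T(|A|,|B|)+\max(|A|,|B|))$.
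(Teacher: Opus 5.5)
Your construction $P=\{A[i]+iL\}$, $Q=\{B[j]+jL\}$ is the paper's proof, and so is your argument that every translation of cost below $L/4$ matches each $p_i$ to $q_{i+s}$ for one common $s=\mathrm{round}(\tau/L)\in\{0,\dots,m-n\}$. Both are correct. Rounding is cleaner than the paper's split $\tau=sM+c$ with $c\in[0,M)$, under which "$j=i+s$" fails for $c$ close to $M$. There is one minor slip: $2R+1<L/4$ does not follow from $L\ge 10R$ when $R\le 2$. Use the bound $2R-1$ that you actually have, or take $L\ge 20R$. Note also that the paper never enforces $c\in[0,1)$. The identity it proves is $\min_{\tau}\hddist(P+\tau,Q)=\min_{s\in\Z,\,c\in\R}\max_i|A[i]+c-B[i+s]|$, with an unconstrained offset. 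Measured against the paper, your proof is finished at the end of your second paragraph.

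Your final paragraph claims to also enforce $c\in[0,1)$, and it does not succeed. Write $f_s(c)=\max_i|A[i]+c-B[i+s]|$.

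\textbf{Optimization variant.} The gadget changes the value to $\min_{s,c}\max(f_s(c),|c-1/2|)$. This differs from $\min_{s,\,c\in[0,1)}f_s(c)$, for example when the constrained optimum $0.01$ is attained only near $c=0.9$. Rescaling $A,B$ by $\alpha$ does not help: it maps the window $[0,1)$ to $[0,\alpha)$ and multiplies every cost by $\alpha$, so the mismatch between window and penalty is scale-invariant.

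\textbf{Decision variant.} With threshold $\delta$, the set of offsets admissible for the main points at shift $s$ is an interval that can be a single point anywhere in $[0,1]$. A gadget built without knowing these intervals must therefore admit all of $kL+[0,1]$ for every $k\in\{0,\dots,m-n\}$. A far-away gadget point, however, admits only a union of windows of length $2\delta$, one per partner in $Q$. So for $\delta<1/2$ you need about $(m-n+1)/(2\delta)$ extra points, which is not bounded in terms of $n,m$.

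\textbf{Normalization.} The identification $(s,1)\leftrightarrow(s+1,0)$ is false for linear alignment; it would require $B[j+1]=B[j]+1$. The only normalization available is in periodic instances such as those arising from Necklace Alignment. There $B$ is repeated with unit shifts, and $(s,c)$ and $(s+n,c+1)$ have equal cost.

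So either state and prove the result for an unconstrained offset, as the paper does, or replace the gadget by such a periodicity argument for the instances you actually need.
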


\begin{proof}
    Let $n,m$, with $m \geq n$, denote the length of $A$ and $B$, respectively, which is the input of the Linear Alignment problem. 
    Moreover, let $M$ be a sufficiently large number, e.g., $M=10 nm \cdot \max_{x \in A \cup B}|x|$. We construct the sets
    \begin{align*}
        P&= \{A[i]+ iM \mid i \in [n] \},\\
        Q&= \{B[i] +iM \mid i \in [m] \}.
    \end{align*}
    
    To show the correctness of our reduction it suffices to show that 
    \begin{align*}
        \min_{\tau \in \mathbb{R}}\max_{p \in P} \min_{q \in Q} \left( |(p +\tau)-q| \right) = \min_{s \in \Z,c \in \mathbb{R}} \max_{i \in [n]} \left( \left| \left( A'[i]+c \right) - B'[i+s] \right| \right).
    \end{align*}
    
    Assume that $\tau:= sM+c$ for $s\in \mathbb{Z}$, $c\in [0,M)$ is chosen such that the left-hand side is minimized, being equal to value $\delta^*$. 
    We know that for $\tau = 0$ the left-hand side of the equation is at most $M$, so we may assume $\delta^* < M$. This implies for our Linear Alignment instance that  
    \begin{align*}
        \forall {i \in [n]} ~ \exists {j \in[m]}:  \left| \left( A[i]+iM + sM +c \right) - (B[j] + jM) \right| \leq \delta^*,
    \end{align*}
    Clearly, $s \in \{0, ..., m-n\}$ as $i+s \leq m$ holds due to the large value of $M$.
    For the same reason, we can set $j = i+s$ and thereby
    \begin{align*}
        \forall {i \in [n]} :  \left| \left( A[i]+c \right) - (B[i+s]) \right| \leq \delta^*,
    \end{align*}
    which concludes this direction of the proof.
    
    For the other direction, let $s' \in \{0, ..., m-n\}, c' \in \mathbb{R} $ such that the right-hand side is minimized with value $\delta'$, so we have
    \begin{align*}
        \forall i \in [n]: \left| \left( A'[i]+c' \right) - B'[i+s'] \right| \leq \delta'.
    \end{align*}
    We choose $\tau:= s'M+ c$ in the directed HuT distance, which gives 
    \begin{align*}
        \forall {i \in [n]} ~ \exists {j \in[m]}: \left| \left( A[i]+iM + s'M +c \right) - (B[j] + jM) \right| \leq \delta'
    \end{align*}
    since we can choose $j=i+s'$ for a given $i$, concluding the proof.
\end{proof}

By leveraging the reduction from $L_\infty$-Necklace Alignment to Linear Alignment\footnote{Essentially, it suffices to duplicate the array $B$; see \cite{BremnerCDEHILPT14} for a correctness proof.}, we get the following lower bound for the balanced version of HuT, i.e, with $n=\Theta(m)$.
\begin{corollary}
    \label{cor:necklace-LB}
    If 1-dimensional balanced HuT can be solved in time $T(n)$, then the $L_{\infty}$-Necklace Alignment problem can be solved in time $O(T(n)).$
\end{corollary}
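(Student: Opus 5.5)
The plan is to compose the reduction from $L_\infty$-Necklace Alignment to Linear Alignment (the array-duplication trick of~\cite{BremnerCDEHILPT14}) with \Cref{thm:necklace-LB}. Given a Necklace Alignment instance with two sorted arrays $A, B$ of length $n$ of reals in $[0,1)$ (points on the circle), I construct a Linear Alignment instance with $A$ unchanged and $B' = (B[1], \dots, B[n], B[1]+1, \dots, B[n]+1)$ of length $2n$. Then $B'$ is sorted, and a cyclic shift $s \in \{0,\dots,n-1\}$ together with an offset $c \in [0,1)$ in the necklace problem corresponds exactly to the linear shift $s$ into $B'$ with the same $c$. The reason is that $B'[i+s]$ equals $B[(i+s) \bmod n]$, plus $1$ precisely when the index wraps around, which is how the circular distance is measured. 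I will cite~\cite{BremnerCDEHILPT14} for correctness of this step rather than reprove it, mirroring the footnote.

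Next I apply \Cref{thm:necklace-LB} to the Linear Alignment instance $(A, B')$. This produces a one-dimensional HuT instance with $|P| = n$ and $|Q| = 2n$, so the instance is balanced. By the theorem, the Linear Alignment optimum, and hence the Necklace Alignment optimum, can be read off from an optimal translation $\tau = sM + c$. The running time is $T(n) + O(n)$, up to the logarithmic factors that \Cref{thm:necklace-LB} allows. Since $T(n) = \Omega(n)$ trivially, because any algorithm must read its input, the total is $O(T(n))$ up to these polylogarithmic overheads. I will suppress the polylogarithmic overhead in the same way the corollary's statement implicitly does, or note that the construction itself is linear, so only the bookkeeping is linear.

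The main obstacle is aligning the parameter constraints of the two problems. First, Linear Alignment as defined restricts $c \in [0,1)$, whereas \Cref{thm:necklace-LB} works with $c \in [0,M)$ and $s \in \mathbb{Z}$. Second, the necklace offset may naturally be taken in $[0,1)$ or symmetrically around $0$. I will handle both points by observing three facts. Shifting by a full unit $c \mapsto c-1$ together with $s \mapsto s+n$ is equivalent in $B'$. Every optimal necklace value is at most $1/2$, which is below $M$. Finally, the shift range $s \le m-n = n$ covers all cyclic shifts. With these, the minima coincide and the reduction goes through.
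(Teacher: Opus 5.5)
Your route is the paper's: duplicate $B$ with the second copy shifted by $+1$, feed $(A,B')$ to \Cref{thm:necklace-LB}, and note that $|P|=n$ and $|Q|=2n$ keep the instance balanced. The paper offers nothing beyond this composition and the citation to \cite{BremnerCDEHILPT14}.

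The step that fails is your reconciliation of the offset range. Since $B'[j+n]=B'[j]+1$, the invariant move is $c\mapsto c+1$ together with $s\mapsto s+n$; your version $c\mapsto c-1$ changes every difference by $2$. Moreover, with only two copies and $s\le n$, this move relates only $s=0$ and $s=n$, so it cannot push an optimal alignment into $c\in[0,1)$. Under that restriction the minima need not coincide at all. Take $A=(0,0.3)$ and $B=(0.32,0.99)$. The necklace optimum is $0.015$, attained with offset $0.005$, matching $A[1]$ to $B[2]$ and $A[2]$ to $B[1]$. In $B'=(0.32,0.99,1.32,1.99)$ this alignment needs $s=1$ and $c=1.005$. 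Every choice with $c\in[0,1)$ has value larger than $0.02$. So the claim that the minima coincide with Linear Alignment as defined is false. (The paper's intermediate step through Linear Alignment with $c\in[0,1)$ is equally loose here.)

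The fix is to avoid the $[0,1)$ restriction entirely. The HuT translation $\tau=sM+c$ is unconstrained. What the proof of \Cref{thm:necklace-LB} actually establishes is that the HuT optimum equals the minimum of $\max_i|A[i]+c-B'[i+s]|$ over $s\in\{0,\dots,n\}$ and real $c$.

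Each such value is at least the circular cost of the corresponding cyclic shift with offset $c \bmod 1$, because a fixed lift is never closer than the circular distance. Conversely, an optimal necklace alignment lifts to a contiguous window of the periodic sequence $B[j]+\ell$; this is the part you may cite from \cite{BremnerCDEHILPT14}. An integer change of $c$ moves that window by $n$ positions, so you may assume it starts in the first copy of $B$, where it fits inside $B'$.

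Hence the two optima agree, and the necklace optimum and an optimal $(s,c)$ can be read off from $\tau$. Your running-time accounting then goes through unchanged.
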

Furthermore, by leveraging the reduction from MaxConv LowerBound to $L_\infty$-Necklace Alignment from~\cite{cygan2019ProblemsEquivalentMin+Convolution}, we can also get a lower bound from MaxConv LowerBound for the 1-dimensional balanced HuT problem.

\begin{definition}[MaxConv LowerBound, MaxConvLB]
    Given integer arrays $A,B,C$ of length $n$\footnote{Other common definitions use for $C$ an array of size $2n$. For convenience, we stick to this definition.}. Determine whether $C[k] \leq \max_{i+j=k} (A[i]+B[j])$ holds for all $k \in \{2, ..., n\}$. 
\end{definition}

\begin{corollary}
    \label{cor:balHuTMaxConv}
    If 1-dimensional HuT can be solved in time $T(n)$, then MaxConv LowerBound can be solved in time $O(T(n) \log(n)).$
\end{corollary}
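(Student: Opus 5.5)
This follows by chaining two known reductions with \Cref{cor:necklace-LB}. The corollary is stated with $T(n)$ for balanced HuT, so the goal is to turn a MaxConv LowerBound instance of size $n$ into an $L_\infty$-Necklace Alignment instance of size $O(n)$, and then into a balanced 1-dimensional HuT instance.

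\textbf{Step 1: MaxConv LowerBound to $L_\infty$-Necklace Alignment.} I will invoke the subquadratic reduction of Cygan, Mucha, Węgrzycki and Włodarczyk~\cite{cygan2019ProblemsEquivalentMin+Convolution}. As I recall it, that reduction decides a MaxConv LowerBound instance $(A,B,C)$ of length $n$ by making $O(\log n)$ calls to a Necklace Alignment oracle. Each call is on necklaces of length $O(n)$, with entries polynomially bounded in the input values. The $\log n$ factor comes from a binary search or scaling component: the reduction only asks whether the optimal alignment value is at most a threshold, which it checks through the optimization version. I must check that each instance has length $\Theta(n)$ for both necklaces, so that it falls within the balanced regime. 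If the two lengths differ by a constant factor, that is still balanced.

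\textbf{Step 2: Necklace Alignment to balanced HuT.} For each Necklace Alignment instance I apply \Cref{cor:necklace-LB}. Internally this duplicates $B$ to get a Linear Alignment instance with $|A|=n$ and $|B|=2n$. \Cref{thm:necklace-LB} then turns that into a 1-dimensional HuT instance with $|P|=n$ and $|Q|=2n$. The extra $\tilde O(n)$ overhead from \Cref{thm:necklace-LB} is dominated by $T(n)$, since $T(n)=\Omega(n)$ trivially because the input must be read. Each oracle call therefore costs $O(T(n))$ and uses the optimization version of HuT, which is exactly the hypothesis.

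\textbf{Step 3: Total cost.} There are $O(\log n)$ calls, each costing $O(T(O(n)))$. Assuming $T$ is polynomially bounded, so that $T(cn)=O(T(n))$, the total is $O(T(n)\log n)$ plus linear preprocessing. That is the claim.

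\textbf{Main obstacle.} The main risk is Step 1: pinning down the exact form of the reduction in~\cite{cygan2019ProblemsEquivalentMin+Convolution}. I need to confirm that it produces $O(\log n)$ instances of linear size, rather than many smaller instances or a single instance whose length blows up. I also need to match its definitional conventions to ours, namely the length of $C$ ($n$ versus $2n$), the necklace length, and the real versus integer shifts. If their reduction instead uses one call with a polylogarithmic bit-length overhead, the same bound follows, but the accounting changes. I would first reread their reduction and then adapt the padding, for example by extending $C$ with $-\infty$-like entries, so that our shorter definition of $C$ is handled.
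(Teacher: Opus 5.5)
Your proposal is correct and takes the same route as the paper: it obtains the corollary by composing the reduction of~\cite{cygan2019ProblemsEquivalentMin+Convolution} from MaxConv LowerBound to $L_\infty$-Necklace Alignment with \Cref{cor:necklace-LB}, which goes Necklace Alignment $\to$ Linear Alignment $\to$ balanced 1-dimensional HuT. The paper gives no more detail on the first step than you do, so the $\log n$ overhead is simply inherited from that cited reduction, as you intend (your guess that it comes from $O(\log n)$ oracle calls is not confirmed by the paper).
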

Interestingly, we can show the same lower bound from MaxConv LowerBound for discrete HuT. 
\begin{theorem}
    \label{thm:maxconv-LB}
    If 1-dimensional discrete HuT can be solved in time $T(n)$, then the MaxConv LowerBound problem can be solved in time $O(T(n)).$
\end{theorem}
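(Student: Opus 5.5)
We reduce the \emph{complement} of MaxConv LowerBound to a single instance of $1$-dimensional directed DiscHuT with $|P|,|Q|,|T| = O(n)$. The instance is YES exactly when the MaxConvLB instance is NO.

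The starting point is the quantifier structure. A MaxConvLB instance $A,B,C$ is a NO-instance iff
\[
\exists k\in\{2,\dots,n\}~\forall i\in[n] \text{ with } k-i\in[n]:~ A[i]+B[k-i] < C[k].
\]
This has the shape $\exists\forall$, which matches $\exists \tau\in T~\forall p\in P~\exists q\in Q$. We let translations encode $k$ and points of $P$ encode $i$. The inner $\exists q$ then has two jobs: pick out the single relevant $j=k-i$, and test the inequality $A[i]-C[k] \le -B[j]-1$, using integrality. Positional separation uses a large multiplier $M$, as in the proof of \Cref{thm:necklace-LB}.

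Concretely, let $U=\max|x|$ over all entries of $A,B,C$, set $\delta=W\coloneqq 2U+2$, and choose $M \ge 10W$. Define
\begin{align*}
P &= \{-iM + A[i] \mid i\in[n]\},\\
T &= \{kM - C[k] \mid k\in\{2,\dots,n\}\},\\
Q &= \{jM - B[j] - 1 - W \mid j\in[n]\} \cup \{jM \mid j\in\{2-n,\dots,0\}\}.
\end{align*}
All coordinates are integers polynomially bounded in the input, and $|P|,|Q|,|T|=O(n)$.

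For $\tau_k\in T$ and $p_i\in P$ we get $p_i+\tau_k = (k-i)M + x$ with $x=A[i]-C[k]$ and $|x|\le 2U<W$. Since $M\gg W$, the only points of $Q$ within distance $\delta$ of $p_i+\tau_k$ are those with index $j=k-i$, which lies in $\{2-n,\dots,n-1\}$. Two cases arise.
\begin{itemize}
\item If $j\le 0$, the index $i$ is irrelevant for this $k$. The dummy point $jM$ covers $[jM-W,jM+W]\ni p_i+\tau_k$, so $p_i$ is always matched.
\item If $j\ge 1$, the point $q_j$ covers $[jM-B[j]-1-2W,\; jM-B[j]-1]$. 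Because $x\ge -2U > -B[j]-1-2W$, the lower end is never binding, and $p_i$ is matched iff $x\le -B[j]-1$, i.e.\ iff $A[i]+B[j]<C[k]$.
\end{itemize}
Hence $\tau_k$ is feasible iff every valid $i$ satisfies the strict inequality for this $k$, which is exactly the NO-condition. The total reduction time is $O(n)$, so solving DiscHuT in $T(n)$ solves MaxConvLB in $O(T(n))$.

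The main obstacle is getting the constants right so that the box in $Q$ implements a one-sided threshold. Three things must hold simultaneously:
\begin{itemize}
\item $\delta$ is large enough that the lower boundary of each covering interval lies below every possible value of $x$;
\item $\delta$ is still small relative to $M$, so that no point is matched to a neighbouring index $j\pm1$;
\item the out-of-range indices $j$ are covered by dummies.
\end{itemize}
The choice $W = 2U+2$, $M \ge 10W$ is intended to satisfy all three. I would verify these interval containments carefully, including the boundary indices $j\in\{0,1,n-1\}$.
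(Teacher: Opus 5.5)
Your reduction is correct and is essentially the paper's. Translations encode $k$ and points of $P$ encode $i$, the multiplier $M$ forces matching with the unique $Q$-point of index $j=k-i$, an offset interval plus integrality turns closeness into the one-sided test $A[i]+B[j]<C[k]$, and dummy points absorb the indices $j\le 0$, so DiscHuT is YES exactly when MaxConvLB is NO. The differences are cosmetic: the paper shifts $P$ and the dummies by $M/4$ and takes $\delta=M/4-1$, whereas you keep $\delta=2U+2$ independent of $M$ and put the offset into $Q$. Your restriction of $T$ to $k\in\{2,\dots,n\}$ is the right choice, since a translation for a $k$ admitting no pair $i+j=k$ would be vacuously feasible.
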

\begin{proof}
    Given arrays $A,B,C$ of positive values as inputs from MaxConv LowerBound.
    We construct $T, P, Q$ as arrays for DiscHuT.
    By abuse of notation, we allow $Q$ to be indexed negatively. 
    We define
    \begin{align*}
        T &= \left[-(kM + C[k])\right]_{k \in [n]}, \\
        P &= \left[ iM + A[i] + M/4 \right]_{i \in [n]}, \\
        Q &= \left[ -zM + M/4 \right]_{z \in \{-n, ..., 0\}}\cup \left[ -(jM + B[j]) \right]_{j \in [n]}, \\
        \delta &= \frac{M}{4} - 1,
    \end{align*}
    where $M$ is a large number with $M > 4\cdot\max_{i,j,k} \left( A[i] + B[j] + C[k] \right)$. 
    
    Assume we have a NO-instance to MaxConv LowerBound, i.e., there exists a $k^*$ such that for all $i < k^*$ we have $A[i] + B[k^* - i] < C[k^*]$. For $i < k^*$, we get
    \begin{align*}
    |p_i + \tau_{k^*} - q_{k^* - i}| &= |iM + A[i] + M/4 -(k^*M + C[k^*]) + (k^* - i)M + B[k^*-i]| \\
                                        &= | A[i] + B[k^* - i] - C[k^*] + M/4 | \\
                                        &\leq \delta.
    \end{align*}
    For $n \geq i \geq k^*$ we have $k^* - i \leq 0$ and thus:
    \begin{align*}
        |p_i + \tau_{k^*} - q_{k^* - i}| &= |iM + A[i] + M/4 -(k^*M + C[k^*]) - ((- (k^* - i))M + M/4) |\\
        &=| A[i] - C[k^*] | \\ 
        &\leq \delta.
    \end{align*}
    Thus in both cases, we have a YES-instance to DiscHuT.
    
    If otherwise we have a YES-instance to MaxConv LowerBound, for all $k$ there exists a $i < k$ such that $A[i] + B[k - i] \geq C[k]$. Let $\tau_k, p_i$ be elements of the Hausdorff instance corresponding to such a $(C[k], A[i])$ witness. We have a NO-instance to DiscHuT since
    \begin{align*}
        |p_i + \tau_{k} - q_{k - i}| &= |iM + A[i] + M/4 - (kM + C[k]) + (k - i)M + B[k-i] | \\
        &=| A[i] + B[k - i] - C[k^*] + M/4 | \\
        &> \delta 
    \end{align*}
    and taking any other point $q' \neq q_{k-i}$ of $Q$ results in a distance of at least $M/2$.
\end{proof}

\subsection{Relation to \texorpdfstring{\FOPZ}{FOP\_Z}~Problems}
\label{sec:apx-fopz-problems}
We give an upper bound to discrete HuT, which stems from a reduction from 3-dimensional Discrete HuT  to 3-SUM. The basic idea consists of a complement trick, which enables a quantifier switch in the discrete HuT problem.

\begin{theorem}
    \label{thm:all-ints-3sum-easiness}
    Let $d \leq 3$.
    If All-Ints-$3$-SUM$(n,m,k)$ can be solved in time $\mathcal{T}(n,m,k)$, then we can solve $d$-dimensional discrete HuT, in time $\Tilde{O}(\mathcal{T}(|T|,|P|,|Q| \log |Q| )+ |Q| \log |Q|+ \max(|P|,|T|)).$
\end{theorem}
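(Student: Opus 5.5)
The plan is to use a complement trick that turns the $\exists\forall\exists$ structure of DiscHuT into an $\exists\exists$ question about one translation, which All-Ints-$3$-SUM answers simultaneously for all $\tau\in T$. Let $U=\bigcup_{q\in Q}(q+[-\delta,\delta]^d)$. A translation $\tau$ is feasible iff $p+\tau\in U$ for every $p\in P$. Equivalently, $\tau$ is infeasible iff some $p\in P$ has $p+\tau\in \overline{U}$. So the instance is a NO-instance iff every $\tau\in T$ is ``hit'' by some pair $(p,R)$ with $p+\tau\in R$, where $R$ ranges over a decomposition of $\overline{U}$. This is an All-Ints-type question: for every element of the first set (here $T$), decide whether a witness from the other two sets exists.

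\textbf{Step 1: decompose the complement of $U$.} I will show that for $d\le 3$, $\overline{U}$ restricted to a bounding box can be partitioned into $O(|Q|\log|Q|)$ axis-aligned integer boxes in time $\tOh(|Q|)$. This is where $d\le 3$ enters. The union of $m$ equal axis-parallel cubes has linear combinatorial complexity in $d\le 3$ (boundary complexity $O(m)$). A sweep along one axis, maintaining the $(d-1)$-dimensional cross-section with a segment-tree-like structure, then produces a box decomposition of the complement with only a logarithmic overhead. For $d=1$ and $d=2$ this is routine: in $d=1$ sort the intervals and take the gaps; in $d=2$ use the vertical decomposition of the complement of a union of squares. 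For $d=3$ I would rely on the linear union complexity of unit cubes and a sweep producing $O(m\log m)$ boxes. The outside of the bounding box is covered by $O(1)$ additional boxes. After this step, $\tau$ is infeasible iff there are $p\in P$ and a box $R$ in the family $\mathcal{R}$ with $p+\tau\in R$.

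\textbf{Step 2: turn box containment into exact equality.} All-Ints-$3$-SUM asks about sums being exactly equal, so the condition ``$p+\tau\in R$'' must be expressed through equalities. I plan to do this coordinatewise with canonical dyadic intervals. Each side interval of $R$ splits into $O(\log N)$ dyadic intervals, where $N$ is the polynomial coordinate bound. The statement ``$x$ lies in the dyadic interval $[c2^j,(c+1)2^j)$'' is equivalent to $\lfloor x/2^j\rfloor=c$. Applied to $x=p_i+\tau_i$, the floor of the sum equals $\lfloor p_i/2^j\rfloor+\lfloor\tau_i/2^j\rfloor$ plus a carry bit, namely whether $(p_i\bmod 2^j)+(\tau_i\bmod 2^j)\ge 2^j$. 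I will handle the carry by guessing it. For each combination of levels $(j_1,\dots,j_d)$ and each of the $2^d$ carry patterns, I create one All-Ints-$3$-SUM instance. Its elements are:
\begin{itemize}
\item for each $\tau$, the vector $(\lfloor\tau_i/2^{j_i}\rfloor)_i$;
\item for each $p$, the vector $(\lfloor p_i/2^{j_i}\rfloor)_i$;
\item for each dyadic sub-box of each $R$, the vector $(c_i-\text{carry}_i)_i$.
\end{itemize}
Each vector is packed into a single integer with a large base and padding to prevent overflow. The carry guesses must be consistent with the actual residues. I will enforce this by extending each vector with auxiliary coordinates, or by sorting $P$ by residue and splitting it into threshold classes. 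Since $d$ is constant and there are polylogarithmically many levels, this costs only polylogarithmic blow-up. The $|Q|\log|Q|$ size comes from Step 1, with further polylogarithmic factors hidden in the $\tOh$. Finally, $\tau$ is feasible iff it is a ``no'' element in every instance.

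\textbf{Main obstacle.} I expect two difficulties. The first is the exact carry handling in Step 2 without losing the All-Ints structure, i.e., keeping $T$ as the queried set. The second is the $d=3$ complement decomposition of size $O(|Q|\log|Q|)$ in Step 1. If the carry bookkeeping proves awkward, I would first try the alternative of dividing each coordinate by $2\delta$ and snapping to a grid. This replaces the cubes by $O(2^d)$ candidate grid cells per point and reduces the problem to pure cell-equality tests, with the complement decomposition done at cell resolution.
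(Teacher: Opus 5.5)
Your overall architecture is the paper's: negate to $\forall\tau\in T\;\exists p\in P\;\exists R$ with $p+\tau\in R$, decompose $\overline{U}$ into boxes, and answer the inner $\exists\exists$ question for all $\tau$ at once via All-Ints-3SUM with $T$ as the queried set. The real gap is Step~1 for $d=3$, which is exactly where $d\le 3$ is used. Linear union complexity of congruent cubes bounds the number of events of your sweep, but not the number of boxes the sweep emits. Whenever a cell of the current planar decomposition of the cross-section is cut or merged at an event, its box must end and new ones must start. A single cube appearing can cut many cells at once: with a vertical decomposition of the cross-section, a staircase of squares creates many thin cells above it, and one square placed over the staircase cuts all of them. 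Nothing in ``a segment-tree-like structure with logarithmic overhead'' bounds the total number of such changes by $O(m\log m)$. That bound is a recent nontrivial theorem of Agarwal, Sharir and Steiger~\cite{agarwal_decomposing_2024}, which the paper invokes at exactly this point: it decomposes the complement of the union of $m$ congruent cubes in $\mathbb{R}^3$ into $O(m\log m)$ disjoint boxes in $O(m\log m)$ time. You need to cite it; a plain sweep does not give it.

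Step~2 is where you genuinely differ from the paper. The paper rewrites the condition as the $2d$ inequalities $\tau_i+p_i\le \bar q_i^+$ and $-\tau_i-p_i\le -\bar q_i^-$ and then invokes \cite{GokajKST25} to obtain $\tOh(1)$ All-Ints-3SUM calls. You instead reduce interval membership of $p_i+\tau_i$ to equalities yourself. That route works, but the carry consistency you leave open is its entire content. ``Splitting $P$ into threshold classes'' fails as stated, because the carry $[(p_i\bmod 2^{j})+(\tau_i\bmod 2^{j})\ge 2^{j}]$ compares the $P$-quantity $a=p_i\bmod 2^{j}$ with the $\tau$-dependent threshold $w=2^{j}-(\tau_i\bmod 2^{j})$. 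You can enforce it by also guessing either $a=w$ or the most significant bit $\ell$ in which $a$ and $w$ differ. The values of bit $\ell$ of $a$ and of $w$ are then filters applied to $P$ and $T$ separately. The condition $\lfloor a/2^{\ell+1}\rfloor=\lfloor w/2^{\ell+1}\rfloor$ (resp.\ $a=w$) becomes one more packed coordinate, to which the box set contributes $0$. For a fixed level, every pair $(p,\tau)$ falls into exactly one case per coordinate, so a wrongly guessed carry can never produce a false report. This gives $O(\log^{2d}N)$ instances, each with a filtered $T$ as queried set, a filtered $P$, and at most $2^d|\mathcal R|$ dyadic sub-boxes. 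Every reported triple certifies $p+\tau\in R$, and every true containment is reported in exactly one instance, so $\tau$ is feasible iff it is answered ``no'' in all instances. Compared with the paper, your version makes the polylogarithmic overhead explicit and avoids the general \FOPZ{} machinery. The citation is shorter and handles arbitrary linear inequality constraints, not only interval membership of a sum.
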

\begin{proof}
    Let $T, P, \mathcal Q, \delta$ be the input to the Translation Problem with Hypercubes, which is equivalent to discrete HuT.
    Be reminded that we want to satisfy the formula
    \begin{align}
        \exists \tau \in T ~ \forall p \in P ~ \exists \hat Q \in \mathcal Q: (p+\tau) \in \hat Q. \label{eq:hypercube_HuT}
    \end{align} 
    Let $\bar {\mathcal Q} = \overline{ \bigcup \mathcal Q }$ be the complement of the hypercubes spanned by the points in $Q$. 
    The cubes in $\mathcal Q$ are congruent, so we can make use of the result of Agarwal, Sharir and Steiger~\cite{agarwal_decomposing_2024}, to decompose $\bar {\mathcal Q}$ into $O(|\mathcal Q| \log |\mathcal Q|)$ disjoint boxes in runtime $O(|\mathcal Q| \log(|\mathcal Q|)).$
    Let $\bar q_i^+$ be the upper bound of the interval of the box $\bar {Q} \in \bar{\mathcal Q}$ in dimension $i$, and $\bar q_i^-$ the lower bound. In other words, the box $ \bar{Q}$ is of the form $[\bar{q_1}^-, \bar{q_1}^+] \times [\bar{q_2}^-,\bar{q_2}^+] \times [\bar{q_3}^-,\bar{q_3}^+].$ 
    Notice that for fixed $(p,\tau) \in P \times T$, it holds that $\lnot \exists \hat Q \in \mathcal Q: p+t \in \hat Q \iff \exists \bar{Q} \in \bar{\mathcal Q}: p+t \in \bar{Q}.$
    Thus, by a simple negation, we get a NO-instance to discrete HuT if and only if $\forall \tau \in T ~ \exists p \in P ~ \exists \bar Q \in \bar{\mathcal Q}: \tau + p \in \bar q$ holds. 
    
    We rewrite this formula, by making use of the coordinates spanning each box.
    Thus, equivalently to deciding \Cref{eq:hypercube_HuT}, it suffices to decide a $2d$-dimensional $\forall\exists\exists$-formula  of the form \[ 
        \forall \tau \in T ~ \exists p \in P ~ \exists \bar Q \in \bar{\mathcal Q}: \bigwedge_{i \in [d]} (\tau_i + p_i \leq \bar q_i^+) \land (-\tau_i - p_i \leq -\bar q_i^-).
    \] 
    This formula can be reduced to $\Tilde{O}(1)$ calls to All-ints-$3$-SUM$(|T|,|P|,|\mathcal Q| \log |\mathcal Q|)$ by \cite{GokajKST25}.
\end{proof}
We remark, that if the sets are of the same size, we can leverage the equivalence between $3$-SUM and All-Ints $3$-SUM \cite{williams2018SubcubicEquivalencesPath}, to get the following: 
\begin{corollary}
    \label{cor:3sum-easiness}
    Let $d \leq 3$. If $3$-SUM can be solved in time $T(n)$, then we can solve $d$-dimensional balanced Discrete HuT in time $\Tilde{O}(T(n\log (n))+n \log (n)).$
\end{corollary}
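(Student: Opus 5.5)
The plan is to instantiate \Cref{thm:all-ints-3sum-easiness} in the balanced regime and then replace All-Ints $3$-SUM by ordinary $3$-SUM, using the known subquadratic equivalence between the two problems~\cite{williams2018SubcubicEquivalencesPath}.

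First, take a balanced instance of $d$-dimensional Discrete HuT with $d\le 3$. Here $|P|,|Q|,|T|=\Theta(n)$. By \Cref{thm:all-ints-3sum-easiness}, the instance is decided in time
\[
\tOh\bigl(\mathcal{T}(n,n,n\log n)+n\log n\bigr),
\]
where $\mathcal{T}$ is the running time of All-Ints-$3$-SUM. Recall where the parameters come from. The complement of the union of the $|Q|$ congruent cubes is decomposed into $O(|Q|\log|Q|)$ boxes~\cite{agarwal_decomposing_2024}. The resulting $\forall\exists\exists$ formula is then handled by $\tOh(1)$ calls of the reduction from~\cite{GokajKST25}.

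Second, bring the instance sizes in line. All three sets have size $O(n\log n)$, so we pad $T$ and $P$ to size $N=\Theta(n\log n)$ by duplicating elements. This turns the instance into a balanced All-Ints-$3$-SUM instance of size $N$. Duplicating translations does not change a $\forall\tau$ quantifier, and duplicating points does not change the existential over $p$, so the answer is preserved.

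Third, invoke the reduction from balanced All-Ints $3$-SUM to $3$-SUM~\cite{williams2018SubcubicEquivalencesPath}. It solves an instance of size $N$ with $\tOh(1)$ calls to $3$-SUM on instances of size $O(N)$, plus $\tOh(N)$ overhead. This gives a total running time of $\tOh(T(n\log n)+n\log n)$.

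The main obstacle is the third step. The equivalence in~\cite{williams2018SubcubicEquivalencesPath} is stated only up to subquadratic factors. Our claim is the finer statement that the overhead is polylogarithmic and that $T$ is evaluated on an argument of size $\tOh(N)$. To establish it, we would inspect the standard reduction, which uses hashing and bucketing of the queries into $\tOh(1)$ rounds of $3$-SUM calls on size-$O(N)$ inputs. We would also use that $T(n)$ is superadditive, which holds for any reasonable running time $T(n)\ge n$. Together these let us sum the polylogarithmically many calls into $\tOh(T(n\log n))$.
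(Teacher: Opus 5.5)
Your route is exactly the paper's. You apply \Cref{thm:all-ints-3sum-easiness} with $|T|,|P|,|Q|=\Theta(n)$, then replace All-Ints $3$-SUM by $3$-SUM via the equivalence of \cite{williams2018SubcubicEquivalencesPath}. The paper's argument consists of nothing more than this appeal.

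You correctly identify the third step as the crux, but your proposed resolution is false. The standard reduction behind that equivalence does not make $\tOh(1)$ calls on inputs of size $O(N)$ with polylogarithmic overhead. Instead:
\begin{itemize}
\item it hashes the numbers into $R$ buckets with an almost-linear hash function, giving $O(R^2)$ bucket triples of size $O(N/R)$;
\item on each triple it repeatedly runs the $3$-SUM decider, finds a witness $a$ by binary search, and deletes it;
\item this amounts to $O(R^2+N\log N)$ calls on instances of size $O(N/R)$, i.e., time roughly $\tOh\bigl((R^2+N)\,T(N/R)\bigr)$.
\end{itemize}
For $T(N)=N^{2-\varepsilon}$, the best choice $R=\sqrt N$ gives $N^{2-\varepsilon/2+o(1)}$. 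This exceeds $T(N)$ by the polynomial factor $N^{\varepsilon/2}$. Superadditivity of $T$ cannot fix this, since it only yields $N\cdot T(\sqrt N)\le \sqrt N\cdot T(N)$. So the claimed bound $\tOh(T(n\log n)+n\log n)$ does not follow from your step 3.

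What your argument actually proves, and equally what the paper's one-line appeal proves, is an exponent-lossy transfer. If $3$-SUM can be solved in time $O(n^{2-\varepsilon})$, then balanced $d$-dimensional Discrete HuT for $d\le 3$ can be solved in time $O(n^{2-\varepsilon'})$ for some $\varepsilon'>0$. This suffices for the OV-barrier discussion in the introduction. The literal time bound, however, would need a tight reduction from All-Ints $3$-SUM to $3$-SUM, or a direct argument bypassing All-Ints $3$-SUM. The cited equivalence does not provide either.

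A minor point: padding $T$ and $P$ by duplicating elements does nothing for sets. Instead, pad the All-Ints instance with distinct dummy numbers of huge magnitude that cannot complete any triple, and ignore their outputs.
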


Interestingly, we can also show hardness of discrete HuT under the class $\FOPZ_{,d\leq 3}(\forall\exists\exists)$.
We first state the result for varying sizes of the sets $A,B$ and $C$, and afterwards for the balanced version, i.e., $|A|=|B|=|C|$.
\begin{theorem}
    \label{thm:fopz-aee-hardness}
    Let $d\leq 3$. If $2d$-dimensional Discrete HuT can be solved in time $T_{2d}(n,m,k)$, then we can solve all $d$-dimensional inequality formulas of the class $\mathsf{FOP}_\mathbb{Z}  (\forall \exists \exists)$ with inputs $A,B,C$ in time $O(T_{2d}(|A|,|B|,|C| \log |C|) + |C| \log |C| +\max(|A|,|B|)).$ 
\end{theorem}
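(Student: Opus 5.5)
We reverse the argument of \Cref{thm:all-ints-3sum-easiness}. A $d$-dimensional inequality formula of $\FOPZ(\forall\exists\exists)$ has the form
\[
\forall a\in A ~\exists b\in B~\exists c\in C:\ \bigwedge_{i\in[d]} \ell_i(a)+\ell'_i(b)\le \ell''_i(c),
\]
where each $\ell_i,\ell'_i,\ell''_i$ is a linear form. Negating it gives $\exists a\in A~\forall b\in B~\forall c\in C:\ \bigvee_i \ell_i(a)+\ell'_i(b)>\ell''_i(c)$. This is an $\exists\forall$-statement, matching the $\exists\tau\,\forall p\,\exists\hat Q$ shape of discrete HuT only once the inner universal over $C$ becomes an existential.

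\textbf{Step 1: encode each $c$ as an orthant.} Set $\alpha_i=\ell_i(a)$ and $\beta_i=\ell'_i(b)$. Then $(a,b)$ fails the matrix for $c$ exactly when $\alpha+\beta$ lies in the orthant $O_c=\prod_i(-\infty,\ell''_i(c)]$, using integrality to turn $>$ into $\ge$ and shifting by one. So $\forall c:\neg(\ldots)$ holds iff $\alpha+\beta\notin U$, where $U=\bigcup_{c} O_c$.

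\textbf{Step 2: complement trick.} The complement of a union of $|C|$ orthants in $d\le 3$ is a staircase region. Following the decomposition result of~\cite{agarwal_decomposing_2024} used in \Cref{thm:all-ints-3sum-easiness}, one computes $\overline U$ as a disjoint union of $O(|C|\log|C|)$ boxes in $O(|C|\log|C|)$ time. This step should go through directly for orthants, since they are limits of congruent cubes: clip everything to a bounding box of polynomial size determined by the input range. The original formula is then equivalent to
\[
\exists a\in A~\forall b\in B~\exists \bar Q\in\overline{\mathcal U}:\ \alpha(a)+\beta(b)\in\bar Q .
\]
This is a translation problem with $T=\{\alpha(a)\}$ and $P=\{\beta(b)\}$, but the shapes are arbitrary boxes rather than congruent hypercubes. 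Note also that the original formula is YES iff the negated formula is NO, so the final answer is complemented.

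\textbf{Step 3: boxes to cubes in $2d$ dimensions.} This is why the target dimension is $2d$. Apply the transformation of \Cref{lem:box-to-cube}: map points and translations by $x\mapsto(x,x)$ coordinatewise, and each box $[l,u]$ in coordinate $i$ to the orthant pair $[l,\infty)\times(-\infty,u]$. Then replace orthants by sufficiently large hypercubes as in \Cref{lem:tpwo-to-tpwc}, so that all relevant positions lie in the region where a large cube behaves like an orthant.

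The main obstacle is that the translations are now restricted to the finite set $T$. The soundness argument of \Cref{lem:box-to-cube}, which needed $2^d$ mirrored copies to handle off-diagonal translations, is unnecessary here: every $\tau\in f(T)$ lies on the diagonal $f(\R^d)$, so equation~\eqref{eq:box-orthants} applies directly. A single copy therefore suffices, and the instance has $|A|$ translations, $|B|$ points and $O(|C|\log|C|)$ cubes. All coordinates stay polynomially bounded integers; if necessary, scale by $2$ to keep the strict-versus-non-strict boundaries integral. This gives the stated time $O(T_{2d}(|A|,|B|,|C|\log|C|)+|C|\log|C|+\max(|A|,|B|))$, where the last terms account for the decomposition and for computing the linear forms.
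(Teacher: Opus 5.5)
\textbf{Gap: you only treat conjunctive formulas.} Your first sentence assumes that a $d$-dimensional inequality formula is a conjunction $\bigwedge_{i\in[d]}\ell_i(a)+\ell'_i(b)\le\ell''_i(c)$. That is not what inequality dimension means. By the definition, the matrix is $\psi'(L_1,\dots,L_d)$ for an \emph{arbitrary} Boolean function $\psi'$, for example $L_1\vee(L_2\wedge\neg L_3)$. As written, your argument proves the theorem only when $\psi'$ is the conjunction of all $d$ atoms.

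This cannot be fixed by a WLOG normalization inside dimension $d$. A negated atom $\neg L_j$ reads $\ell_j(a)+\ell'_j(b)\ge\ell''_j(c)+1$. Rewriting it as a $\le$-constraint needs the form $-\ell_j$ alongside $\ell_j$, which doubles the number of atoms and hence the final HuT dimension. Apart from this, your route is the paper's: negate, turn $\exists c$ into membership of $\alpha(a)+\beta(b)$ in a union, decompose the complement with \cite{agarwal_decomposing_2024}, and lift the box formula to $2d$ dimensions. Your explicit Step~3 is a correct concrete replacement for the paper's appeal to \cite{gokaj_completeness_2025}: a single diagonal copy suffices because all sums $f(\alpha(a))+f(\beta(b))$ lie on $f(\R^d)$, and you then cap the orthants to congruent cubes.

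\textbf{The missing step is the paper's DNF treatment.} Write $\psi'$ as co-clauses $V_1,\dots,V_h$ with $h=O(1)$. For each $c'$ and each $V_i$, the set $\mathcal S(V_i,c')$ of $x$ satisfying $V_i$ is a generalized orthant:
\begin{itemize}
\item a positive literal contributes $x[k]\le c'[k]$;
\item a negated literal contributes $x[j]\ge c'[j]+1$ (this is where integrality is actually needed);
\item an absent atom leaves that coordinate unconstrained.
\end{itemize}
The relevant union is $U=\bigcup_{c',i}\mathcal S(V_i,c')$. Its pieces have mixed orientations, so your staircase description of $\overline U$ no longer holds.

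Your capping idea still works if you apply it uniformly. Truncate every $\mathcal S(V_i,c')$ to a cube of one common, sufficiently large side length, so that membership is unchanged on the polynomially bounded range of $\alpha(a)+\beta(b)$. All shapes are then congruent cubes, and the congruent-cube case of \cite{agarwal_decomposing_2024} yields $O(h|C|\log(h|C|))=O(|C|\log|C|)$ disjoint boxes. From there your Steps~2 and~3 go through unchanged.

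A minor slip: in Step~1, $\alpha+\beta\in O_c$ holds when $(a,b,c)$ \emph{satisfies} the conjunction, not when it fails it. Your next line uses it correctly.
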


\begin{proof}
    We proceed analogously to the proof of \cite{gokaj_completeness_2025}[Theorem 28], diverging from it at the final steps, by adding the complementation trick, which was also used in the proof of \Cref{cor:3sum-easiness}.
    
    Assume we are given a $\mathsf{FOP}_{\mathbb{Z}}(\forall \exists \exists)$ formula of inequality dimension at most 3 which is $\phi:= \forall a \in A \exists b \in B \exists c \in C: \varphi$ in DNF, where $\varphi$ is a quantifier free linear arithmetic formula. Let the atoms of $\varphi$, be $L_1,L_2,L_3$, where $L_i$ is of the form $\alpha_i^Ta + \beta_i ^Tb \leq \gamma_i^Tc  + S_i$, after replacement of the free variables.
    Construct the sets :
     \begin{align*}
   A':= \left \{ \left( \begin{array}{cc}
    \alpha_{1}^T a \\
    \alpha_{2}^T a \\
    \alpha_{3}^T a
    \end{array}  \right): a \in A \right \},
    B':=\left \{ \left( \begin{array}{cc}
      \beta_{1}^T b \\
      \beta_{2}^T b \\
      \beta_{3}^T b 
      \end{array}  \right):b \in B \right \},
      C':=\left \{ \left( \begin{array}{cc}
        \gamma_{1}^T c +S_{1} \\
        \gamma_{2}^T c +S_{2}\\
        \gamma_{3}^T c+ S_{3}
        \end{array}  \right):c \in C \right \}
  \end{align*}
  Let the co-clauses of $\varphi$ be $V_1, \dots ,V_h$. Thus, we aim to decide a formula of the form:
  \begin{equation}\label{eq:Dnfcurr}
  \forall a' \in A' \exists b' \in B' \exists c' \in C': \bigvee_{i=1}^{h} V_{i} ,
  \end{equation} 
 where each co-clause $V_i$  is of the form :
  $ \bigwedge_{k \in V_i^K} L_k  \land \bigwedge_{j \in V_i^J} \lnot L_j,$ 
  for some $V_i^J,V_i^K \subseteq \{1,2,3\}$ and $V_i^J\cap V_i^K =\emptyset$. 
  For a $c' \in C'$ and a co-clause $V_i$, define a solution space for the co-clause as: 
  $$\mathcal{S}(V_{i},c'):=\{x \in \mathbb{R}^3 : \bigwedge_{k \in V_i^K}x[k] \leq c'[k] \land \bigwedge_{j \in V_i^J}x[j] \geq c'[j]+1   \}.$$
  Clearly, $\mathcal{S}(V_{i},c')$ is an orthant in $\mathbb{R}^3$, and for fixed $(a',b') \in A' \times B' $ a co-clause $V_i$ is fulfilled by the assignment $(a',b',c')$ iff $a'+b' \in \mathcal{S}(V_{i},c')$.

  By a standard argument\footnote{Essentially, it suffices to cap the size of the orthants after an appropriately large range.}(see \cite{gokaj_completeness_2025} for precise details), we transform the orthants $\mathcal{S}(V_i,c')$ into congruent cubes $\mathcal{C}_{c'}^{i}$, such that for fixed $(a',b') \in A' \times B'$ it holds that $a'+b' \in \mathcal{S}(V_{i},c') \iff a'+b' \in \mathcal{C}_{c'}^{i} $.
  So, it suffices to check the satisfiability of the formula:
    \begin{align}
      &\forall a' \in A' \quad \exists b' \in B' \quad \exists c' \in C': \bigvee_{i=1}^h a'+b' \in \mathcal{C}_{c'}^i \\ \iff  &\forall a' \in A' \quad \exists b' \in B':  a'+b' \in \bigcup_{c' \in C', i \in [h]}\mathcal{C}_{c'}^i 
     \\ \iff   &\lnot \left( \exists a' \in A' \quad \forall b' \in B' : (a'+b') \not \in \bigcup_{c' \in C', i \in [h]}\mathcal{C}_{c'}^i \right)\label{eq: good}
    \end{align}
    We wish to finally reduce $\left( \exists a' \in A' \quad \forall b' \in B' : (a'+b') \not \in \bigcup_{c' \in C', i \in [h]}\mathcal{C}_{c'}^i \right)$ to an instance of discrete HuT. 
     We decompose the complement of the inner part $\bigcup_{c' \in C', i \in [h]}\mathcal{C}_{c'}^i$ into a set of $O(|C| \log |C|)$ disjoint boxes $\mathcal{R}$ in time $O(|C|\log |C| )$ by the algorithm of \cite{agarwal_decomposing_2024}. 
     For fixed $(a',b') \in A' \times B'$ it holds that 
     \begin{align*}
      \left ( (a'+b') \not \in \bigcup_{c' \in C', i \in [h]}\mathcal{C}_{c'}^i\right) \iff \exists R \in \mathcal{R}: a'+b' \in R.
     \end{align*}
         Let $ r_i^+$ be the upper bound of the interval of the box $R$ in dimension $i$, and $r_i^-$ the lower bound. In other words, the box $ R \in \bar{R}$ is of the form $[r_1^-, r_1^+] \times [r_2^-,r_2^+] \times [r_3^-,r_3^+].$ 
    Finally, we have reduced Equation \eqref{eq: good} to 
    \begin{align}
        \exists a' \in A' \quad \forall b ' \in B' \quad \exists R \in \mathcal{R}: \bigwedge_{i\in[d]}  (a'_i+ b'_i \leq r_i^+) \land (-a'_i - b'_i \leq -r_i^-). \label{eq:ortant-translation}
    \end{align}
    By \cite{gokaj_completeness_2025}, the above formula can be reduced to an instance of $2d$-dimensional discrete HuT, in linear time, with $|T|=|A|, |P|=|B|$ and $|Q|= |C|\log |C|$.
\end{proof}
We remark that for the balanced case, we get the following lower bound.
\begin{corollary}
    Let $d \leq 3$. If $2d$-dimensional balanced Discrete HuT can be solved in time  $T_{2d}(n)$, then we can solve all $d$-dimensional inequality formulas of the class $\mathsf{FOP}_{\mathbb{Z}}(\forall \exists \exists), $  of input size $n'$, in time $O(T_{2d}(n' \log n') + n' \log n').$
\end{corollary}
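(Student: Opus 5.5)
The corollary is the balanced specialization of \Cref{thm:fopz-aee-hardness}. Start with a $d$-dimensional inequality formula $\phi = \forall a \in A\, \exists b \in B\, \exists c \in C: \varphi$ from $\mathsf{FOP}_{\mathbb{Z}}(\forall \exists \exists)$, with $d \le 3$ and input size $n'$. Then $|A|,|B|,|C| \le n'$.

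Apply \Cref{thm:fopz-aee-hardness} directly. This gives, in time $O(|C|\log|C| + \max(|A|,|B|)) = O(n' \log n')$, a single instance of $2d$-dimensional Discrete HuT with $|T| = |A|$, $|P| = |B|$ and $|Q| = O(|C|\log|C|)$. The instance is produced via the complement decomposition of Agarwal, Sharir and Steiger and the reduction of \cite{gokaj_completeness_2025}. It decides $\phi$ up to a final negation, which costs nothing.

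The only issue is that this instance is lopsided, since $|Q|$ may exceed $|T|$ and $|P|$ by a logarithmic factor. The corollary assumes an algorithm only for the balanced case. I would pad $T$ and $P$ up to size $N \coloneqq \Theta(n'\log n')$ without changing the answer:
\begin{itemize}
\item For $P$, add copies of an existing point $p_0 \in P$. The directed Hausdorff condition $\forall p\,\exists q$ is unchanged under duplicating points. If multiset duplicates are disallowed, I would instead add an extra dummy point together with a dummy cube placed far away in an unused region. This looks like the gadget of \Cref{lem:tpwo-to-tpwc}, but it needs care, because the dummy point must be covered under every translation in $T$.
\item For $T$, duplicate an existing translation, or, if distinctness is needed, add nearby translations. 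That is risky, so duplication as a multiset, or a scaling-plus-perturbation argument, is preferable. Cleanest is to note that duplicates in $T$ do not change the existential answer.
\end{itemize}

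Symmetrically, if $|C|\log|C|$ is smaller than $|A|$ or $|B|$, pad $Q$ by adding far-away points. These create cubes that no translated point can reach, since all coordinates are polynomially bounded. This preserves the answer.

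After padding, all three sets have size $\Theta(n'\log n')$. The balanced algorithm therefore runs in $T_{2d}(O(n'\log n'))$. Adding the $O(n'\log n')$ construction time gives the claimed bound, assuming $T_{2d}$ is monotone and well-behaved under constant factors. The main obstacle I expect is this padding step: ensuring that the added dummy points and translations neither create nor destroy feasible translations. If points and translations are treated as multisets, duplication resolves it immediately.
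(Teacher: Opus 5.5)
Your main line is the same as the paper's. The corollary is stated there as an immediate remark after \Cref{thm:fopz-aee-hardness}, obtained by plugging in $|A|,|B|,|C|\le n'$. The paper never addresses the logarithmic imbalance between $|Q|=O(n'\log n')$ and $|T|,|P|\le n'$; it implicitly lets $T_{2d}(N)$ bound the running time on any instance whose three sets have size $O(N)$.

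Your padding step is therefore extra care beyond the paper, and parts of it hold up while others do not.
\begin{itemize}
\item Padding $Q$ with far-away points is fine.
\item Duplicating points of $P$ and translations of $T$ is fine only if multisets are admissible. DiscHuT is defined on \emph{sets}, so this does not settle the case you flag.
\item Your fallback for $P$ fails. A single dummy cube of side $2\delta$ cannot contain $p'+\tau$ for every $\tau\in T$ once $T$ has diameter larger than $2\delta$. Covering $p'+T$ instead would cost up to $|T|$ extra points in $Q$ per dummy point.
\item Adding ``nearby'' translations without further care can indeed create new feasible translations, as you suspected.
\end{itemize}

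The scaling-plus-perturbation route you only allude to is the right fix, and it is short.
\begin{itemize}
\item Multiply all coordinates by an even integer $K>4N$, and replace $\delta$ by $\delta'=K\lfloor\delta\rfloor+K/2$.
\item Every difference $x=p+\tau-q$ is an integer vector, so $\|x\|_\infty\le\delta$ iff $\|x\|_\infty\le\lfloor\delta\rfloor$.
\item Hence for any $y$ with $\|y\|_\infty<K/2$ we get $\|Kx+y\|_\infty\le\delta'$ iff $\|x\|_\infty\le\delta$. One direction uses $K\lfloor\delta\rfloor+\|y\|_\infty<\delta'$. The other uses $K(\lfloor\delta\rfloor+1)-|y_i|>\delta'$.
\item Add the points $Kp_0+je_1$ and translations $K\tau_0+je_1$ for $1\le j\le N$. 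They are pairwise distinct and distinct from the scaled originals, since $0<j<K$.
\item Any combined perturbation is at most $2N<K/2$. So these new elements behave exactly like $p_0$ and $\tau_0$ with respect to every $q\in KQ$.
\item Coordinates stay polynomially bounded and the answer is unchanged.
\end{itemize}
With this in place your argument is rigorous under the literal set-based definition, which is slightly more than the paper itself provides.
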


\section{Transferring Lower Bounds to the Undirected Setting}
\label{sec:apx-undirected}

As of now, all here presented lower bounds only apply to the directed Hausdorff under Translation setting. 
Chan \cite{Chan23} already raised the question if these bounds transfer to the undirected setting, however there are no such lower bounds known so far for $d \geq 3$. 
Only for $d = 2$, lower bounds for the undirected setting are known \cite{BringmannN21}, which match the bounds for the directed variant. 
For $d=1$, we have an optimal $\tOh(n + m)$ algorithm \cite{rote1991ComputingMinimumHausdorff}.

We resolve this question by providing a simple observation that allows a reduction from the Translation Problem with Orthants to undirected Hausdorff under Translation, maintaining all instance parameters. 
Using this reduction, all lower bounds for HuT in $d \geq 3$ dimensions presented in this paper can be directly applied to undirected Hausdorff under Translation.
More generally, we can reduce any instance of directed Hausdorff under Translation to the undirected setting at the cost of doubling the dimension using \Cref{lem:box-to-cube}.

We start by giving our observation that allows the promised reduction.
Let $P, Q$ be the point sets of a HuT instance with $\delta >0$, and assume that the set of admissible translations $T \subseteq [-\delta, \delta]^d$ is restricted. 
We note that $\max_{\tau \in T} |(q + \tau) - q|_\infty \leq \delta$, so using $P' = P \cup Q$ and $Q' = Q$ as an instance for the undirected variant ensures that each element $q \in Q$ is in the $\delta$-vicinity of its respective point $q + \tau = q' \in P' + \tau$ for arbitrary $\tau \in T$. The analogue holds for the points $Q + \tau \subset P' + \tau$.  
Thus, the directed problem reduces to the undirected variant as a feasible translation $\tau$ to the undirected setting only needs to fulfill $\hddist(P + \tau, Q) \leq \delta$.

\begin{observation}
    \label{obs:directed-to-undirected-base}
    If we restrict the set of admissible translations $T \subseteq [-\delta, \delta]^d \subset \mathbb{R}^d$, for all $\tau \in T$ it holds that \[
       \hddist(P+\tau, Q) \leq \delta \iff  \hddistun((P \cup Q)+\tau, Q) \leq \delta.
    \]
\end{observation}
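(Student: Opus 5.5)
We prove the two directions of the equivalence separately for a fixed $\tau \in T \subseteq [-\delta,\delta]^d$. Expanding the definitions,
\[
\hddistun((P\cup Q)+\tau, Q) = \max\{\hddist(P+\tau, Q),\ \hddist(Q+\tau, Q),\ \hddist(Q, (P\cup Q)+\tau)\},
\]
because the directed distance from a union is the maximum of the directed distances from its parts. The proof then reduces to showing that the last two terms are always at most $\delta$ whenever $\|\tau\|_\infty \le \delta$. Once this holds, the undirected distance is at most $\delta$ exactly when $\hddist(P+\tau,Q)\le\delta$.

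\emph{Second term.} For each $q\in Q$, the point $q+\tau\in Q+\tau$ has $q\in Q$ as a witness. Indeed, $\|(q+\tau)-q\|_\infty = \|\tau\|_\infty \le \delta$, since $\tau \in [-\delta,\delta]^d$. Hence $\hddist(Q+\tau,Q)\le\delta$.

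\emph{Third term.} For each $q\in Q$, the point $q+\tau$ lies in $Q+\tau \subseteq (P\cup Q)+\tau$. We choose it as the witness for $q$, and again the distance is $\|\tau\|_\infty\le\delta$. Hence $\hddist(Q,(P\cup Q)+\tau)\le\delta$.

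\emph{Combining.} If $\hddist(P+\tau,Q)\le\delta$, all three terms are at most $\delta$, so the undirected distance is at most $\delta$. Conversely, $\hddistun((P\cup Q)+\tau,Q)\ge \hddist((P\cup Q)+\tau,Q)\ge\hddist(P+\tau,Q)$, since removing points from the source set can only decrease the max over sources. So the undirected bound implies the directed one.

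There is no real obstacle here. The only point needing care is the decomposition of the directed distance of a union into a maximum over its parts, which is the trivial identity $\max_{p\in A\cup B} = \max\{\max_{p\in A}, \max_{p\in B}\}$. The hypothesis $T\subseteq[-\delta,\delta]^d$ is used exactly in the bound $\|\tau\|_\infty\le\delta$.
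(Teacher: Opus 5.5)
Your proof is correct and follows the same route as the paper. The paper also notes that $\|(q+\tau)-q\|_\infty\le\delta$ for $\tau\in T$, so every $q\in Q$ and every $q+\tau\in Q+\tau$ is matched within $\delta$, which leaves $\hddist(P+\tau,Q)\le\delta$ as the only real condition; your explicit split of $\hddistun((P\cup Q)+\tau,Q)$ into three directed terms just makes that informal argument precise.
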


We turn this observation into a reduction from the Translation Problem with Orthants.
Unfortunately, it introduces an extra $|Q|$ summand to the input size, which is undesirable in a lopsided setting.
However, as the set of translated points $P + T$ is restricted in a TPwO instance, we are able to omit the extra $|Q|$ points in the resulting instance by introducing a few, strategically placed points. 

\begin{figure}
    \centering
    \includegraphics[width=0.5\linewidth]{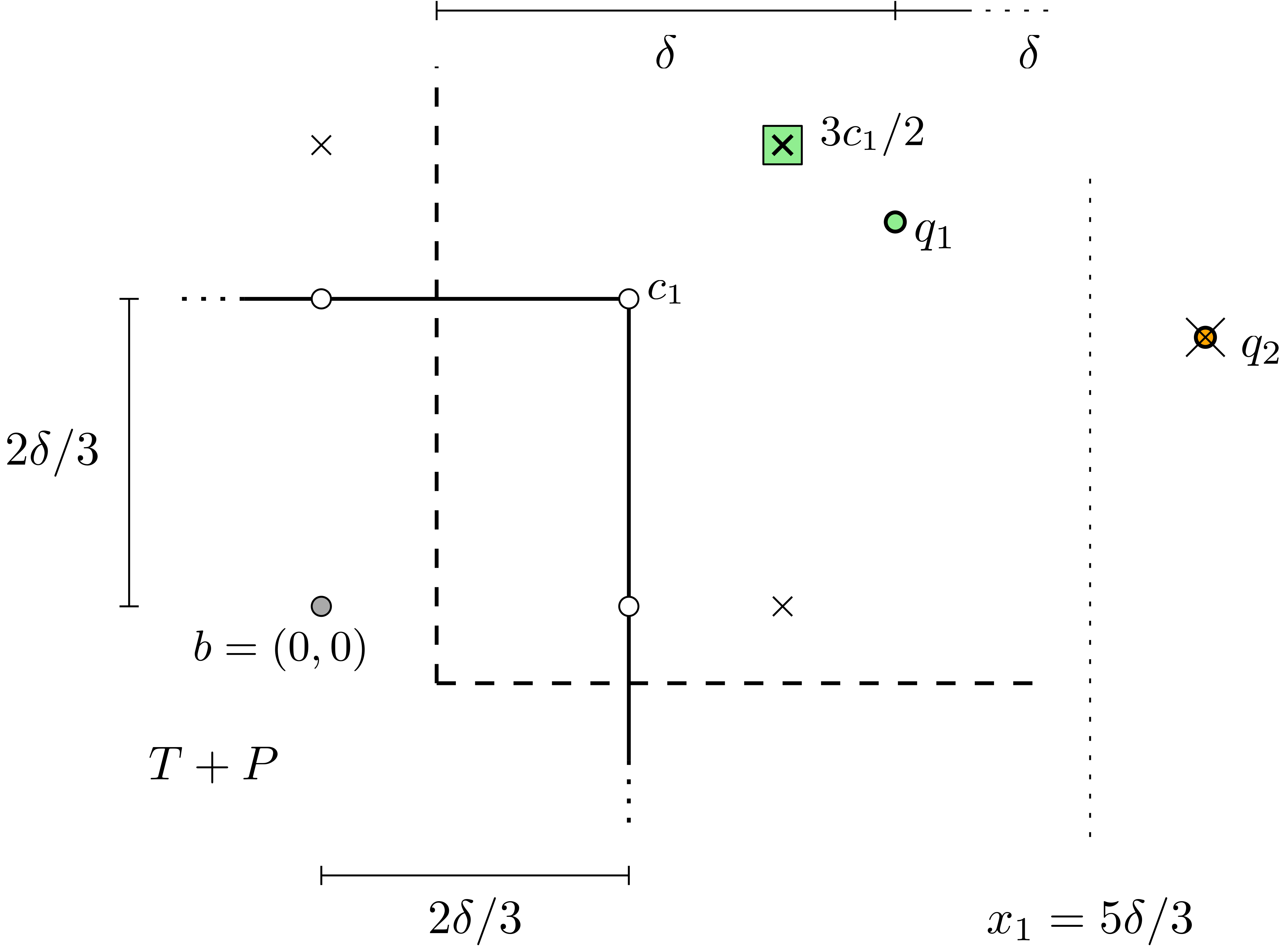}
    \caption{An example of the reduction in \Cref{thm:reduction-orthants-to-undirected} in $2$-D. 
    The region $T + P$, centered around a relative origin $b$, is bounded by $[-2\delta/3, 2\delta/3]^2$. 
    The points $q_1, q_2$ are elements of $Q$, while the white points, especially $c_1$, are elements of the constructed set $C$.
    The point $q_2 \in Q$ will be removed since $(q_2 + [-\delta, \delta]^2) \cap [-2\delta/3, 2\delta/3]^2 = \emptyset$. The closest point to $q_1$ is $c_1$, so $3/2~c_1$ is added to the resulting point set $P'$. }
    \label{fig:directed-to-undirected}
\end{figure}

\begin{theorem}
    \label{thm:reduction-orthants-to-undirected}
    If we can decide the undirected Hausdorff under Translation problem in $d$-D with point sets of size $n, m$ in time $T(n, m)$, we can decide the Translation Problem with Orthants in $d$-D with $n$ points and $m$ orthants in time $T(n, m) + O(n+m)$. 
\end{theorem}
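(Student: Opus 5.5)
We follow the paper's sketch: first turn the TPwO instance into a single directed instance whose feasible translations are confined to a small box, then use \Cref{obs:directed-to-undirected-base} to pass to the undirected problem. Instead of adding all of $Q$ to $P$, we add only a constant number of new points per relevant hypercube, as in \Cref{fig:directed-to-undirected}.

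\textbf{Step 1 (normalisation).} Apply the construction of \Cref{lem:tpwo-to-tpwc}. This lays the sub-instances $(P_i,\mathcal Q_i)$ side by side along the first axis, separated by $9\delta$. The two extra point/hypercube pairs cut the admissible translations down to a box $T_0$. Since $\delta>\delta_0$, we may rescale and recentre the coordinates so that, for every sub-instance with target-box centre $b_i$:
\begin{itemize}
\item $T_0 \subseteq [-\delta/3,\delta/3]^d$;
\item $P_i+T_0 \subseteq b_i+[-2\delta/3,2\delta/3]^d$.
\end{itemize}
Let $Q$ be the set of hypercube centres. Delete every $q\in Q$ whose cube $q+[-\delta,\delta]^d$ misses $b_i+[-2\delta/3,2\delta/3]^d$; such a cube can never be used by any admissible translation. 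This takes $O(n+m)$ time.

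\textbf{Step 2 (undirected instance).} The remaining hazard is that $\hddistun$ also requires every surviving $q$ to be within $\delta$ of some translated point. For each surviving $q$ in sub-instance $i$, let $c(q)$ be the point of $b_i+[-2\delta/3,2\delta/3]^d$ that is closest to $q$ in $L_\infty$; this is coordinate-wise clamping. Add a witness point $w(q)$ to $P'$, chosen as in the figure: scale $c(q)$ by $3/2$ relative to $b_i$, so that $w(q)$ lies on the boundary of $b_i+[-\delta,\delta]^d$ on the side facing $q$.

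Then check two facts:
\begin{itemize}
\item For every $\tau\in T_0$, $\|w(q)+\tau-q\|_\infty\le\delta$. Coordinate-wise, $q$ lies within $\delta$ of the clamped point, and the extra $\delta/3$ of outward shift compensates for $|\tau_j|\le\delta/3$.
\item $w(q)+\tau$ always lies within $\delta$ of $q$ itself, so each witness is covered in the direction $P'\to Q$.
\end{itemize}
To keep $|P'|=\Theta(n)$, do not add witnesses for individual $q$. Note that $w(q)$ depends only on which face or corner region of the box $q$'s clamp lands in. The witnesses can therefore be chosen from the corners and face points of a constant-resolution grid on the boundary of $b_i+[-\delta,\delta]^d$: $O(3^d)=O(1)$ points per sub-instance. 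They must also be chosen so that each one is itself within $\delta$ of some $q$, which may require keeping only those grid points that are actually needed. The two separator gadgets from \Cref{lem:tpwo-to-tpwc} also enforce $\tau\in T_0$ in the undirected instance, since their single points and cubes pair up symmetrically.

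\textbf{Step 3 (correctness).} By \Cref{obs:directed-to-undirected-base}, for $\tau\in T_0$ the undirected distance is at most $\delta$ exactly when every original point is covered. The added witnesses never help an original point, because they are not in $Q$, and the direction $Q\to P'$ is always satisfied. The main obstacle is Step 2: choosing the witness set so that it has constant size per sub-instance, covers every surviving $q$ for all $\tau\in T_0$, and still has every witness within $\delta$ of $Q$ under every admissible $\tau$. I would first try the clamp-and-scale-by-$3/2$ rule with the $1/3$ versus $2/3$ margins above, and verify the $L_\infty$ inequalities one coordinate at a time.
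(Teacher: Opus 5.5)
Your outer framework matches the paper's: separate the sub-instances, confine translations to $T_0$ with the two gadgets, and add witness points to $P'$ so that the direction $Q\to P'$ always holds. The separation, the gadgets, and the Step~3 logic are fine. The gap is in Step~2, which is the whole content of the theorem beyond \Cref{obs:directed-to-undirected-base}.

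\textbf{What goes wrong.} The only witness rule you define and verify is the per-centre clamp-and-scale point $w(q)$. Using it for every surviving $q$ adds up to $m$ points to $P'$, so you only get $T(n+m,m)$, which is no better than \Cref{obs:directed-to-undirected-base}. You justify collapsing these witnesses to $O(1)$ points by saying that $w(q)$ depends only on the face or corner region into which $q$ is clamped. That claim is false: in every unclamped coordinate, $w(q)$ moves with $q$. For example, take $d=2$ and $b_i=0$. The centres $q=(3\delta/2,0)$ and $q=(3\delta/2,\delta/2)$ both clamp onto the right face, but they give $w(q)=(\delta,0)$ and $w(q)=(\delta,3\delta/4)$. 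After that, you explicitly leave open the choice of the constant-size witness set, the assignment of centres to witnesses, and the verification (``I would first try\ldots''). So as written, the proposal does not prove the statement.

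\textbf{The missing idea: round instead of clamp.} For a surviving centre $q$, write $x=q-b_i\in[-5\delta/3,5\delta/3]^d$. Define $w(q)$ coordinate-wise by $w(q)_j-b_{i,j}=0$ if $|x_j|\le\delta/3$, and $w(q)_j-b_{i,j}=\delta\cdot\mathrm{sign}(x_j)$ otherwise. This is $b_i$ plus $\tfrac32$ times the coordinate-wise nearest point of $\{-2\delta/3,0,2\delta/3\}^d$, which is essentially the paper's set $C$.
\begin{itemize}
\item In every coordinate, $|w(q)_j-q_j|\le 2\delta/3$.
\item Hence $\|w(q)+\tau-q\|_\infty\le\delta$ for all $\tau\in T_0\subseteq[-\delta/3,\delta/3]^d$.
\item Let $C_i=\{w(q)\}$ over the surviving centres. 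This gives at most $3^d$ witnesses per sub-instance, and each witness is realised by some centre.
\item Therefore both directions of the undirected condition hold for every $\tau\in T_0$, and your Step~3 goes through.
\end{itemize}

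Your region idea can be repaired the same way. Send each clamp region to $b_i$ plus $\tfrac32$ times the centre of the corresponding face or corner, and send the central region to $b_i$. The per-coordinate gap is again at most $2\delta/3$.

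\textbf{Two further corrections.}
\begin{itemize}
\item The witness set is not just points on the boundary of $b_i+[-\delta,\delta]^d$. The point $b_i$ itself is needed for centres within $\delta/3$ of $b_i$. Alternatively, replace such a sub-instance by a trivially satisfied one, as the paper does; such a centre's cube already contains $P_i+T_0$.
\item In Step~1, rescaling cannot give $T_0\subseteq[-\delta/3,\delta/3]^d$ from $\delta>\delta_0$, because rescaling preserves $\delta/\delta_0$. You need $\delta\ge\tfrac32\delta_0$. You can get this by enlarging the cubes away from the target box, which is legitimate because inside the target box they act as orthants.
\end{itemize}
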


\begin{proof}
    Recall from \Cref{sec:preliminaries} that a TPwO instance consists of a set of TPwC instances as well as a global target box $B$, with maximum side length $\delta_0$, bounding the translated points.
    Given an instance of the Translation Problem with Orthants, we consider its TPwC sub-instances individually, i.e., we give a reduction from each sub-instance to a uHuT instance.
    Additionally, the target box $B$ implies a bounding box with side length less than $\delta_0$ for the set of possible translations $T$, we thus introduce two more uHuT instances of size $2$ that allow exactly the translations in $T$, analog to \Cref{lem:tpwo-to-tpwc}.
    Thus, we are left with a set of uHuT instances $(P_i, Q_i)_i$ and search for a translation feasible for all these instances. We promise that all instances have a bounding box that does not exceed $4\delta_0$. 
    Again like in \Cref{lem:tpwo-to-tpwc}, we translate each instance by a vector $(i9\delta_0, 0, ..., 0)$ so that using a feasible translation all points in $P_i$ are sufficiently close only to points in $Q_i$, and vice versa. 

    It remains to show that we can reduce each TPwC sub-instance to an uHuT instance. 
    For an individual TPwC instance, given the point set $P$, $|P| = n'$, and the set of hypercubes $\mathcal C$, $|\mathcal C| = m'$, of side length $\delta$. 
    Note that the target box $B$ implies that the translated point set $P + T$ and the set of feasible translations $T$ is restricted. 
    By the definition of TPwO, we may assume that $\delta \geq \delta_0$ such that $T \subseteq [-\delta/3, \delta/3]^d$ and $P + T \subseteq B \subseteq b + [-2\delta/3, 2\delta/3]^d$, where $b \in \R^d$ is some relative origin.
    Let $V(\mathcal C)$ be the vertices of the hypercubes inside the bounding box.
    We next strategically place points in the TPwC sub-instance to allow a reduction to uHuT.

    We first simplify the instance.
    We remove any hypercube with center $q \in Q$ for which $q \notin b + [-5\delta/3, 5\delta/3]^d$ because no point of $T + P$ can lie within the hypercube of $q$.
    If we have a center point $q \in b + [-\delta/3, \delta/3]^d$, we replace the instance with a single point $p$ and hypercube with center point $p$, a trivial $\YES$-instance to uHuT.

    We now construct the undirected Hausdorff under Translation instance.
    Let $C' = \{-2\delta/3, 0, 2\delta/3\}^d$ be the set of middle points and vertices of all faces of the bounding box of $P + T$.
    The set $C = \{b + 3c/2 \mid \exists q \in Q: c = \argmin_{c' \in C'} (|q - (c' + b)|_\infty) \}$ are the scaled points of $C'$ that are closest to a point in $Q$.
    The sets $C', C$ have size $O(3^d) = O(1)$ and can be trivially constructed in time $O(m)$. 
    For our undirected Hausdorff under Translation instance, we set $P' = P \cup C$ and $Q' = Q$ and the distance value $\delta$.

    For correctness, we show that for any translation $\tau \in T$, each point in $C$ is within distance $\delta$ to a point in $Q$ and vice versa.
    As a result, we get \[
        \hddist(P+\tau, Q) \leq \delta \iff \hddistun((P \cup C)+\tau, Q) \leq \delta,
    \]
    which shows equivalence of the TPwO instance, written here in its directed HuT formulation, to the constructed undirected HuT instance.
    
    Given a point $c \in C$. As $c = b + 3/2 \cdot \argmin_{c' \in C'} (|q - (c' + b)|_\infty)$ for some center $q \in Q$, we know that $q \in c + [-2\delta/3, 2\delta/3]^d$ since this is a superset of the space in $b + [-5\delta/3, 5\delta/3]^d$ that is closest to $c$ from all $C$.
    As $T \subseteq [-\delta/3, \delta/3]^d$, we have $q + T \in c + [-\delta, \delta]^d$, so $c$ lies in $\delta$-distance to $q$ under all translations $T$.

    As $C \subseteq b + \{-\delta, 0, \delta\}^d$ and $Q \subset b + [-5\delta/3, 5\delta/3]^d$ by our construction and reduction, each point in $q \in Q$ has at least one point $c \in C$ for which $q \in c + [-2\delta/3, 2\delta/3]^d$ holds. Again we have $q + T \in c + [-\delta, \delta]^d$, and $q$ lies in $\delta$-distance to $c$ under all translations $T$.
\end{proof}

This result already allows us to transfer the majority of lower bounds presented in this paper to the undirected Hausdorff under Translation Problem, as they use the Translation Problem with Orthants as an intermediate problem.
However, we can also generally reduce the directed version to the undirected setting by using \Cref{lem:box-to-cube}. Note that HuT has an equivalent formulation as a Translation Problem with Boxes because hypercubes are boxes, see \Cref{sec:preliminaries}. We can reduce it to the Translation Problem with Orthants at the expense of doubling the dimension. 

\begin{corollary}
    \label{cor:directed-to-undirected-reduction}
    If undirected continuous Hausdorff under Translation in $2d$-D can be solved in time $T_{2d}(n,m)$, then directed continuous Hausdorff under Translation in $d$-D can be solved in time $O(T_{2d}(n,m))$. 
    
    Likewise, if undirected discrete Hausdorff under Translation in $2d$-D can be solved in time $T_{2d}(n,m,t)$, then directed discrete Hausdorff under Translation in $d$-D can be solved in time $O(T_{2d}(n,m,t))$. 
\end{corollary}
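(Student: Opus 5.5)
The plan is a chain of reductions already available in the paper. Start with a directed HuT instance in $d$-D: point set $P$, $|P|=n$, point set $Q$, $|Q|=m$, threshold $\delta$, and in the discrete case a translation set $T$ with $|T|=t$. As noted in \Cref{sec:translation-problems}, this instance is equivalent to a Translation Problem with Hypercubes whose shapes are $q+[-\delta,\delta]^d$ for $q\in Q$. Every hypercube is a box, so this is in particular a Translation Problem with Boxes with $n$ points and $m$ boxes. Applying \Cref{lem:box-to-cube} yields a TPwO instance in $2d$-D with $\Theta(n)$ points and $\Theta(m)$ orthants. Then \Cref{thm:reduction-orthants-to-undirected} turns this into a single undirected HuT instance in $2d$-D with $\Theta(n)$ and $\Theta(m)$ points, at linear overhead. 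Chaining these gives the time bound $O(T_{2d}(n,m))+O(n+m)$. The additive term is absorbed because any algorithm must read its input, so $T_{2d}(n,m)=\Omega(n+m)$. The $2^d$ blow-up in sizes is a constant; I would assume $T_{2d}$ is well behaved under constant-factor scaling, as is implicit throughout the paper.

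Two details need checking before this goes through. First, \Cref{lem:box-to-cube} requires a target box, and the one it produces is the bounding box of $\mathcal B$. The TPwO definition demands that the orthant side length $\delta$ exceed the target box side $\delta_0$. I would make the hypercubes realizing the orthants sufficiently large, for example by capping each orthant at a length several times the diameter of the bounding box. This is legitimate because within the target box an orthant and a large enough hypercube are indistinguishable. Second, the step inside \Cref{thm:reduction-orthants-to-undirected} that replaces $P$ by $P\cup C$ adds only $O(3^{2d})=O(1)$ points per sub-instance. There are $2^d+2$ sub-instances, so the sizes stay $\Theta(n)$ and $\Theta(m)$.

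For the discrete statement, the continuous solution space has to be replaced by the given set $T$. Under \Cref{lem:box-to-cube}, a point translation $\tau$ corresponds to $f(\tau)=(\tau_1,\tau_1,\dots,\tau_d,\tau_d)$. I would therefore take $T'=f(T)$, which has $t$ elements. Restricted to $f(\mathbb{R}^d)$, the equivalence in \Cref{eq:box-orthants} holds for every $i$ directly. This makes the discrete case easier than the continuous one, since the ``projection to the diagonal'' argument is no longer needed. The layout step of \Cref{lem:tpwo-to-tpwc} and \Cref{thm:reduction-orthants-to-undirected} shifts sub-instances by multiples of $9\delta_0 e_1$. This moves points, not translations, so $T'$ remains the admissible set.

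The main obstacle is that \Cref{thm:reduction-orthants-to-undirected} assumes every admissible translation lies in a small cube $[-\delta/3,\delta/3]^{2d}$ around a reference point. This is what makes the auxiliary points $C$ harmless under all admissible translations. In the discrete case $T$ is not bounded a priori, so I would handle it in two ways. First, shift all coordinates so that the relevant translations are centred. Second, use that every translation which could possibly be feasible maps $P$ into the bounding box of the boxes, and discard elements of $T'$ outside this range in $O(t)$ time. After these steps, the auxiliary two-point gadget instances enforce the same range as in the continuous case. The two-point gadgets must also remain valid for the discrete set $T'$; this holds because they only cut out a box containing all surviving translations. Integrality is preserved by scaling all coordinates by a constant such as $6$, so that the thirds and factors $3/2$ used in the construction become integers.
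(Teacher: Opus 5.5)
Your proposal is correct and follows the same route as the paper. It views directed HuT as a Translation Problem with Boxes, applies \Cref{lem:box-to-cube} to obtain a TPwO instance in $2d$-D, and then invokes \Cref{thm:reduction-orthants-to-undirected}. Your discrete-case handling fills in details the paper leaves implicit, namely taking $T'=f(T)$, discarding translations that cannot map $P$ into the bounding box, recentering, and rescaling for integrality; this handling is sound.
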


For a better overview which of our lower bound results apply (in which number of dimensions) for the undirected setting, we here list the most important results, all following from either \Cref{thm:reduction-orthants-to-undirected} or \Cref{cor:directed-to-undirected-reduction}.
As the roles of $P$ and $Q$ are interchangeable in the undirected setting, we can omit restrictions of the form $n \geq m$ for all lower bounds. 

We start with the lower bounds from \Cref{sec:apx-higher-dimension-overall}.
While for \Cref{thm:lopsided-directed-LB,thm:pcd-reduction-3D} we explicitly construct a TPwO instance, the lower bound of Chan \cite{Chan23} also reduces to the Translation Problem with Orthants, we refer the reader to the definition of TPwO in \Cref{sec:translation-problems} as well as \cite[Problem 4, Lemma 3]{Chan23}.

\begin{corollary}
    \label{cor:lower-bounds-undirected}
    Let $d \geq 2, \varepsilon > 0$, and $\lambda \in (0, 1]$. For $\min(n,m) \in \Theta(\max(n,m)^\lambda)$, we get the following lower bounds for undirected Hausdorff under Translation in $d$-D.
    \begin{enumerate}
        \item 
            There is no $(nm)^{d/2 - \varepsilon}$ combinatorial algorithm under the $k$-clique hypothesis. 
        \item 
            There is no $(nm)^{\frac{d\omega}{6} - \varepsilon}$ (non-combinatorial) algorithm under the $k$-clique hypothesis.
        \item 
            There is no $(nm)^{\lfloor \frac{d}{2} \rfloor \frac{\lambda + 3}{3 \lambda + 3} - \varepsilon}$ (non-combinatorial) algorithm under the $k$-hyperclique hypothesis.
    \end{enumerate}
    Moreover, for $d=3, \lambda \in (0, 1/2]$ and $\min(n,m) \in \Theta(\max(n,m)^{\lambda})$, there is no $O((nm)^{3/2 - \varepsilon})$ (non-combinatorial) algorithm for undirected Hausdorff under Translation under the $k$-hyperclique hypothesis.
\end{corollary}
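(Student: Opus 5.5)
The plan is to route every directed lower bound through the Translation Problem with Orthants and then apply \Cref{thm:reduction-orthants-to-undirected}, which turns a TPwO instance with $n$ points and $m$ orthants into an undirected HuT instance of the same dimension with point sets of sizes $n$ and $m$, up to an additive $O(n+m)$ overhead. Since the undirected distance is symmetric, $\hddistun(P+\tau,Q)=\hddistun(Q-\tau,P)$, so we may swap the roles of $P$ and $Q$ freely. Hence every bound for a ratio $m=\Theta(n^\lambda)$ also gives the bound for $n=\Theta(m^\lambda)$, and in both cases the hypothesis reads $\min(n,m)\in\Theta(\max(n,m)^\lambda)$. It then suffices to check, item by item, that the source reduction really ends in a TPwO instance with the claimed parameters.

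\textbf{Items 1 and 2.} Chan's reduction from $k$-clique \cite[Problem 4, Lemma 3]{Chan23} goes through the Translated Orthant Problem. By \Cref{lem:translated-orthant-reduction} this becomes a TPwO instance in $d$-D with $\Theta(n)$ points and $\Theta(m)$ orthants. For $m\le n$ it yields the combinatorial $(nm)^{d/2-o(1)}$ bound and the non-combinatorial $(nm)^{d\omega/6-o(1)}$ bound. Composing with \Cref{thm:reduction-orthants-to-undirected} and then using symmetry covers both orientations.

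For $d=2$ I would cite \cite{BringmannN21}, or note that Chan's construction also works there. I must check that Chan's regime covers every $\lambda\in(0,1]$. If it covers only certain ratios, I would pad the smaller side with dummy points/orthants placed far away in separate TPwC sub-instances. This changes sizes by constant factors in the needed direction, so the ratio can be adjusted.

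\textbf{Item 3.} The proof of \Cref{thm:lopsided-directed-LB} ends with \Cref{lem:box-to-cube}, which outputs a TPwO instance in $d$-D with $\Theta(n)$ points and $\Theta(m)$ orthants. Applying \Cref{thm:reduction-orthants-to-undirected} instead of \Cref{lem:tpwo-to-tpwc} preserves the exponent computation verbatim. For the final 3-D claim, \Cref{thm:pcd-reduction-3D} produces a Translated Orthant instance, and \Cref{lem:translated-orthant-reduction} turns it into a TPwO instance. The parameter bookkeeping of \Cref{cor:noncomb-LB-3D} then carries over unchanged.

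The main obstacle is a hypothesis of \Cref{thm:reduction-orthants-to-undirected}. It needs each TPwO instance to satisfy the TPwO promise: a target box with side length less than $\delta_0$, orthant side length $\delta\ge\delta_0$, and translations confined so that $T\subseteq[-\delta/3,\delta/3]^d$. So I must verify that Chan's instances and the instances produced by \Cref{lem:box-to-cube} can be rescaled to meet it. I would handle this by choosing $\delta$ large relative to the hypercube containing the encoding, which is allowed because the orthants are unbounded within the target box. The additive $O(n+m)$ overhead is negligible against superlinear bounds.
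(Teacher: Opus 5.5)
Your proposal is correct and follows the paper's own argument: each source bound (Chan's $k$-clique reduction via \Cref{lem:translated-orthant-reduction}, \Cref{thm:lopsided-directed-LB} via \Cref{lem:box-to-cube}, and \Cref{thm:pcd-reduction-3D}) already ends in a TPwO instance. \Cref{thm:reduction-orthants-to-undirected} carries each of these to the undirected setting with the same parameters, and the symmetry of $\hddistun$ removes the $n\ge m$ versus $m\ge n$ restriction. Two asides: your padding fallback would not preserve the exponent, since polynomial padding inflates $nm$, but it is unnecessary because Chan's bound already covers all $m\le n$; and for $d=2$ you should rely on Chan's construction rather than \cite{BringmannN21}, whose bound is under OVH and not the $k$-clique hypothesis.
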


Moreover, we can transfer lower bounds from \Cref{sec:apx-additive-problems}. 
As the reduction from the class of $\FOPZ(\forall\exists\exists)$ to the discrete Hausdorff under Translation variant (\Cref{thm:fopz-aee-hardness}) reduces to the Translation Problem with Orthants (see Eq.~\ref{eq:ortant-translation}), we get a directed reduction to the undirected variant. 
For the lower bounds of \Cref{sec:lower-bounds-1D}, this is not the case so we have to double the dimension.

\begin{corollary}
    \label{cor:fopz-aee-hardness-undirected}
    Let $d\leq 3$. If $2d$-dimensional undirected discrete HuT can be solved in time $T_{2d}(n,m,k)$, then we can solve all $d$-dimensional inequality formulas of the class $\mathsf{FOP}_\mathbb{Z}(\forall \exists \exists)$ with inputs $A,B,C$ in time $O(T_{2d}(|A|,|B|,|C| \log |C|) + |C| \log |C| + \max(|A|,|B|)).$ 
\end{corollary}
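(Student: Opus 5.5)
The plan is to reuse the reduction chain of \Cref{thm:fopz-aee-hardness} and observe that its output is already a Translation Problem with Orthants in $2d$ dimensions; \Cref{thm:reduction-orthants-to-undirected} then turns this into a single undirected discrete instance without changing the parameters. Concretely, I would first repeat the argument of \Cref{thm:fopz-aee-hardness} up to Eq.~\eqref{eq:ortant-translation}. Starting from a $d$-dimensional inequality formula $\forall a\exists b\exists c:\varphi$, this builds the sets $A',B',C'$. It replaces the co-clause orthants by congruent cubes, decomposes the complement of their union into $O(|C|\log|C|)$ disjoint boxes $\mathcal R$ using \cite{agarwal_decomposing_2024}, and arrives at the complemented formula $\exists a'\in A'\,\forall b'\in B'\,\exists R\in\mathcal R$. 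Each box constraint $r_i^-\le a'_i+b'_i\le r_i^+$ is then split into the two one-sided constraints $a'_i+b'_i\le r_i^+$ and $-a'_i-b'_i\le -r_i^-$. This is exactly the coordinate doubling of \Cref{lem:box-to-cube}: every box $R$ becomes an orthant in $\R^{2d}$, every $b'$ becomes the point $(b',-b')$, and every $a'$ becomes the candidate translation $(a',-a')$.

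The second step is to check that this instance genuinely has the form required by TPwO, namely a bounded target box together with orthants that can be capped to congruent hypercubes of side $\delta>\delta_0$. All coordinates are polynomially bounded integers, so all relevant sums $a'+b'$ lie in a bounded box. I would take that box as the target box $B$ and choose the cube side length $\delta$ larger than its diameter; this is the same capping argument the paper already uses when converting orthants into cubes.

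The one real difference from the continuous case is that translations must range only over the finite set $T=\{(a',-a')\}$. \Cref{thm:reduction-orthants-to-undirected} is stated for continuous translations, but its correctness argument is pointwise in $\tau$. It only uses $T\subseteq[-\delta/3,\delta/3]^d$ after recentering, and then for each $\tau$ it shows that the auxiliary points $C$ and $Q$ are mutually within distance $\delta$. The same construction therefore works verbatim for a given finite $T$, and the two extra gadget instances that enforce the bounding box can be dropped, because the translations are listed explicitly. Recentering means shifting $T$ so that it sits inside a small cube around the origin, which is possible after scaling $\delta$ up. This yields an undirected discrete HuT instance in $2d$ dimensions with $|T|=|A|$, $|P|=\Theta(|B|)$ and $|Q|=\Theta(|C|\log|C|)$, plus $O(1)$ extra points. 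The YES/NO answer is negated relative to the formula, as in \Cref{thm:fopz-aee-hardness}.

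The main obstacle I expect is this bookkeeping. One has to verify that the TPwO sub-instances produced by the \cite{gokaj_completeness_2025} reduction fit inside the required $2\delta/3$ window after scaling, and that removing far-away cubes and adding the $O(3^{2d})$ points of $C$ does not interfere with the separate sub-instances. The latter is handled by the $9\delta_0$ shift. The running time is then $O(T_{2d}(|A|,|B|,|C|\log|C|)+|C|\log|C|+\max(|A|,|B|))$, as claimed.
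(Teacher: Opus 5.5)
Your proposal is correct and follows essentially the paper's argument: stop the reduction of \Cref{thm:fopz-aee-hardness} at Eq.~\eqref{eq:ortant-translation}, observe that this is a (discrete) Translation Problem with Orthants, and apply the parameter-preserving gadget of \Cref{thm:reduction-orthants-to-undirected}. Your extra remarks fill in details the paper leaves implicit: the gadget's correctness is pointwise in $\tau$, so it holds for an explicit finite translation set, and with points $(b',-b')$ and translations $(a',-a')$ a single orthant family suffices rather than the $2^d$ mirrored copies of \Cref{lem:box-to-cube}, since discrete translations never leave that subspace.
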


\begin{corollary}
    \label{cor:maxconvLB-LB-undirected}
    If 2-dimensional undirected discrete HuT can be solved in time $T_2(n)$, then the MaxConv LowerBound problem can be solved in time $O(T_2(n) + n)$.
\end{corollary}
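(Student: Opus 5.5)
The plan is to chain the $1$-dimensional hardness of discrete directed HuT (\Cref{thm:maxconv-LB}) with the dimension-doubling reduction from directed to undirected HuT (\Cref{cor:directed-to-undirected-reduction}), instantiated with $d=1$.

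\textbf{Step 1: MaxConv LowerBound to $1$-D discrete directed HuT.} Given a MaxConv LowerBound instance $A,B,C$ of length $n$, \Cref{thm:maxconv-LB} produces in $O(n)$ time a $1$-dimensional discrete directed HuT instance $(P,Q,T,\delta)$ with $|P|=n$, $|Q|=2n+1$ and $|T|=n$. This instance is balanced, so all its sizes are $\Theta(n)$.

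\textbf{Step 2: view it as a Translation Problem with Boxes.} Replace each $q\in Q$ by the interval $[q-\delta,q+\delta]$, which is a $1$-D box. This turns the instance into a Translation Problem with Boxes whose admissible translation set is the finite set $T$.

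\textbf{Step 3: apply \Cref{lem:box-to-cube} in the discrete setting.} The lemma maps the $1$-D box instance to a $2$-D Translation Problem with Orthants with $2^1\cdot\Theta(n)$ points and orthants. The translation set is mapped via $T\mapsto f(T)=\{(\tau,\tau)\mid \tau\in T\}$, which lies on the diagonal $f(\mathbb{R})$. In the discrete setting, soundness is immediate, with no need for the closest-point argument: every admissible translation is already of the form $f(\tau^*)$, so \Cref{eq:box-orthants} applies directly. Integrality of the coordinates is preserved.

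\textbf{Step 4: TPwO to $2$-D undirected discrete HuT.} Apply \Cref{thm:reduction-orthants-to-undirected}. It adds only $O(1)$ auxiliary points per sub-instance and separates the sub-instances by large offsets, so the output is a $2$-D undirected discrete HuT instance with $\Theta(n)$ points and $\Theta(n)$ translations. We then run the assumed $T_2(n)$-time algorithm on it. The total time is $O(T_2(n)+n)$, where the $O(n)$ term covers all construction steps.

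\textbf{Main obstacle.} The delicate step is checking that \Cref{thm:reduction-orthants-to-undirected}, which is written for continuous translations, carries over to a finite translation set $T$. That proof needs $T\subseteq[-\delta/3,\delta/3]^2$ after centering, and it needs the large-$\delta$ orthant encoding of the target box, obtained as in \Cref{lem:tpwo-to-tpwc}, to keep the coordinates integral. I would address both points as follows:
\begin{itemize}
\item First restrict to translations for which $P+\tau$ lies in the bounding box, which the target box of the TPwO instance gives us.
\item Then scale all coordinates by a constant factor so that the needed slack exists and everything stays integral.
\item Finally, check the key claim of the proof: every auxiliary point in $C$ lies within distance $\delta$ of some point of $Q$, and vice versa, for \emph{every} $\tau$ in the restricted set. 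Restricting $T$ to a finite subset only makes this easier.
\end{itemize}
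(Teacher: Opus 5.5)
Your proposal is correct and is essentially the paper's argument: the corollary is \Cref{thm:maxconv-LB} composed with the discrete version of \Cref{cor:directed-to-undirected-reduction} for $d=1$, which in turn is \Cref{lem:box-to-cube} followed by \Cref{thm:reduction-orthants-to-undirected}. Your discrete-setting details (the translations already lie on the diagonal, and filtering $T$ by the target box replaces the box gadget) fill in what the paper leaves implicit; one correction is that the condition $T-\tau_0\subseteq[-\delta/3,\delta/3]^2$ comes from recentering $T$ and choosing the cube size $\delta\ge 3\delta_0$ when capping the orthants, since a uniform constant rescaling preserves the ratio $\delta/\delta_0$ and only takes care of integrality.
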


At last, we come back to the folklore reduction of \Cref{obs:undir-to-dir}, reducing an instance of undirected HuT to an instance of directed HuT.
Unfortunately for lopsided input sets, this reduction introduces additional points to both input sets.
By the here proven lower bounds for undirected Hausdorff under Translation, we can actually show that a reduction to a single instance of HuT maintaining the input sizes is impossible to achieve.
For a small constant $\lambda \in (0, 1/9]$ with $|P| \in O(|Q|^\lambda)$, \Cref{cor:lower-bounds-undirected} gives a $(|P||Q|)^{d/2 - o(1)}$ lower bound for uHuT. 
However, \Cref{thm:lopsided-algo-3D} gives for the same sizes of $P, Q$ a faster than $\tOh(|P||Q|)^{3/2}$ algorithm in $3$-D for directed HuT.
Thus, a reduction from the undirected setting to a single instance of the directed setting maintaining the input sizes would violate our proven lower bounds.

\section{Open Questions}

\begin{itemize}
    \item For the balanced case, i.e., $m = \Theta(n)$, our (non-combinatorial) lower bounds do not match the combinatorial bound of $(nm)^{d/2}$.
    On the $n\geq m$ side, we were not able to extend our matching bounds (\Cref{thm:pcd-reduction-3D}) to $n^{1/2} < m \leq n$.
    On the $m \geq n$ side, the generalization in \Cref{thm:lopsided-directed-LB} is lossy so we could not give matching bounds here either.
    For finding a non-combinatorial tight lower bound for $n=\Theta(m)$, one must take into account that any $(nm)^{d/2-o(1)}$ lower bound cannot apply for $m \gg n$ (\Cref{cor:d/2-floor-algo}).
    It seems reasonable to us that a two sided approach, having a lower bound for $n \geq m$ and $m \geq n$ separately, may yield the required properties.
    Looking at the necessary approach, we ask: Can we extend our approach of \Cref{thm:pcd-reduction-3D}, can we find a lossless generalization of \Cref{thm:lopsided-directed-LB}, or do we need to develop a completely different tool?

    \item For $d=3$, there exists a combinatorial lower bound of $(nm)^{d/2-o(1)}$ for $n\geq m$ \cite{Chan23} and a $(nm)^{\lfloor d/2 \rfloor + \varepsilon}$ algorithm for $m \gg n$ and $\varepsilon > 0$ (\Cref{cor:d/2-floor-algo}).  
    As raised at the end of \Cref{sec:apx-other-side}, we ask where the cutoff point lies, i.e., for which values of $n,m$ do we start to get a faster than $(nm)^{d/2 - o(1)}$ algorithms? 
    
    \item We reduce Discrete Hausdorff under Translation to All-Ints 3SUM and 3SUM for inequality dimension $d\leq 3$ (\Cref{thm:fopz-aee-hardness,cor:3sum-easiness}), partially answering an open question of \cite{gokaj_completeness_2025}. 
    However, we fail to do so for larger dimension. 
    Our current techniques rely on the decomposition (the complement of) a union of $n$ cubes into $\tOh(n)$ disjoint boxes, which is not possible for $d \geq 4$, so the approach seems to be inherently limited to $d \leq 3$ dimensions. 
\end{itemize}

%%%%%%%%%%%%%%%%%%%%%%%%%%%%%%%%%%%%%

\bibliographystyle{plainurl}
\bibliography{hausdorff-ref}

\appendix
\section{Relevant Problem Definitions and Hardness Hypotheses}
\label{sec:definitions}

\subsection{Problems}
\begin{definition}[$L_\infty$-Necklace Alignment]
    Given arrays $A$,$B$ of size $n$ with entries in $[0,1).$ 
    Compute
    \[ \min_{c \in [0,1), s \in [n]}\max_{i=1}^{n} \{ d((A[i]+c) \bmod 1, B[i+s \bmod n]) \},\]
    where $d(a,b)$ denotes the minimal between the clockwise and counter-clockwise  distances along the unit-circle perimeter, i.e  for $a,b \in [0,1($ the value  $d(a,b)$ is defined as
    $d(a,b):=\min \left(|a-b|, 1-|a-b| \right).$
\end{definition}

\begin{definition}[Linear Alignment problem]
    Given sorted vectors $A,B$ of real numbers of size $n$ and $m$ respectively, where $m \geq n.$ 
    Compute an integer $s< m-n$ and a real-valued $c \in [0,1)$, which minimize the value 
    \[ \max_{i\in[n]} \left(\left | \left(A[i] +c \right) - B[i+s]  \right | \right).\]
\end{definition}

\begin{definition}[MaxConv LowerBound]\label{def:max_convlb}
    Given integer arrays $A,B,C$ of length $n$. Determine whether 
    $C[k] \leq \max_{i+j=k} (A[i]+B[j])$ holds.
\end{definition}

\begin{definition}[3-SUM]
    Given sets $A,B,C$ of integers of size at most $n$. Do there exist $a \in A, b \in B, c \in C$ such that $a+b+c=0.$
\end{definition}

\begin{definition}[All-Ints $3$-SUM]
    Given sets $A,B,C$ of integers of size at most $n$. Determine for each $a \in A$, whether there exist $b \in B, c \in C$ such that $a+b+c=0$.
\end{definition}
\begin{definition}[All-Ints $3$-SUM $(n,m,k)$]
    Given sets $A,B,C$ of size $n,m,k$ respectively. Determine for each $a \in A$, whether there exist $b \in B, c \in C$ such that $a+b+c=0$.
\end{definition}

\begin{definition}[$\mathsf{FOP}_{\mathbb{Z}}(\exists \forall \exists)$]
    The class of problems $\mathsf{FOP}_{\mathbb{Z}} (\exists \forall \exists)$ consists of all problems whose inputs are sets $A_1 \subseteq \mathbb{Z}^{d_1}, A_2 \subseteq \mathbb{Z}^{d_2}, A_3 \subseteq \mathbb{Z}^{d_3}$, and free variables $t_1 \dots t_\ell \in \mathbb{Z}$, and the task is to decide whether $\exists a_1 \in A_1 \forall a_2 \in A_2 \exists a_3 \in A_3$ such that \[\varphi(a_1[1], \dots, a_1[d_1], \dots, a_k[1], \dots, a_k[d_k], t_1 \dots, t_\ell)\] holds,
    where $\varphi$ is a fixed, but arbitrary linear arithmetic formula over the vector entries of $a_1, a_2, a_3$ and the free variables.  The dimensions $d_1,\dots, d_k$ are constant.
\end{definition}

\begin{definition}[$\mathsf{FOP}_{\mathbb{Z}}(\forall \exists \exists)$]
    The class of problems $\mathsf{FOP}_{\mathbb{Z}} (\forall \exists \exists)$ consists of all problems whose inputs are sets $A_1 \subseteq \mathbb{Z}^{d_1}, A_2 \subseteq \mathbb{Z}^{d_2}, A_3 \subseteq \mathbb{Z}^{d_3}$, and free variables $t_1 \dots t_\ell \in \mathbb{Z}$, and the task is to decide whether $\forall a_1 \in A_1 \exists a_2 \in A_2 \exists a_3 \in A_3$ such that \[\varphi(a_1[1], \dots, a_1[d_1], \dots, a_k[1], \dots, a_k[d_k], t_1 \dots, t_\ell)\] holds,
    where $\varphi$ is a fixed, but arbitrary linear arithmetic formula over the vector entries of $a_1, a_2, a_3$ and the free variables.  The dimensions $d_1,\dots, d_k$ are constant.
\end{definition}

\begin{definition}[Inequality Dimension of a Formula]
    Let $\phi = Q_1 x_1\in A_1, \dots, Q_k x_k\in A_k: \psi$ be a $\mathsf{FOP}_{\mathbb{Z}}$ formula with $A_i\subseteq \mathbb{Z}^{d_i}$.
    The \emph{inequality dimension} of $\phi$ is the smallest number $s$ such that there exists a Boolean function $\psi' :\{0,1\}^s \to \{0,1\}$ and (strict or non-strict)
    linear inequalities $L_1, \dots, L_s$ in the variables $\{x_i[j] : i\in \{1,\dots,k\} ,j\in \{1,\dots,d_i\} \}$ and the free variables such that $\psi(x_1,\dots, x_k)$ is equivalent to $\psi'(L_1,\dots, L_s)$.  
\end{definition}

\begin{definition}[The $k$-Clique Problem]
\label{def:k-clique}
    Given a graph, is there a $k$ element subset of vertices $C$ such that each pair of vertices in $C$ forms an edge?
\end{definition}

\begin{definition}[The Colorful $k$-Clique Problem]
\label{def:colorful-k-clique}
    Given a $k$-partite graph with vertex set $V = V_1 \dot\cup ..., \dot\cup V_k$, is there a $k$-tuple $C \in V_1 \times ... \times V_k$ such that each pair of vertices in $C$ forms an edge?
\end{definition}

\begin{definition}[The $k$-Hyperclique Problem]
\label{def:k-hyperclique}
    Given a $3$-uniform hypergraph, is there a $k$ subset of vertices $C$ such that each $3$-subset of $C$ forms a hyper-edge?
\end{definition}
    
\begin{definition}[The Colorful $k$-Hyperclique Problem]
\label{def:colorful-k-hyperclique}
    Given a $3$-uniform $k$-partite hypergraph with vertex set $V = V_1 \dot\cup ..., \dot\cup V_k$, is there a $k$-tuple $C \in V_1 \times ... \times V_k$ such that each $3$-subset of $C$ forms a hyper-edge?
\end{definition}

\subsection{Hardness Hypotheses}

We use the following hardness assumptions from fine-grained complexity theory (see~\cite{williams2018some} for a general overview): (1) the $k$-clique hypothesis, see~\cite{abboud_if_2018} and references thereof; (2) the 3-uniform hyperclique hypothesis, see~\cite{lincoln_tight_2018} and references thereof and (3) the 3SUM Hypothesis, see~\cite{gajentaan_class_1995} and references thereof.

\begin{definition}[The (Combinatorial) $k$-Clique Hypothesis]
\label{hyp:k-clique}
    Let $\varepsilon >0$.
    For $k \geq 3$ the problem of finding a $k$-clique (\Cref{def:k-clique}) in a graph of $n$ vertices cannot be solved by a combinatorial algorithm in $O(n^{k-\varepsilon})$ run-time.
    Equally, the problem of finding a colorful $k$-clique (see \Cref{def:colorful-k-clique}) in a graph of $n$ vertices cannot be solved by a combinatorial algorithm in $O(n^{k-\varepsilon})$ run-time.
    For non-combinatorial algorithms, it is not possible to find a (colorful) $k$-clique in a graph of $n$ vertices in time $O(n^{\omega k/3 - \varepsilon})$. 
\end{definition}

\begin{definition}[The $k$-Hyperclique Hypothesis]
\label{hyp:k-hyperclique}
    Let $\varepsilon >0$.
    For $k > 3$ the problem of finding a $k$-hyperclique in a $3$-uniform graph (\Cref{def:k-hyperclique}) of $n$ vertices cannot be solved in $O(n^{k-\varepsilon})$ run-time.
    Equally, the problem of finding a colorful $k$-hyperclique in a $3$-uniform graph (see \Cref{def:colorful-k-hyperclique}) of $n$ vertices cannot be solved in $O(n^{k-\varepsilon})$ run-time.
\end{definition}

\begin{definition}[The 3-SUM Hypothesis]
\label{hyp:3sum}
    Let $\epsilon>0$. There is no algorithm solving $3$-SUM in time $O(n^{2-\epsilon})$.
\end{definition}
We work with the following non-standard hypothesis, which has also been used in \cite{gokaj_completeness_2025}, concerning the MaxConv LowerBound problem; see also Definition \ref{def:max_convlb}.
\begin{definition}[MaxConv LowerBound hypothesis]
    Let $\epsilon>0$. There is no algorithm solving MaxConv LowerBound in time $O(n^{2-\epsilon})$.
\end{definition}
\section{Baseline algorithms}
\label{sec:baseline}
We shortly describe some algorithms for Hausdorff distance under translation.

\paragraph*{Almost-Linear-Time Algorithm for Discrete HuT in $1$-D}
\label{sec:discHuT-1D-linear}
The algorithm by Rote \cite{rote1991ComputingMinimumHausdorff} for the continuous case constructs the Hausdorff distance under Translation of single points $p \in P$ to the full set $Q$, which is a piecewise linear function of the used translation. The Hausdorff distance under Translation of the full set $P$ then is the envelope function of its parts. 
He shows that the full envelope function only has $O(n+m)$ many extremal points and gives an $\tilde O(n+m)$ algorithm to construct it.
When given a sorted set of candidate translations $T$, as in the discrete Hausdorff under translation variant, we can linearly scan the envelope function in $O(n+m)$ for the positions of the translations, and note their respective Hausdorff distance. 

\paragraph*{Algorithms for Discrete HuT}
\label{sec:discHuT-baseline-algorithms}
As before, we denote $t = |T|, n = |P|, m = |Q|$ for $T, P, Q$ being the input for discrete Hausdorff under Translation. Folklore claims an algorithm of runtime $O\left(t n \log^d m) +  m\log^d(m) \right).$

The idea is to construct range trees to store the set $\{q + [-\delta, \delta]^d \mid q \in Q \}$. Afterwards, we query for all combinations of $\tau \in T, p \in P$ whether there exists an element at position $\tau + p$. This gives a runtime of $O((tn + m)\log^d m)$.

For the undirected variant, we run the algorithm a second time for switched values for $P, Q$, storing which translation satisfies both runs. The runtime for that is still $\tOh(tn + m)$.

Alternatively, any algorithm explicitly constructing the region $X$ of feasible translations can be used for the discrete setting as one can additionally check for each of the $t$ candidate translation whether it lies inside $X$. As the feasible region consists of axis-aligned boxes, one can store the constructed feasible regions within a range tree in $O(|X| \log^d |X|)$ time. The checks for a feasible candidate translation can be performed in $O(t \log^d |X|)$.

\paragraph*{Algorithms for Continuous HuT in $d \leq 2$}
\label{sec:HuT-baseline-algorithms}
The algorithm of \cite{chew1999GeometricPatternMatching} mainly builds upon the idea that for $\mathcal Q = \{q + [-\delta, \delta]^d \mid q \in Q \}$ the region of feasible translations is exactly $\bigcap_{p \in P} \bigcup_{\hat Q \in \mathcal Q} \hat Q - p$.
For each point $p \in P$, they decompose $\mathcal Q - p$ into $O(|\mathcal Q|)$ disjoint boxes, which is possible in $d \leq 2$.
Given these $nm$ boxes, we compute the maximum depth among these.
Note that a point of depth $n$, which is the maximum depth possible, is a feasible translation to Hausdorff distance under translation.
Computing the maximum depth is possible using a segment tree in time $O(nm \log nm)$. 

For the undirected variant, we compute all regions of depth $n$. Afterwards, we switch the roles of $P, Q$ and use the same approach again, but we keep the resulting regions of maximum depth of the first iteration and only look for a region of depth $n+1$.

\end{document}